\theoremstyle{plain}
\newtheorem{lemma}[thm]{Lemma}
\newcommand{\judgement}[3]{\ensuremath{#1\ \vdash_{\Sigma} #2\!:\! #3}}
\newcommand{\judgements}[4]{#1\ \vdash_{#4} #2:\, #3}
\newcommand{\tf}[3]{#1:\, #2\,\rightarrow\, #3}
\newcommand{\st}[2]{\mathsf{L}_{#2}(#1)}
\newcommand{\tv}[2]{#1\!:\!\:\!#2}
\newcommand{\restrict}[2]{#1|_{#2}}
\newcommand{\hdarg}{\mathit{hd}}
\newcommand{\tlarg}{\mathit{tl}}
\newcommand{\xarg}{\mathit{x}}
\newcommand{\yarg}{\mathit{y}}
\newcommand{\cop}{\ensuremath{\mathsf{copy}}}
\newcommand{\append}{\ensuremath{\mathsf{append}}}
\newcommand{\prgs}{\ensuremath{\mathsf{progression}}}
\newcommand{\pairs}{\ensuremath{\mathsf{pairs}}}
\newcommand{\cprod}{\ensuremath{\mathsf{cprod}}}
\newcommand{\sqdiff}{\ensuremath{\mathsf{sqdiff}}}
\newcommand{\ttt}{\ensuremath{\mathsf{True}}}
\newcommand{\f}{\ensuremath{\mathsf{f}}}
\newcommand{\sizeexprs}{\mathit{SizeExpr}}
\newcommand{\exprs}{\mathit{Expr}}
\newcommand{\basics}{\mathit{Basic}}
\newcommand{\types}{\mathit{Types}}
\newcommand{\ftypes}{\mathit{FTypes}}
\newcommand{\gtypes}{\mathit{GTypes}}
\newcommand{\typevars}{\mathit{TypeVar}}
\newcommand{\sizevars}{\mathit{SizeVar}}
\newcommand{\expvars}{\mathit{ExpVar}}
\newcommand{\func}[3]{\begin{array}{@{}l@{}}\mathsf{#1}\ #2 =\\ \quad \begin{array}[t]{@{}l@{}}#3\end{array}\end{array}}
\newcommand{\lettt}[3]{\begin{array}[t]{@{}l@{}}\mathsf{let}\ #1 = #2\\ \mathsf{in}\ \begin{array}[t]{@{}l@{}}#3\end{array}\end{array}}
\newcommand{\lete}[3]{\mathsf{let}\ #1 = #2 \ \mathsf{in}\ #3}
\newcommand{\letfune}[4]{\mathsf{letfun}\ #1(#2) = #3\ \mathsf{in}\ #4}
\newcommand{\letexterne}[3]{\mathsf{letextern}\ #1(#2) \ \mathsf{in}\ #3}
\newcommand{\letfun}[4]{\mathsf{letfun}\ #1(#2) = #3\\ \mathsf{in}\ #4}
\newcommand{\fapp}[2]{\mathsf{#1}(#2)}
\newcommand{\cons}[2]{\mathsf{cons}(#1, #2)}
\newcommand{\nil}{\mathsf{nil}}
\newcommand{\ifte}[3]{\mathsf{if}\ {#1}\ \mathsf{then}\  {#2}\  \mathsf{else}\  {#3}}
\newcommand{\matchshort}[5]{
\begin{array}[t]{@{}l@{\ }l@{\ }l@{}}
\mathsf{match}\ #1\ \mathsf{with} & \shortmid  & \nil \Rightarrow \begin{array}[t]{@{}l@{}l@{}}#4\end{array}\\
 &\shortmid & \cons{#2}{#3} \Rightarrow \begin{array}[t]{@{}l@{}l@{}}#5\end{array}\\
\end{array}}
\newcommand{\match}[5]{
\begin{array}[t]{@{}l@{\ }l@{\ }l@{}}
\mathsf{match}\ #1\ \mathsf{with} & |  & \nil \Rightarrow \begin{array}[t]{@{}l@{}l@{}}#4\end{array}\\
 &|& \cons{#2}{#3} \Rightarrow \begin{array}[t]{@{}l@{}l@{}}#5\end{array}\\
\end{array}}
\newcommand{\matchline}[5]{
\mathsf{match}\ #1\ \mathsf{with}\ |\ \nil \Rightarrow #4 \ |\
\cons{#2}{#3} \Rightarrow #5}
\newcommand{\typesvg}{\ensuremath{\tau^{\bullet}}}
\newcommand{\typesv}{\ensuremath{\tau}}
\newcommand{\typesvs}{\ensuremath{\tau^{\circ}}}
\newcommand{\typesf}{\ensuremath{\tau^{f}}}
\newcommand{\sz}{\ensuremath{\mathit{p}}}
\newcommand{\context}{\ensuremath{\Gamma{}}}
\newcommand{\sign}{\ensuremath{\Sigma{}}}
\newcommand{\intt}{\ensuremath{\mathtt{Int}}}
\newcommand{\FVS}[1]{\ensuremath{\mathit{FVS}({#1})}}
\newcommand{\closures}{\ensuremath{\mathcal{C}}}
\newcommand{\closExt}{\ensuremath{\mathcal{E}}}
\newcommand{\opsem}[5]{\ensuremath{{#1};\, {#2}\ \vdash \ {#3} \ \leadsto\, {#4};\, {#5}}}
\newcommand{\dom}[1]{\mathit{dom}(#1)}
\newcommand{\Val}{\mathit{Val}}
\newcommand{\Loc}{\mathit{Loc}}
\newcommand{\St}{\mathit{Store}}
\newcommand{\Hp}{\mathit{Heap}}
\newcommand{\nul}{\mathtt{NULL}}
\newcommand{\hd}{\mathtt{hd}}
\newcommand{\tl}{\mathtt{tl}}
\newcommand{\mt}[3]{\ensuremath{{#1}:\,{#2}  \longrightarrow {#3}}}
\newcommand{\R}{\ensuremath{\mathcal{R}}}
\newcommand{\Z}{\ensuremath{\mathcal{Z}}}
\newcommand{\Ps}{\ensuremath{\mathcal{P}}}
\newcommand{\Reg}[2]{\ensuremath{\R\ifthenelse{\equal{#1}{}}{}{({#1},\ {#2})}}}
\newcommand{\FV}[1]{\ensuremath{\mathit{FV}({#1})}}
\newcommand{\mdl}[4]{\ensuremath{{#1}\ \models^{#2}_{#3}\ {#4}}}
\newcommand{\rstrloc}[2]{\ensuremath{{#1}|_{\dom{#1}\setminus \{#2\}}}}
\newcommand{\soundval}[3]{\ensuremath{\mathit{Valid}_\mathsf{val}\ifthenelse{\equal{#1}{}}{}{(#1, #2, #3)}}}
\newcommand{\soundstore}[4]{\ensuremath{\mathit{Valid}_\mathsf{store}\ifthenelse{\equal{#1}{}}{}{(#1, #2, #3, #4)}}}
\newcommand{\val}{\ensuremath{\epsilon}}
\def\doi{5 (2:10) 2009}
\begin{document}

\title[Polynomial Size Analysis of First-Order Shapely
  Functions]{Polynomial Size Analysis \\ of First-Order Shapely
  Functions\rsuper *}

\author[O.~Shkaravska]{Olha Shkaravska\rsuper a}
\address{{\lsuper a}Institute for Computing and Information Sciences, Radboud University Nijmegen}
\email{shkarav@cs.ru.nl}
\thanks{{\lsuper a}This research is sponsored by the Netherlands Organisation for Scientific
Research (NWO), project Amortised Heap Space Usage Analysis (AHA),
grantnr. 612.063.511.}

\author[M.~v.~Eekelen]{Marko van Eekelen\rsuper b}
\address{{\lsuper b}Institute for Computing and Information Sciences, Radboud University Nijmegen \emph{and}
Faculty of Information Science, Open University of the Netherlands}
\email{marko@cs.ru.nl \emph{and} marko.vaneekelen@ou.nl}

\author[R.~v.~Kesteren]{Ron van Kesteren\rsuper c}
\address{{\lsuper c}Alten Nederland, Consulting and Engineering in Advanced Technology}
\email{ronvankesteren@gmail.com}

\keywords{Shapely Functions, Size Analysis, Type
Checking, Type Inference, Diophantine equations, Polynomial Interpolation}
\subjclass{ F.4.1[Mathematical logic and formal languages]: Mathematical logic -- Lambda calculus and related systems, Logic and constraint programming;
F.2.2 [Analysis of algorithms and problem complexity]: Non-numerical algorithms and problems;
D.1.1 [Programming techniques]: Applicative (functional) programming. General Terms: Algorithms, Verification.
}
\titlecomment{{\lsuper*}This paper is an extended version of the paper \cite{ShvKvE07b}, presented at the TLCA conference in 2007. The paper is extended by the full soundness proof and the full
presentation of the type-inference procedure from \cite{vKShvE07}. Moreover, in the presented version we consider rational size polynomials instead of integer ones.}

%%%%%%%%%%%%%%%%%%%%%%%%%%%%%%%%%%%%%%%%%%%%%%%%%%%%%%%%%%%

\begin{abstract}
We present a size-aware type system for first-order shapely function
definitions. Here, a function definition is called \emph{shapely} when the
size of the result is determined exactly by a polynomial in the sizes of
the arguments. Examples of shapely function definitions may be
implementations of matrix
multiplication and the Cartesian product of two lists.

The type system
is proved to be sound w.r.t. the operational semantics of the language.
The type checking problem is shown to be undecidable in general.
We define a natural syntactic restriction such that the
type checking becomes decidable, even though size polynomials are not
necessarily linear or monotonic.

Furthermore, we have shown that the type-inference problem is at least
semi-decidable (under this restriction). We have implemented a procedure
that combines run-time testing and type-checking to automatically obtain
size dependencies. It terminates on total typable function definitions.
\end{abstract}

\maketitle
\vfill

%%%%%%%%%%%%%%%%%%%%%%%%%%%%%%%%%%%%%%%%%%%%%%%%%%%%%%%%%%
\section{Introduction}
\label{Introduction}

We explore typing support for checking size dependencies for
\textit{shapely} first-order function definitions (functions for short).
The shapeliness of these
functions lies in the fact that the size of the result is a
polynomial in terms of the arguments' sizes.

\subsection{Variety of resource analysis techniques}

This research is a part of the Amortised Heap Space Usage Analysis (AHA) project
\cite{vEShvK07}.  Estimating heap consumption is an active research area as it becomes more and more
of an issue in many applications, including programming for small devices, e.g. smart cards, mobile phones,
embedded systems and distributed computing.

Amortization is a promising technique to obtain accurate bounds of resource
consumption and gain. An amortised estimate of a resource does not target a single
operation but a sequence of operations. One assigns some amortised cost to an operation.
This amortised cost may be higher or lower than the operation's actual cost.
For the sequence considered, it is important that
its overall amortised cost covers its overall actual cost.
An amortised cost of the sequence lies between its actual cost and the simple multiplication of the worst-case of one operation by the length of the sequence.
An amortised cost of the sequence is in many cases easier to compute
than its actual cost and it is obviously better than the worst-case estimate.

Combining amortization with type theory allows to infer linear
heap-consumption bounds for functional programs with explicit memory
deallocation \cite{POPL03}. The \textit{AHA} project aims to adapt this method for
\emph{non-linear} bounds within (lazy) functional programs
and transfer the results to the object-oriented programming. Contrary
to linear amortised bounds, to obtain non-linear heap estimates
one does need to know sizes of structures that takes part in computation,
see, for instance \cite{vEShvK07}.

The AHA project seems to be part of an emerging trend since a growing number of works
are addressing resource analysis. Here we mention some of them.

In \cite{AmZil06} the authors develop new method to statically (polynomially) bound the resources needed
for the execution of systems of concurrent threads. The method generalises an approach designed
for first-order functional languages that relies on a combination of standard termination techniques
for term rewriting systems and an analysis of the size of the computed values based on the notion
of a polynomial quasi-interpretation. Quasi-interpretations were applied to size analysis firstly in \cite{BMM05b}.
In \cite{AvMoSch08} the authors describe a fully automated tool that implements a few techniques
that directly classify run-time complexity (i.e. techniques that use the number of
rewrite steps as complexity measure), including polynomial quasi-interpretations.

Several groups have studied programming languages with \textit{implicit computational complexity}
(ICC) properties. This line of research is motivated both by the perspective of
automated complexity analysis, and by foundational goals,
in particular to give natural characterisations of complexity classes, like PTIME or PSPACE.
In \cite{G92} characterisation of PTIME is given in terms
of bounded linear logic.
In  \cite{GabMarRon08} one proposes a characterization of PSPACE by means of an extension of
(soft affine) typed lambda calculus. For this extension, the authors design a call-by-name
evaluation machine in order to compute programs in polynomial space.
In \cite{AtBailTer07} one addresses the problem of
typing lambda-terms in a variant of second-order light linear logic. The authors give a procedure which, starting
with a term typed in system F, determines whether it is typable in the logic. It is
shown that the procedure can be run in time polynomial in the size of the original Church typed system F term.

Resource analysis may be performed within a \textit{Proof Carrying Code} framework.
In \cite{AsMcK06} one introduces the notion of a resource policy for mobile
code to be run on smart devices.
Such a resource policy is integrated in a  proof-carrying code architecture.
Two forms of policy are used: guaranteed policies which come with proofs
and target policies which describe limits of the device.

In \cite{AlArGenPuebZan07} one describes resource consumption
for Java bytecode by means of
Cost Equation Systems (CESs), which are similar to, but
more general than recurrence equations. CESs express the
cost of a program in terms of the size of its input data.
In a further step, a closed form (i.e., non-recursive) solution or upper
bound can sometimes be found by using existing Computer Algebra Systems, such as Maple
and Mathematica.
This work is continued by the authors in \cite{AlArGenPueb08},
where mechanisms of
constructing solutions  of CESs and upper bounds are studied closely.
They consider monotonic cost expressions only.

In \cite{Ben01} the author describes the Automated Complexity Analysis Prototype (ACAp) system  for automated time analysis of functional programs. Symbolic evaluation of recursive programs generates systems of multi-variable difference equations, which are solved using Mathematica.

In \cite{GuMeCh09} the authors describe a technique for computing symbolic bounds on the number of statements
a procedure executes in terms of its inputs and user defined size functions. The technique is based
on multiple counter instrumentation that allows to compute linear bounds individually for each counter.
The bounds on these counters are then composed to generate total bounds that are non-linear and disjunctive.

\subsection{Exploring size dependencies}

In this paper we restrict our attention to a language
with polymorphic lists as the only data-type. For such a language,
this paper develops a size-aware type system for which we define a
fully automatic type checking and inference procedure.

A typical example of a shapely function in this language is
$\cprod$ that computes the Cartesian product of two sets,
stored as lists. It is given below.
The auxiliary function $\mathsf{pairs}$ creates pairs of a single
value and the elements of a list. To get a Cartesian product the function
$\mathsf{cprod}$ does this for all elements from the first list
separately and appends the resulting intermediate lists. Furthermore, the
function definition of $\mathsf{append}$ is assumed:
$$
\begin{array}{lll}
\cprod(l_1, l_2) & = &
\match{l_1}{\hdarg}{\tlarg}{\nil} {\append(\pairs(\hdarg, l_2),\
\cprod(\tlarg, l_2))}
\end{array}
$$
\noindent where

$$
\begin{array}{l}
\pairs(\xarg, l) =
\match{l}{\hdarg}{\tlarg}{\nil}{
\lettt{l'}{\cons{\xarg}{\cons{\hdarg}{\nil}}}
{\cons{l'}{\pairs(\xarg,\,\tlarg)}}}
\end{array}
$$

Given two lists, for
instance $[1,\ 2,\ 3]$ and $[4,\ 5]$, it returns the list with all
pairs created by taking one element from the first list and one
element from the second list: $[[1,\,4],\ [1,\,5],\ [2,\,4],\
[2,\,5],\ [3,\,4],\ [3,\,5]]$. Hence, given two lists of length $n$
and $m$, it always returns a list of length $n  m$ containing
pairs. This is expressed by the type $\st{\alpha}{n}\times
\st{\alpha}{m} \,\rightarrow\, \st{\st{\alpha}{2}}{n * m}$.

Shapeliness is restrictive, but it is an important foundational step.
It makes type checking decidable in the non-linear case and it allows to
infer types ``out-of-the-box'', since experimental points are positioned
exactly on the graph of the polynomial. Exact sizes will be used in
future work to derive lower/upper bounds on the output sizes. We need such bounds for
investigating amortised resource bounds in the AHA project.
Nonlinear amortised resource consumption
relies on the size of input data, and its gain is calculated based on the size of output.

In this paper our only concern is in sizes of input and output. For instance, the time and space complexity of a function definition with a polynomial input-output size dependency may
exceed polynomial space and time consumption due to internal structures and computations.

\subsection{Related work on size analysis}

Information about input-output size dependencies is applied to
time and space analysis and optimization, because run time and
heap-space consumption obviously depend on the sizes of the data
structures involved in the computations. Knowledge of the exact size
of data structures can be used to improve heap space analysis for
expressions with destructive pattern matching. Amortised heap space
analysis has been developed for linear bounds by Hofmann and Jost
\cite{POPL03}. Precise knowledge of sizes is required to extend this
approach to non-linear bounds. Another application of
exact size information is \textit{load distribution} for parallel
computation. For instance, size information helps to distribute a
storage effectively and to safely store vector fragments
\cite{Chat90}.

The analysis of (exact) input-output size
dependencies of functions itself has been explored in a series of works.
Some interesting work on shape analysis has been done by Jay and
Sekanina~\cite{Jay97}. In this work, a shapely program expression is translated
into a corresponding abstract program expression over sizes. Thus, the
dependency of the result size on the argument sizes has the form of
a program expression. However, deriving an arithmetic function
from it is beyond the scope of their work.

Functional dependencies of sizes in a \textit{recurrent form} may be
derived via program analysis and transformation, as in the work of
Herrmann and Lengauer \cite{Her01}, or through a type inference
procedure, as presented by Vasconcelos and Hammond~\cite{Vas03}.
Both results can be applied to non-shapely functions, higher-order
functions and non-linear size expressions. However, solving the
recurrence equations to obtain a closed-form solution is left as an
open problem for external solvers.
In the second paper monotonic bounds are studied.

To our knowledge, the only work yielding closed-form solutions for
size dependencies is limited to monotonic dependencies. For instance,
in the well-known work of Pareto \cite{Par98}, where
\textit{non-strict} sized types are used to prove termination,
monotonic linear upper bounds are inferred.
There linearity is a sufficient condition for the type checking procedure
to be decidable. In the series of works on polynomial quasi- \cite{BMM05b}
and sup-interpretations \cite{MP06}
one studies max-polynomial upper bounds. The checking and inference rely on
real arithmetic. In general, (inference) synthesis procedures are exponential
w.r.t. the size of a program.
For \textit{multilinear} polynomials in \textit{max}-\textit{plus}-algebra it is shown to be
of polynomial complexity \cite{Am05}.

Our approach differs two-fold.
Firstly, quasi-interpretations give monotonic bounds. With non-monotonic size dependencies
polynomial quasi-interpretations may lead to significant over-estimations.
Secondly, to get exact bounds we use rational arithmetic instead of
real arithmetic. Our motivation for this choice lies in the fact that one should use decidability procedures in reals
with care, if one applies them to integers or naturals. For instance,
$x^2 \leq x^3$ holds in naturals, but  not in reals, since it does not hold on
$0 < x < 1$.

The approaches summarized in the previous paragraphs either leave
the (possibly undecidable) solving of recurrences as a problem
external to their approach, or are limited to
monotonic dependencies.

\subsection{Content of the paper}

In this work, we go beyond monotonicity and linearity and consider a type checking
procedure for a first-order functional programming language (section
\ref{language}) with polynomial size dependencies (section
\ref{typing}).

In subsection \ref{zero-order types} we define
zero-order types and their set-theoretic semantics.
In subsections \ref{first-order types} and \ref{typing rules}
we define first-order types  and give typing rules respectively.
The soundness of type system w.r.t. the operational semantics of the language
is studied  in subsection \ref{soundness}. The type system is not complete in the class
of all shapely functions, and no such complete system exists (subsection \ref{completeness}).

In section \ref{typechecking} we show that type checking is reduced to the
entailment checking over Diophantine equations. Type checking is
shown to be undecidable in general (subsection \ref{typechecking:undec}).
However, type-checking is decidable under certain syntactic condition
for function bodies (subsection \ref{typechecking:dec}).

We define in detail a method for type inference in section \ref{inference}.
It terminates on a nontrivial class of shapely functions.
It does not terminate when
either the function under consideration does not terminate,
or it is not shapely, or its correct size dependency is rejected
by the type-checker due type-system's incompleteness.

Finally, in section \ref{conclusion} we
overview the results and discuss further work.

%%%%%%%%%%%%%%%%%%%%%%%%%%%%%%%%%%%%%%%%%%%%%%%%%%

\section{Language}
\label{language}

The typing system is designed for a first-order functional language
over integers and (polymorphic) lists.

The syntax of language expressions is defined by the following
grammar (the example in the introduction used a sugared
version of this syntax):
$$
\begin{array}{llll}
\basics & b & ::= &  c \ |\  x\,\mathsf{binop}\,y\
| \ \nil \ | \ \cons{z}{l} \ |\ f(z_1, \ldots, z_n)\\
\exprs & e & ::= &  \ b\\
& & &  |\ \lete{z}{b}{e_1}\\
& & &  |\ \ifte{\xarg}{e_1}{e_2}\\
& &  & | \ \matchshort{l}{z}{l'}{e_1}{e_2} \\
& & & | \ \letfune{f}{z_1, \ldots, z_n}{e_1}{e_2}\\
& &  & | \ \letexterne{f}{z_1, \ldots, z_n}{e_1} \\
\end{array}
$$
\noindent
where $c$ ranges over integer constants,
$z$, $x$, $y$, $l$ denote zero-order program variables ($x$ and $y$ range over integer
variables, $l$ possibly decorated with sub- ans superscripts,
ranges over lists and $z$ ranges over
program variables when their types are not relevant),
$\mathsf{binop}$ is one of the four integer binary operations:
$+,\, -,\, \mathsf{div},\,\mathsf{mod}$,
and $f$ denotes a function name.

The syntax distinguishes between zero-order let-binding of variables
and first-order letfun-binding of functions. In a function body, the
only free program variables that may occur are its parameters:
$\FV{e_1} \subseteq \{ z_1, \ldots, z_n \}$. The operational
semantics is standard, therefore the definition is postponed until
it is used to prove soundness (section \ref{soundness}).

We  prohibit head-nested let-expressions and restrict sub-expressions in
function calls to variables to make type-checking straightforward.
Program expressions of a general form may be equivalently transformed to
expressions of this form. It is useful to think of the presented language
as an intermediate language.

For practical reasons and in order to support
modularity, we introduce a $\mathsf{letextern}$ \textit{declaration}, which
makes it possible to
call functions implemented in other modules that may be defined in other languages.

%%%%%%%%%%%%%%%%%%%%%%%%%%%%%%%%%%%%%%%%%%%%%%%%%%%%%%%%%%%%%%%%

\section{Type System}
\label{typing}

We consider a type system, constituted from  zero- and first-order types,
corresponding typing rules for program constructs and Peano arithmetic extended
to rational numbers as (classes of equivalence of) pairs of integers, rational addition and multiplication\footnote{
Rational addition is defined as
$\dfrac{a}{b}+\dfrac{c}{d}=\dfrac{ad+cb}{bd}$. Rationals with their addition and multiplication form
a field, more precisely a field of integer fractions.}.

\subsection{Zero-order types and their semantics}
\label{zero-order types}

Sized types are derived using a type and effect system in which
types are annotated with size expressions. Size expressions are
polynomials representing lengths of finite lists and arithmetic
operations over these lengths:
\vspace{0.1cm}

$
\begin{array}{llll}
 \sizeexprs & \sz & ::= & \mathcal{Q}\ | \ n \ | \  \sz
+ \sz \ |\ \sz - \sz \ |\ \sz
* \sz
\end{array}
$
\vspace{0.1cm}

\noindent
where $\mathcal{Q}$ denotes rational numbers, and $n$, possibly decorated with sub- and superscripts,
denotes a size variable,
which stands for any concrete size (natural number). For any natural
number $k$, $n^k$ denotes the $k$-fold product $n * \ldots * n$.

Size expressions are rational polynomials that map natural numbers into
natural numbers. For instance, the polynomial
$p(n)=\dfrac{n(n+1)}{2}$ represents the size dependency of the function
$\prgs$:
$$
\begin{array}{lll}
\prgs(l) & = &
\match{l}{\hdarg}{\tlarg}{\nil} {\append(\prgs(\tlarg),\,l)}
\end{array}
$$
\noindent For example, it maps $[1,\,2,\,3]$ on $[3,\,2,\,3,\, 1,\,2,\,3]$. The output size dependency
is given by the arithmetic progression $0+1 + \ \ldots \ +(n-1) + n$, where
$n$ is the size of an input. This explains the name of the function \cite{vKShvE07}.

Zero-order types are assigned to program values, which are interpreted
as integer numbers and finite lists. A list type is annotated with a size expression
that represents the length of the list:
\vspace{0.1cm}

$
\begin{array}{llll}
\types & \typesv & ::= & \intt\ | \ \alpha\ | \ \st{\typesv}{\sz}
\end{array}$
\vspace{0.1cm}

\noindent
where $\alpha$
is a type variable. This structure entails that if the elements
of a list are lists themselves, then all these element-lists must be
of the same size. Thus, instead of lists it would be more precise to
talk about matrix-like structures. For instance, the type
$\st{\st{\intt}{2}}{6}$ is given to a list whose elements are all
lists of exactly two integers, such as $[[1,4], [1,5], [2,4], [2,5],
[3, 4], [3, 5]]$.

It is easy to see  that for all $m$ the types
$\st{\st{\intt}{m}}{0}$ are equal, because they represent the singleton containing
$[\,]$. The same holds for
$\st{\st{\alpha}{m}}{0}$.  This induces a natural equivalence relation
on types. For instance
$\st{\st{\st{\alpha}{p}}{0}}{q} \equiv \st{\st{\st{\alpha}{p'}}{0}}{q}$.
The equivalence expresses the fact
that the size of a list is not relevant when such a list does not exist, because an outer list is
empty. Now, we define formally
an entailment $D \vdash \tau = \tau'$, where $D$ is a conjunction
of equations between polynomials. The definition is inductive on $\tau$.
The entailment $D \vdash \tau = \tau'$ holds if and only if
\begin{enumerate}[$\bullet$]
\item $\tau = \tau' = \intt$ or $\tau = \tau' = \alpha$ for some type variable
$\alpha$;
\item $\tau = \st{\tau''}{p}$ and $\tau'=\st{\tau'''}{p'}$ have the same underlying type (i.e.
the type with  annotations omitted) and
\begin{enumerate}[(1)]
%\vspace{0.2em}
\item $D \vdash p=p'$, and
\item $D \vdash p= 0$ or $D \vdash \tau''= \tau''', $
\vspace{0.2em}
\end{enumerate}
\end{enumerate}
\noindent with $D \vdash p=q$ being an arithmetical entailment, meaning
$\forall \ \bar{n}. D(\bar{n}) \rightarrow  p(\bar{n}) =q(\bar{n})$, where $\bar{n}$
is the collection of all size variables taken from $D$, $q$ and $p$. For instance,

$$
\begin{array}{l}
m=0 \vdash \st{\alpha}{n+m} =\st{\alpha}{n}\quad\textrm{and}\\
m-1=0,\,n=0 \vdash \st{\st{\alpha}{2}}{n+m-1} =\st{\st{\alpha}{3}}{n}\\
\end{array}
$$
\noindent
hold, whereas $n=0 \vdash \st{\st{\alpha}{2}}{n+m-1} =\st{\st{\alpha}{3}}{m-1}$ does not.

The sets $\FV{\typesv}$ and $\FVS{\typesv}$ of the
free type and size variables of a type $\typesv$
are defined inductively in the obvious way. Note, that
$\FVS{\st{\st{\alpha}{m}}{0}}=\emptyset$, since
the type is equivalent to $\st{\st{\alpha}{0}}{0}$.

Zero-order types without size or type variables are ground types:
\vspace{0.1cm}

$\begin{array}{llll} \gtypes & \typesvg & ::= & \typesv \mbox{\
such that\ }\FVS{\typesv} = \emptyset \wedge \FV{\typesv} =
\emptyset
\end{array}$
\vspace{0.1cm}

In our semantic model a heap is essentially a collection of
locations $\ell$ that can store list elements. A location is the
address of a cons-cell each consisting of a $\hd$-field, which
stores the value of a list element, and a $\tl$-field, which
contains the location of the next cons-cell of the list (or the
$\nul$ address). Formally, a program value is either an integer
constant, a location, or the $\nul$-address.
A heap is a finite partial mapping from locations and fields to
program values:

$$
\begin{array}{l}
\Val\ v \ ::= \  c\ |\ \ell\ |\ \nul \quad\quad\quad \ell \in \Loc \quad\quad c \in \intt\\
\textit{Hp} \ h \  : \  \Loc \rightharpoonup \left\{ \hd, \tl
\right\} \rightharpoonup \Val\\
\end{array}
$$

We will
write $h.\ell.\hd$ and $h.\ell.\tl$ for the results of applications
$h \ \ell\ \hd$ and $h\  \ell \ \tl$, which denote
the values stored in the heap $h$ at the location $\ell$ at fields $\hd$ and $\tl$,
respectively. Let $h[\ell.\hd:=v_h,\
\ell.\tl:=v_t]$ denote the heap equal to $h$ everywhere but in $\ell$,
which at the $\hd$-field of $\ell$ gets value $v_h$ and at the
$\tl$-field of $\ell$ gets value $v_t$.

The semantics $w$ of a program value $v$ is a set-theoretic
interpretation with respect to a specific heap $h$ and a ground
type $\tau$. It is given via the four-place relation $\mdl{v}{h}{\typesv}{w}$,
where integer constants interprets themselves, and locations are
interpreted as non-cyclic lists:

$$
\begin{array}{l@{\ }l@{\ }l}
c & \models^h_\intt & c\\
\nul & \models^h_{\st{\typesvg}{0}} & \texttt{[]} \\
\ell & \models^{h}_{\st{\typesvg}{n^\bullet}} & w_\hd :: w_\tl \ \textsf{iff}
\
\begin{array}[t]{l}
                                                     n \geq  1, \ell \in\dom{h},\\
                                                    \mdl{h.\ell.\hd}{\rstrloc{h}{\ell}}{\typesvg}{w_\hd},\\
                                                    \mdl{h.\ell.\tl}{\rstrloc{h}{\ell}}{\st{\typesvg}{n^\bullet-1}}{w_\tl}
                                                    \end{array}\\
\end{array}
$$

\noindent where $n^\bullet$ is a natural constant and $\rstrloc{h}{\ell}$ denotes the heap equal to $h$
everywhere except for $\ell$, where it is undefined.

\subsection{First-order types}
\label{first-order types}

First-order types are assigned to shapely functions over values of a
zero-order type. Let \typesvs{} denote a zero-order type of which
the annotations are all size variables. First-order types are then
defined by:

$$\begin{array}{llll}
\ftypes & \typesf & ::= & \typesvs_1 \times \ldots \times \typesvs_n
\rightarrow \typesv_{n+1} \\ &&& \mbox{such\ that\ }
\FVS{\typesv_{n+1}} \subseteq \FVS{\typesvs_1} \cup \cdots \cup
\FVS{\typesvs_n}\\
\end{array}$$

For instance, one expects
that the following function definitions
(in the sugared syntax\footnote{In the sugared syntax we
use $f(g(z))$ for ``$\lete{z'}{g(z)}{f(z')}$''})
will be well-typed in
the system:

\vspace{1.0em}

\noindent
$
\begin{array}{l}
\tf{\append}{\st{\alpha}{n}\times \st{\alpha}{m}}{\st{\alpha}{n+m}}\\
\append(l_1,\,l_2)=
\match{l_1}{\hdarg}{\tlarg}{l_2}{\cons{\hdarg}{\append(\tlarg,\,l_2)}} \\
\end{array}
$

\vspace{1.0em}

\noindent
$
\begin{array}{l}
\tf{\pairs}{\alpha \times \st{\alpha}{n}}{\st{\st{\alpha}{2}}{n}}\\
\pairs(\xarg, l) =
\match{l}{\hdarg}{\tlarg}{\nil}{
\lettt{l'}{\cons{\xarg}{\cons{\hdarg}{\nil}}}
{\cons{l'}{\pairs(\xarg,\,\tlarg)}}}
\end{array}
$

\vspace{1.0em}

\noindent
$
\begin{array}{l}
\tf{\cprod}{\st{\alpha}{n}\times \st{\alpha}{m}}{\st{\st{\alpha}{2}}{n*m}}\\
\cprod(l_1, l_2)
\match{l_1}{\hdarg}{\tlarg}{\nil} {\append(\pairs(\hdarg, l_2),
\cprod(\tlarg, l_2))}
\end{array}
$

\vspace{1.0em}

\noindent
$
\begin{array}{l}
\tf{\sqdiff}{\st{\alpha}{n}\times \st{\alpha}{m}}
{\st{\st{\alpha}{2}}{(n^2+m^2-2*n*m)}}\\
\sqdiff(l_1, \ l_2) =
\match{l_1}{\hdarg}{\tlarg} {\cprod(l_2, \, l_2)} {
\match{l_2}{\hdarg'}{\tlarg'} {\cprod(l_1, \, l_1)} {
\sqdiff\ (\tlarg,\, \tlarg')}}
\end{array}
$

\vspace{1.0em}

For \textit{total} functions the following condition
is necessary:
\textit{for all instantiations * of size variables with themselves or zeros,
the inclusion  $\FVS{*\typesv_{n+1}} \subseteq \FVS{*\typesvs_1} \cup \cdots \cup
\FVS{*\typesvs_n}$ holds}. Consider, for instance, the first-order type
$\st{\st{\alpha}{m}}{n}  \rightarrow \st{\st{\alpha}{n}}{m}$, where on $\nil$ input, i.e. with
$n=0$, the input type degenerates to $\st{\st{\alpha}{m}}{0}\equiv \st{\st{\alpha}{0}}{0}$ but
the outer list of the output must have length $m$. This $m$ becomes unknown being
``hidden'' in $\st{\st{\alpha}{m}}{0}$. Thus,  this first-order type
may be accepted without the condition above, once
a function of this type is \textit{partial and undefined on empty lists}.
Since the type $\st{\st{\alpha}{m}}{n}  \rightarrow \st{\st{\alpha}{n}}{m}$ may
be assigned  to an implementation of
$n\times m$-matrix transposition,  undefinedness on $\nil$
may be interpreted as an exception ``cannot transpose an empty matrix''.

A context \context{} is a mapping from zero-order variables to
zero-order types. A signature \sign{} is a mapping from function
names to first-order types. The definition of $\FVS{-}$ is
straightforwardly extended to contexts.

\subsection{Typing rules}
\label{typing rules}

A typing judgement is a relation of the form $\judgement{D;\
\Gamma}{e}{\typesv}$, where $D$ is a conjunction
of equations between polynomials. $D$
is used to keep track of size information. In the current
language, the only place where size information is available is in
the nil-branch of the match-rule. The signature $\Sigma$ contains
the type assumptions for the functions that are called in the
expression under consideration. The typing judgement
relation is defined by the following rules:

$$
\begin{array}{l@{\quad \quad}l}
\infer[\textsc{IConst}]{\judgement{D;\ \Gamma}{c}{\intt}}{
\begin{array}{l}
\end{array}
} & \infer[\textsc{IBinop}]{\judgement{D;\ \Gamma,\
\tv{x}{\intt},\,\tv{y}{\intt}}{x \,\mathsf{binop} \, y }{\intt}}
{ \begin{array}{l} \end{array}}
\end{array}
$$

$$
\begin{array}{l@{\quad \quad}l}
\infer[\textsc{Nil}]{\judgement{D;\ \Gamma}{\nil}{\st{\typesv}{\sz}}}
{D \vdash p=0 }
 & \infer[\textsc{Var}]{\judgement{D;\ \Gamma,\
\tv{z}{\typesv}}{z}{\typesv{}'}}{ D \vdash \typesv = \typesv{}'}
\end{array}
$$

$$
\infer[\textsc{Cons}]{\judgement{D;\ \Gamma, \ \tv{\hdarg}{\typesv},\ \tv{\tlarg}{\st{\typesv}{\sz'}}}
{\cons{\hdarg}{\tlarg}}{\st{\typesv}{\sz}}}
{D \vdash \sz=\sz'+1}$$

$$
\infer[\textsc{If}] {\judgement{D;\ \Gamma,\ \tv{x}{\intt}} {\ifte{x}{e_{t}}{e_{f}}}
{\typesv}} {
\begin{array}{l}
\judgement{D;\ \Gamma,\ \tv{x}{\intt}}{e_{t}}{\typesv}\\
\judgement{D;\ \Gamma,\ \tv{x}{\intt}}{e_{f}}{\typesv}\\
\end{array}
}
$$

$$
\infer[\textsc{Let}]{\judgement{D;\ \Gamma}{\lete{z}{e_1}{e_2}}{\typesv}}
{
\begin{array}{c}
 z \notin \dom{\Gamma}\\
\judgement{D;\ \Gamma}{e_1}{\typesv_z}\\
\judgement{D;\ \Gamma , \ \tv{z}{\typesv_z}}{e_2}{\typesv}
\end{array}
}$$

$$
\infer[\textsc{Match}]
{\judgement{D;\ \Gamma, \
\tv{l}{\st{\typesv'}{\sz}}}{\match{l}{\hdarg}{\tlarg}{e_{\nil}}{e_{\mathsf{cons}}}}{\typesv}}
{\begin{array}{c}
\judgement{\sz=0,\ D;\ \Gamma, \ \tv{l}{\st{\typesv'}{p}}}{e_{\nil}}{\typesv} \\
\hdarg, \tlarg \not\in \dom{\Gamma} \quad \quad\judgement{D;\
\Gamma, \tv{\hdarg}{\typesv'}, \ \tv{l}{\st{\typesv'}{p}}, \
\tv{\tlarg}{\st{\typesv'}{\sz-1}}}{e_{\mathsf{cons}}}{\typesv}
\end{array}
}$$

The rule \textsc{LetFun} demands that all  $\mathsf{letfun}$-defined
functions, including recursive ones,
must be in the domain of the signature, and the corresponding first-order type must pass
type-checking:

$$\infer[\textsc{LetFun}]{\judgement{D;\ \Gamma}{\letfune{f}{z_1, \ldots, z_n}{e_1}{e_2}}{\typesv{}'}}
{
\begin{array}{c}
\sign(f) = \typesvs_1 \times \cdots \times \typesvs_n \rightarrow
\typesv_{n+1}\\
 \judgements{\ttt;\ \tv{z_1}{\typesvs_1}, \dots,
\tv{z_n}{\typesvs_n}}{e_1}{\typesv_{n+1}}{\sign}\\
\judgements{D;\ \Gamma}{e_2}{\typesv{}'}{\sign}
\end{array}
}$$

However, in practice we do not prohibit calls to functions
that are not defined via $\mathsf{letfun}$. If a function coming from a trusty external source
together with its first-order type  is declared via
$\mathsf{letextern}$, one applies the \textsc{LetExtern}
rule:

$$\infer[\textsc{LetExtern}]{\judgement{D;\ \Gamma}{\letexterne{f}{z_1, \ldots, z_n}{e}}{\typesv{}'}}
{
\begin{array}{c}
\sign(f) = \typesvs_1 \times \cdots \times \typesvs_n \rightarrow
\typesv_{n+1}\\
\judgements{D;\ \Gamma}{e}{\typesv{}'}{\sign}
\end{array}
}$$

\noindent When proving soundness we require  all
functions to be defined via $\mathsf{letfun}$
within an expression under consideration.

In the \textsc{FunApp}-rule, $\Theta$ computes the
substitution $*$ from its first argument (whose size expressions are always variables since they are taken from the first-order signature of the function) to its second
argument, and  the set $C$ of equations over size expressions from
$\typesv_1{}' \times \dots \times \typesv'_k{}$. The set
$C$ contains $p=p'$ if and only if
the expressions  $p$ and $p'$ are substituted to the same size variable. For instance,
if a function $\mathsf{dotprod}:\,\st{\intt}{m}\times\st{\intt}{m}\rightarrow  \intt$
is called with actual parameters of the types $\st{\intt}{n+n'+2}$ and $\st{\intt}{n+3}$,
then $C$ contains the equation $n+n'+2 = n+3$.

$$\infer[\textsc{FunApp}]
{ \judgement{D;\ \Gamma, \tv{z_1}{\typesv_1{}'}, \ldots,
\tv{z_n}{\typesv_{n}{}'}}{f(z_1, \ldots, z_k)}{\typesv_{n+1}{}'}}{
\begin{array}{c}
\langle*, C\rangle = \Theta(\typesvs_1{} \times \dots \times
\typesvs_n{},
\typesv_1{}' \times \dots \times \typesv_n{}')\\
\Sigma(f) = \typesvs_1 \times \ldots \times \typesvs_n \rightarrow
\typesv_{n + 1}
\quad \quad D \vdash \typesv{}'_{n+1}  = *(\typesv{}_{n+1})
\quad \quad D \vdash C
\end{array}}
$$
\noindent In the example with the call of $\mathsf{dotprod}$ the
equation $n+n'+2 = n+3$ holds if $D$ contains $n'-1=0$.

As another example of the \textsc{FunApp}-rule consider the recursive call
$\append(\tlarg,\,l_2)$
in the definition of $\append$:

$$\infer[\textsc{FunApp}]
{ \judgement
{\tv{\tlarg}{\st{\alpha}{n-1}},\, \tv{l_2}{\st{\alpha}{m}}}
{\append(\tlarg, \, l_2)}{\tau}}
{
\begin{array}{c}
\Sigma(\append) = \st{\alpha}{n} \times \st{\alpha}{m} \rightarrow \st{\alpha}{n+m}\\
\vdash \tau  = *(\st{\alpha}{n+m})
\end{array}}
$$
\noindent Here $\Theta(\st{\alpha}{n} \times  \st{\alpha}{m},\ \st{\alpha}{n-1} \times  \st{\alpha}{m})=\langle*,\,\emptyset \rangle$ with
$*(n)=n-1$, $*(m)=m$. Thus, $\tau  = *(\st{\alpha}{n+m}) = \st{\alpha}{n-1+ m}$.

The type system needs no conditions on non-negativity of size
expressions. Size expressions in types of meaningful data structures
are always non-negative. The soundness of the type system
ensures that this property is preserved throughout
(the evaluation of) a well-typed expression.

See subsection \ref{typechecking:examples} for examples of type checking in detail.

\subsection{Soundness of the type system}
\label{soundness}

Informally, soundness of the type system ensures that ``well-typed
programs will not go wrong''. This means that
if function arguments have meaningful values according to their types then
the result will have a meaningful value of the output type.
In section \ref{zero-order types}, we formalized the notion of a
meaningful value using a heap-aware semantics of types. Here we give an
operational semantics of the language.

We introduce a \emph{frame store} as a mapping from program variables to program values.
This mapping is maintained when a function body is evaluated. Before
evaluation of the function body starts, the store contains only the
actual parameters of the function. During evaluation, the store
is extended with the variables introduced by pattern matching or
$\mathsf{let}$-constructs. These variables are eventually bound to
the actual parameters, thus there is no access beyond the
current frame. Formally, a frame store is a finite partial map from
variables to values:

$$\St\ s\ :\ \expvars \rightharpoonup \Val$$

Using heaps and frame stores, and maintaining a mapping $\closures$
from function names to the bodies of the function definitions,
and a mapping $\closExt$ of external
function names to the external implementations, the operational semantics of expressions is defined by
the following rules:

$$
\infer[\textsc{OSIConst}]{\opsem{s}{h;\ \closures,\ \closExt}{c}{c}{h}}{c \in \intt}
$$

$$
\infer[\textsc{OSIBinop}]{\opsem{s}{h;\
\closures,\ \closExt}{\xarg \ \mathsf{binop}\,\yarg}{s(\xarg)\mathsf{binop} \,s(\yarg)}{h}}{}
$$

$$
\begin{array}{l@{\quad\ \quad}l}
\infer[\textsc{OSNil}]{\opsem{s}{h;\ \closures,\ \closExt}{\nil}{\nul}{h}}{} &
\infer[\textsc{OSVar}]{\opsem{s}{h;\
\closures,\ \closExt}{z}{s(z)}{h}}{}
\end{array}
$$

$$
\infer[\textsc{OSCons}]
{\opsem{s}{h,\ \closures, \ \closExt}{\cons{\hdarg}{\tlarg}}{\ell}{h[\ell.\hd:=v_{\hd},\
\ell.\tl:=v_{\tl}]}} {s(\hdarg)=v_{\hd} \quad\quad
s(\tlarg)=v_{\tl}\quad\quad \ell\notin\dom{h}}
$$

$$
\infer[\textsc{OSIfTrue}]{\opsem{s}{h;\ \closures,\ \closExt}{\ifte{x}{e_1}{e_2}}{v}{h'}}{s(x) \not= 0 \quad\quad \opsem{s}{h;\ \closures,\ \closExt}{e_1}{v}{h'}}$$

$$\infer[\textsc{OSIfFalse}]{\opsem{s}{h;\ \closures,\ \closExt}{\ifte{x}{e_1}{e_2}}{v}{h'}}{s(x) = 0 \quad\quad \opsem{s}{h;\ \closures,\ \closExt}{e_2}{v}{h'}}$$

$$
\infer[\textsc{OSLet}] {\opsem{s}{h;\
\closures,\ \closExt}{\lete{z}{e_1}{e_2}}{v}{h'}} {
\begin{array}{c}
\opsem{s}{h;\ \closures,\ \closExt}{e_1}{v_1}{h_1} \quad \quad
\opsem{s[z:=v_1]}{h_1;\ \closures,\ \closExt}{e_2}{v}{h'}
\end{array}
}
$$

$$
\infer[\textsc{OSMatch-Nil}]
{\opsem{s}{h;\ \closures,\ \closExt}{\match{l}{\hdarg}{\tlarg}{e_1}{e_2}}{v}{h'}}{s(l) = \nul \quad \quad \opsem{s}{h;\ \closures,\ \closExt}{e_1}{v}{h'}}$$

$$
\infer[\textsc{OSMatch-Cons}]{\opsem{s}{h;\ \closures,\ \closExt}{\match{l}{\hdarg}{\tlarg}{e_1}{e_2}}{v}{h'}}{\begin{array}{@{}c@{}}h.s(l).\hd = v_{\hd} \quad \quad h.s(l).\tl = v_{\tl}
\\ \opsem{s[\hdarg := v_{\hd}, \tlarg :=
v_{\tl}]}{h,\ \closures, \ \closExt}{e_2}{v}{h'} \end{array}}$$

$$\infer[\textsc{OSLetFun}]
{\opsem{s}{h;\ \closures,\ \closExt}{\letfune{f}{z_1, \ldots, z_n}{e_1}{e_2}}{v}{h'}}{
\begin{array}{c}
\opsem{s}{h;\ \closures{}[f := ((z_1, \ldots, z_n) \times e_1)],\ \closExt}{e_2}{v}{h'}\\
\FV{e_1} \subseteq \{z_1,\,\ldots,\,z_n\}
\end{array}
}$$

%$$
%\infer[\textsc{OSLetExtern}]{\opsem{s}{h;\ \closures,\ \closExt}
%{\letexterne{f}{z_1, \ldots, z_n}{e}}{v}{h'}}
%{\begin{array}{c} \opsem{s}{h,\ \closures,\ \closExt{}[f := ((z_1, \ldots, z_n) \times [\textit{external code}]
%)]}{e}{v}{h'} \end{array}}$$

$$\infer[\textsc{OSFunApp}]{\opsem{s}{h;\ \closures,\ \closExt}{f(z_1, \ldots, z_n)}{v}{h'}}{
\begin{array}{@{}c@{}}
s(z_1) = v_1 \  \ldots \  s(z_n) = v_n \quad \quad \closures{}(f) = (z'_1, \ldots, z'_n) \times e_f \\
\opsem{[z'_1 := v_1, \ldots, z'_n := v_n]}{h;\ \closures,\ \closExt}{e_f}{v}{h'}\\
\FV{e_f} \subseteq \{z'_1,\,\ldots,\,z'_n\}
\end{array}}  $$

%%$$\infer[\textsc{OSExtFunApp}]{\opsem{s}{h;\ \closures,\ \closExt}
%%{f(z_1, \ldots, z_n)}{v}{h'}}{\begin{array}{@{}c@{}}s(z_1) = v_1 \  \ldots \  s(z_n) = v_n \quad \quad
%%\closExt{}(f) = (z'_1, \ldots, z'_n) \times [\textit{external code}]
%%\\ \opsem{[z'_1 := v_1, \ldots, z'_n := v_n]}{h;\ \closures,\ \closExt}{[\textit{external code}]}{v}{h'}
%%\end{array}}  $$

The soundness statement is defined by means of the following two
predicates. One indicates if a program value is meaningful with
respect to a certain heap and a ground type. The other does the same
for sets of values and types, taken from a frame store and a ground context
$\Gamma^{\,\bullet}$, respectively:

$$\begin{array}{lll} \soundval{v}{\typesvg}{h} & = &
\exists_w [\ \mdl{v}{h}{\typesvg}{w}\ ] \\ \soundstore{\mathit{vars}}
{\Gamma^{\,\bullet}}{s}{h} & = & \forall_{z \in
\mathit{vars}} [\ \soundval{s(z)}{\Gamma^{\,\bullet}(z)}{h}\ ] \end{array}$$

Let a valuation $\val$ map size variables to concrete (natural)
sizes and an instantiation $\eta$ map type variables to ground types:

$$\begin{array}{lll}\mathit{Valuation}\ & \val & :\ \sizevars \rightarrow \Z\\
\mathit{Instantiation}\ & \eta & :\ \typevars \rightarrow
\typesvg\end{array}$$

When applied to a type, context, or size equation,
valuations (and instantiations) map all variables occurring in it to
their valuation (or instantiation) images.

Now, stating the soundness theorem is straightforward:
\begin{thm}[Soundness]
Let $\opsem{s}{h;\ [\ ],\ [\ ]}{e}{v}{h'}$ and all functions called in $e$
be defined in it via the let-fun construct. Then for any context $\Gamma$,
signature $\Sigma$ and type $\typesv$ such that $
\judgement{\mathsf{True};\ \Gamma}{e}{\typesv}$ is derivable in the
type system and for any size valuation $\val$ and type instantiation
$\eta$, it holds that if the store is meaningful w.r.t. the context
$\eta(\val(\Gamma))$ then the output value is meaningful w.r.t the
type $\eta(\val(\typesv))$:
\vspace{0.3cm}

$\begin{array}{l}\forall_{\eta, \val} [\
\soundstore{\FV{e}}{\eta(\val(\Gamma))}{s}{h} \implies
\soundval{v}{\eta(\val(\typesv))}{h'}\ ]\end{array}
$
\end{thm}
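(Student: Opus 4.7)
The plan is to proceed by induction on the derivation of the operational-semantics judgement $\opsem{s}{h; \closures, \closExt}{e}{v}{h'}$, with a case split on the last typing rule used for $e$. Because a function body $\closures(f)$ is not a syntactic sub-term of the corresponding \textsc{FunApp} step, the statement must first be generalized to arbitrary $\closures$ satisfying the invariant that every closure body is well-typed under $\sign$ at its declared first-order type. The hypothesis $\closures = \closExt = [\,]$ trivially satisfies this invariant, and the \textsc{LetFun} case records the new entry using the typing premise supplied by that rule.

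For the base cases (\textsc{IConst}, \textsc{IBinop}, \textsc{Var}) soundness is immediate from the shape of $\soundval$. The \textsc{Nil} case uses the premise $D \vdash p = 0$ together with the assumption that the ambient valuation $\val$ satisfies the constraint $D$ currently in effect (trivially $\mathsf{True}$ at the top level, extended only through \textsc{Match-Nil}), so that $\val(p) = 0$ matches the semantics of $\nul$ at $\st{\typesv}{0}$. In the \textsc{Cons} case a framing lemma is invoked: allocating a location $\ell \notin \dom{h}$ leaves $\mdl{v'}{h}{\typesv}{w}$ unchanged for any $v'$ already meaningful in $h$, since $\ell$ is unreachable from $v'$. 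The same framing property carries meaningfulness of the full store across every heap-growing sub-evaluation in \textsc{If}, \textsc{Let}, and the \textsc{Match} rules.

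The structural cases (\textsc{If}, \textsc{Let}, \textsc{Match-Nil}, \textsc{Match-Cons}) pass to the induction hypothesis on the relevant branch. For \textsc{Match-Nil}, the discriminee $s(l) = \nul$ forces $\val(p) = 0$ by the semantics of lists, which is exactly the strengthening $D \wedge (p = 0)$ found in the typing premise. For \textsc{Match-Cons}, meaningfulness of $s(l)$ at $\st{\typesv'}{\val(p)}$ unfolds to $\val(p) \geq 1$ together with meaningful interpretations of $h.s(l).\hd$ at $\typesv'$ and $h.s(l).\tl$ at $\st{\typesv'}{\val(p) - 1}$, which match the types introduced for $\hdarg$ and $\tlarg$. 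Alongside the main induction we rely on an auxiliary semantic-soundness lemma for the equivalence $D \vdash \typesv = \typesv'$: whenever $\val$ satisfies $D$, we have $\soundval{v}{\eta(\val(\typesv))}{h} \Longleftrightarrow \soundval{v}{\eta(\val(\typesv'))}{h}$, which lets the proof absorb the side-conditions $D \vdash \typesv = \typesv'$ appearing in \textsc{Var} and elsewhere.

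The principal obstacle is the \textsc{FunApp} case. Operationally one evaluates $e_f = \closures(f)$ in the fresh frame $[z_1' := v_1, \ldots, z_n' := v_n]$, while the typing rule relates formal types $\typesvs_i$ to actual types $\typesv_i'$ through the pair $\langle *, C \rangle = \Theta(\typesvs_1 \times \cdots \times \typesvs_n,\ \typesv_1' \times \cdots \times \typesv_n')$ with premises $D \vdash C$ and $D \vdash \typesv_{n+1}' = *(\typesv_{n+1})$. The crux is a substitution-compatibility lemma: if $\val$ satisfies $D$, then the valuation $\val' := \val \circ *$ on the formal size variables satisfies the constraint $\mathsf{True}$ under which the body was typed, and $\eta(\val(\typesv_i')) \equiv \eta(\val'(\typesvs_i))$ up to the zero-annotation equivalence of section~\ref{zero-order types}, so the new frame is meaningful against $\eta(\val'(\tv{z_1}{\typesvs_1}, \ldots, \tv{z_n}{\typesvs_n}))$. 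The closure invariant supplies the typing $\judgement{\mathsf{True}; \tv{z_1}{\typesvs_1}, \ldots, \tv{z_n}{\typesvs_n}}{e_f}{\typesv_{n+1}}$; the induction hypothesis, applied with $\val'$ and $\eta$, yields meaningfulness of $v$ at $\eta(\val'(\typesv_{n+1}))$; and the closing premise $D \vdash \typesv_{n+1}' = *(\typesv_{n+1})$ transports this to the required type $\eta(\val(\typesv_{n+1}'))$ via the equivalence-soundness lemma. All remaining cases are essentially book-keeping around these ingredients.
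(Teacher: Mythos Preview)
Your proof follows essentially the same route as the paper: generalize the constraint set from $\mathsf{True}$ to an arbitrary $D$ satisfied by $\val$, induct on the operational-semantics derivation, and use heap monotonicity (the paper phrases this as \emph{benign sharing} and packages it into auxiliary footprint lemmas in the appendix) to thread store validity through the \textsc{Let} case. Your explicit invariant on $\closures$---that every stored body is well-typed at its declared signature---is a cleaner formalization than the paper's somewhat informal ``there must be a node in the derivation tree'' for obtaining the typing of $e_f$ in the \textsc{FunApp} case; the paper's lemma, as stated, carries only the condition ``all functions called in $e$ are defined via $\mathsf{letfun}$'', which does not by itself supply the typing of a body sitting in $\closures$.

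One small point to tighten: in \textsc{FunApp} you must also re-instantiate the \emph{type} variables of $\Sigma(f)$, not only the size variables. The paper builds a fresh $\eta'$ with $\eta'(\alpha_i)=\eta(\tau_{\alpha_i})$ alongside the new size valuation $\val'(n_{ij})=\val(p_{ij})$, whereas your sketch passes $\eta$ through unchanged; but $\eta$ is defined on the type variables of the calling context and need not cover the formal type variables bound in the signature. This is a bookkeeping fix, not a structural flaw in your argument.
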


The theorem follows from the
 following general statement:

\begin{lemma}[Soundness]
For any $s$, $h$, $\closures$, $e$, $v$, $h'$,
a set of equations $D$,
a context $\Gamma$,
a signature $\Sigma$, a type $\typesv$,
a size valuation $\val$ and  a type instantiation $\eta$
such that
\begin{itemize}
\item $\opsem{s}{h;\ \closures, \,[\,]\,}{e}{v}{h'}$,
\item $\judgements{D;\ \Gamma}{e}{\typesv}{\Sigma}$ is derivable in the
type system and
all functions called in $e$ are declared via $\mathsf{letfun}$,
\end{itemize}
\noindent one has
\vspace{0.3cm}

$\begin{array}{l}\forall_{\eta, \val} [\
\val(D) \ \land \
\soundstore{\FV{e}}{\eta(\val(\Gamma))}{s}{h} \implies
\soundval{v}{\eta(\val(\typesv))}{h'}\ ]\end{array}
$

\end{lemma}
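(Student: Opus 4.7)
The plan is to proceed by induction on the derivation of the operational judgement $\opsem{s}{h;\ \closures,\ [\,]}{e}{v}{h'}$, which is well-founded by hypothesis. For each operational rule I invert the corresponding typing rule: the typing rules are syntax-directed on $e$, the only slack being the conversion entailment $D \vdash \typesv = \typesv'$ in \textsc{Var} and the size entailments in \textsc{Nil} and \textsc{Cons}. To cope with recursion I strengthen the invariant to also require that for every $f \in \dom{\closures}$ with body $(z_1,\ldots,z_n) \times e_f$ and $\sign(f) = \typesvs_1 \times \cdots \times \typesvs_n \rightarrow \typesv_{n+1}$, a derivation $\judgements{\ttt;\ \tv{z_1}{\typesvs_1},\ldots,\tv{z_n}{\typesvs_n}}{e_f}{\typesv_{n+1}}{\sign}$ is available. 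This invariant holds vacuously for the top-level call with $\closures = [\,]$ and is preserved by \textsc{OSLetFun}, since \textsc{LetFun} type-checks precisely that body against $\ttt$.

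Before the case analysis I establish three auxiliary lemmas. A \emph{conversion lemma}: if $D \vdash \typesv = \typesv'$ and $\val(D)$ holds, then $\mdl{v}{h}{\eta(\val(\typesv))}{w}$ iff $\mdl{v}{h}{\eta(\val(\typesv'))}{w}$; this is an induction on the inductive definition of $D \vdash \typesv = \typesv'$, with the clause $D \vdash \sz = 0$ handled by the fact that $\nul$ is the only inhabitant of $\st{\typesvg}{0}$. A \emph{frame lemma}: if $\mdl{v}{h}{\typesvg}{w}$ and $\ell \notin \dom{h}$, then $\mdl{v}{h[\ell.\hd := v_h, \ell.\tl := v_t]}{\typesvg}{w}$; this follows by induction on the list length using that the semantic clause shrinks to $\rstrloc{h}{\ell}$ at each step and that $v$ cannot reach $\ell$. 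A \emph{substitution lemma}: from $\langle *, C \rangle = \Theta(\typesvs_1 \times \cdots \times \typesvs_n, \typesv'_1 \times \cdots \times \typesv'_n)$ and $\val(D \wedge C)$ one obtains the valuation $\val' = \val \circ *$ on the formal size variables satisfying $\val'(\typesvs_i) = \val(\typesv'_i)$ as ground types.

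The routine cases proceed by direct inversion and IH. \textsc{IConst}, \textsc{IBinop} and \textsc{OSVar} are immediate from store meaningfulness and the trivial interpretation of $\intt$. \textsc{OSNil} combines the \textsc{Nil} premise $D \vdash \sz = 0$ with $\val(D)$ to get $\val(\sz) = 0$, triggering the clause $\nul \models^{h}_{\st{\typesvg}{0}} \texttt{[]}$. \textsc{OSIfTrue}, \textsc{OSIfFalse} and \textsc{OSLet} compose IHs directly. \textsc{OSCons} applies the frame lemma so that the previously meaningful $v_h$ and $v_t$ remain meaningful after allocating $\ell$, then invokes the $::$-clause with $\val(\sz) = \val(\sz') + 1$ from \textsc{Cons}. \textsc{OSMatch-Nil} uses $s(l) = \nul$ to conclude $\val(\sz) = 0$, so $\val(D \wedge \sz = 0)$ holds and the nil-branch IH applies; \textsc{OSMatch-Cons} unfolds the $::$-clause to extract meaningfulness of $\hdarg$ at $\typesv'$ and of $\tlarg$ at $\st{\typesv'}{\val(\sz) - 1}$, extending the store accordingly before appealing to the IH.

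The main obstacle is \textsc{OSFunApp}, where the substitution lemma earns its keep. Given $\closures(f) = ((z'_1,\ldots,z'_n) \times e_f)$ and the premises of \textsc{FunApp}, the strengthened closure invariant supplies $\judgements{\ttt;\ \tv{z'_1}{\typesvs_1},\ldots,\tv{z'_n}{\typesvs_n}}{e_f}{\typesv_{n+1}}{\sign}$. Outer store meaningfulness gives $\soundval{v_i}{\eta(\val(\typesv'_i))}{h}$, and the substitution lemma produces $\val'$ on the formal size variables with $\val'(\typesvs_i) = \val(\typesv'_i)$, so the inner store $[z'_i := v_i]$ is meaningful at $\eta(\val'(\typesvs_i))$. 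The inner IH then yields $\soundval{v}{\eta(\val'(\typesv_{n+1}))}{h'}$, and the conversion lemma applied to $D \vdash \typesv'_{n+1} = *(\typesv_{n+1})$ (together with $\val'(\typesv_{n+1}) = \val(*(\typesv_{n+1}))$ from the construction of $\val'$) converts this to the desired $\soundval{v}{\eta(\val(\typesv'_{n+1}))}{h'}$. The delicate point throughout is to keep the callee's size variables disjoint from the caller's so that the outer $\val$ and the inner $\val'$ do not interfere; $\alpha$-renaming of the formal size variables at each function call resolves this cleanly.
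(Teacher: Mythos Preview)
Your overall plan---induction on the operational derivation, syntax-directed inversion of the typing rules, explicit closure invariant for function bodies, and the three auxiliary lemmas---matches the paper's proof closely. Your strengthening of the induction hypothesis by carrying ``every $f\in\dom{\closures}$ has a typed body against $\ttt$'' is in fact cleaner than the paper, which at the \textsc{OSFunApp} step merely observes that ``since all functions are $\mathsf{letfun}$-defined, there must be a node in the typing derivation'' supplying the body's typing; your formulation makes this precise.

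There is, however, one genuine gap. You classify \textsc{OSLet} among the cases that ``compose IHs directly'', but it does not. After evaluating $e_1$ the heap has changed from $h$ to $h_1$, and to invoke the IH on $e_2$ you must show that every $z'\in\FV{e_2}\setminus\{z\}$ is still meaningful in $h_1$, not just in $h$. Your frame lemma handles a single fresh allocation, but $e_1$ may perform arbitrarily many $\mathsf{cons}$-allocations; what you need is its iterated form (if $h_1|_{\dom{h}}=h$ then $\mdl{v}{h}{\typesvg}{w}$ implies $\mdl{v}{h_1}{\typesvg}{w}$), together with the observation that in this language---no destructive matching, no assignment---every evaluation only extends the heap. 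The paper phrases this as \emph{benign sharing} ($h|_{\Reg{h}{s|_{\FV{e_2}}}} = h_1|_{\Reg{h}{s|_{\FV{e_2}}}}$) and devotes a dedicated argument to the \textsc{OSLet} case using a footprint-based transport lemma; you should do the same rather than wave it through. A similar but smaller instance of this issue hides in \textsc{OSMatch-Cons}: unfolding the $::$-clause yields meaningfulness of $\hdarg$ and $\tlarg$ in $\rstrloc{h}{\ell}$, not in $h$, and you again need the heap-extension lemma to lift them back.
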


The proof is done by induction on the
size of the derivation tree for the operational-semantics judgement.
For the \textsc{let}-rule it relies on \textit{benign sharing} \cite{POPL03}
of data structures. With benign sharing, shared heap structures to be used
in the let-body
are not changed by the let-binding expression of $\mathsf{let}$.
To formalize the notion of benign sharing we introduce a function
\textit{footprint} $\mt{\R}{\Hp \times \Val}{\Ps(\Loc)}$, which computes the set of
locations accessible in a given heap from a given value:

$$
\begin{array}{l@{\ }c@{\ }l}
\Reg{h}{c}& = & \emptyset \\
\Reg{h}{\nul} & = & \emptyset \\
\Reg{h}{\ell} & = &
\left\{
\begin{array}{l}
\emptyset, \;\;
\mathit{if}\ \ell \notin\dom{h}\\
\{\ell\} \,\cup
\,\Reg{\rstrloc{h}{\ell}}{h.\ell.\hd}\,\cup
\,\Reg{\rstrloc{h}{\ell}}{h.\ell.\tl}, \;\mathit{if}\ \ell \in
\dom{h}
\end{array}
\right.
\end{array}
$$
\noindent where $f| _X$ denotes the restriction of a (partial) map
$f$ to a set $X$.

We extend $\R$ to stores by $\Reg{h}{s}= \bigcup_{z\in
\dom{s}}\Reg{h}{s(z)}$. So, the operational-semantics let-rule with benign sharing
looks as follows:

$$
\infer[\textsc{OSLet}] {\opsem{s}{h;\
\closures,\, \closExt}{\lete{z}{e_1}{e_2}}{v}{h'}} {
\begin{array}{l}
\opsem{s}{h;\ \closures,\, \closExt}{e_1}{v_1}{h_1}\\
\opsem{s[z:=v_1]}{h_1;\ \closures,\, \closExt}{e_2}{v}{h'}\\
h | _{\Reg{h}{s|_{\FV{e_2}}}} =  h_1 | _{\Reg{h}{s|_{\FV{e_2}}}} \\
\end{array}
}
$$

This semantic condition is not statically typable in general, however,
there are type systems that approximate it,
e.g. linear typing and uniqueness typing \cite{Uniq}.
%Our type system can be used only in combination with one of these approximating type systems. 
Since in our language we have neither destructive pattern matching nor assignments,
benign sharing is guaranteed.

\begin{proof}
Let everywhere below $\opsem{s}{h;\ \closures}{e}{v}{h'}$
denote the operational-semantics judgement
$\opsem{s}{h;\ \closures, \,[\,]\,}{e}{v}{h'}$ with
the empty external closure.

In the proof we will use a few technical lemmata about heaps and model relations.
They are intuitively clear statements like ``extending a heap does not change a model relation'', so we do not prove them in
the main part of the paper. The interested reader may find the technical proofs in the appendix.

For the sake of convenience we will denote
$\eta(\val(\typesv))$ via $\typesv_{\eta\val}$, 
$\eta(\val(\Gamma))$ via
$\Gamma_{\eta\val}$ and $\val(D)$ via $D_{\val}$.

We prove the statement by induction on the height of the derivation
tree for the operational-semantics judgement. Given $\opsem{s}{h;\
\closures}{e}{v}{h'}$ fix some $\Gamma$, $\Sigma$, and $\typesv{}$, such
that $\judgements{D; \ \Gamma}{e}{\typesv{}}{\Sigma}$. Fix a valuation
$\val \in \FV{\Gamma} \cup \FV{\typesv{}} \rightarrow \Z$,
a type instantiation $\eta \in \FV{\Gamma} \cup \FV{\typesv{}} \rightarrow \typesvg{}$,
such that $D_{\val}$ and $\soundstore{\FV{e}}{\Gamma_{\eta\val}}{s}{h}$ hold.
We must show that $\soundval{v}{\typesv{}_{\eta\val}}{h'}$ holds.

\begin{description}

\item[OSIConst] In this case $v = c$ for some constant $c$ and $\typesv{} = \intt$.
Then, by the definition we have $\mdl{c}{h}{\intt}{c}$ and $\soundval{v}{\intt}{h'}$.

\item[OSNull] In this case $v = \nul$ and $\typesv{} = \st{\typesv{}'}{0}$ for some $\typesv{}'$.
Then, by the definition we have
$\mdl{\nul}{h}{\st{\typesv'_{\eta\val}}{0}}{\mbox{\texttt{[]}}}$.

\item[OSVar] From $D \vdash \tau = \tau'$ and $D_\val$
it follows that $\typesv_{\eta\val}=\typesv'_{\eta\val}$.
From this and
$$\soundstore{\FV{z}}{\Gamma\cup({z:\typesv'})_{\eta\val}}{h}{s}$$
\noindent it follows that

$$\soundval{s(z)}{\typesv{}_{\eta\val}}{h}$$

\item[OSCons]
In this case $e = \cons{\hdarg}{\tlarg}$, $\typesv{} = \st{\typesv{}'}{p}$,
$\{ \tv{\hdarg}{\typesv{}'}, \tv{\tlarg}{\st{\typesv{}'}{p'}} \} \subseteq \Gamma$ for some $\hdarg$, $\tlarg$, $p'$ and $\typesv{}'$. Since $\soundstore{\FV{e}}{\Gamma_{\eta\val}}{s}{h}$ there exist $w_\hd$ and $w_\tl$ such that
$\mdl{s(\hdarg)}{h}{\typesv'_{\eta\val}}{w_\hd}$ and
$\mdl{s(\tlarg)}{h}{(\st{\typesv'}{p'})_{\eta\val}}{w_\tl}$. From the
operational semantics judgement we have that $v = \ell$ for some
location $\ell \notin \dom{h}$, and $h'=h[\ell.\hd:=s(\hdarg),\
\ell.\tl:=s(\tlarg)]$.  Therefore,
$\mdl{h'.\ell.\hd}{h}{\typesv{}'_{\eta\val}}{w_\hd}$ and
$\mdl{h'.\ell.\tl}{h}{(\st{\typesv{}'}{p'})_{\eta\val}}{w_\tl}$ hold as well.
It is easy to see that $h=h'|_{\dom{h'}\setminus\{\ell\}}$.

Thus,

$$
\begin{array}{l}
\mdl{h'.\ell.\hd}{h'|_{\dom{h'}\setminus\{\ell\}}}{\typesv{}'_{\eta\val}}{w_\hd}\\
\mdl{h'.\ell.\tl}{h'|_{\dom{h'}\setminus\{\ell\}}}{(\st{\typesv{}'}{p'})_{\eta\val}}{w_\tl}
\end{array}
$$
\noindent
This and $D_\val$, which implies $p_\val=(p'+1)_\val$ gives
$\mdl{\ell}{h'}{(\st{\typesv{}'}{p})_{\eta\val}}{w_\hd::w_\tl}$ and
thus $\soundval{\ell}{\typesv{}_{\eta\val}}{h'}$.

\item[OSIfTrue] In this case $e = \ifte{x}{e_1}{e_2}$ for some $e_1$, $e_2$, and $x$. Knowing that
$\judgement{D; \  \Gamma}{e_1}{\typesv{}}$ we apply the induction
hypothesis to the derivation of $\opsem{s}{h;\
\closures}{e_1}{v}{h'}$, with the same $\eta$, $\val$ to obtain
$\soundstore{\FV{e_1}}{\Gamma_{\eta\val}}{s}{x} \implies
\soundval{v}{\typesv{}_{\eta\val}}{h'}$. From $\FV{e_1} \subseteq \FV{e}$,
$\soundstore{\FV{e}}{\Gamma_{\eta\val}}{s}{x}$, and lemma \ref{LEMMA:SubsetFV}
it follows that $\soundval{v}{\typesv{}_{\eta\val}}{h'}$.

\item[OSIfFalse] is similar to the true-branch.

\item[OSLetFun] The result follows from the induction hypothesis
for $$\opsem{s}{h;\ \closures[f:=(\bar{z} \times e_1)]}{e_2}{v}{h'},$$
\noindent
with $\judgements{D;\ \Gamma}{e_2}{\typesv}{\Sigma}$ and the same $\eta$, $\val$,
store $s$ and heap $h$.

\item[OSLet] In this case $e = \lete{z}{e_1}{e_2}$ for some $z$, $e_1$, and $e_2$ and we have $\opsem{s}{h;\ \closures}{e_1}{v_1}{h_1}$ and $\opsem{s[z:=v_1]}{h_1;\ \closures}{e_2}{v}{h'}$ for some $v_1$ and $h_1$.
We know that $\judgement{D; \  \Gamma}{e_1}{\typesv{}'}$, $z \not\in
\Gamma$ and $\judgement{D; \  \Gamma,
\tv{z}{\typesv{}'}}{e_2}{\typesv{}}$ for some $\typesv{}'$. Applying
the induction hypothesis to the first branch gives
$\soundstore{\FV{e_1}}{\Gamma_{\eta\val}}{s}{h} \implies
\soundval{v_1}{\typesv{}'_{\eta\val}}{h_1}$. Since $\FV{e_1} \subseteq
\FV{e_1} \cup (\FV{e_2} \setminus \{z\}) = \FV{e}$ and
$$\soundstore{\FV{e}}{\Gamma_{\eta\val}}{s}{h}$$
\noindent we have from lemma
\ref{LEMMA:SubsetFV} that \soundstore{\FV{e_1}}{\Gamma_{\eta\val}}{s}{h} holds
and hence we have \soundval{v_1}{\typesv{}'_{\eta\val}}{h_1}.

Now apply the induction hypothesis to the second branch to get
$$\soundstore{\FV{e_2}}{\Gamma_{\eta\val} \cup \{ \tv{z}{\typesv{}'_\val}
\}}{s[z := v_1]}{h_1} \implies \soundval{v}{\typesv{}_{\eta\val}}{h'}.$$
\noindent

Now we will show that the l.h.s. of the implication holds.
Fix some $z' \in \FV{e_2}$. If $z' = z$, then
\soundval{v_1}{\typesv{}'_{\eta\val}}{h_1} implies
\soundval{s[z:=v_1](z)}{\typesv{}'_{\eta\val}}{h_1}. If $z' \ne z$, then
$s[z:= v_1](z') = s(z')$. Because we know that sharing is benign,
$h|_{\Reg{h}{s(z')}}=h_1|_{\Reg{h}{s(z')}}$, applying lemma
\ref{LEMMA:EqFootprModels} and then \ref{LEMMA:SubsetFV} we have that
$\mdl{s(z')}{h}{\Gamma_{\eta\val}(z')}{w_{z'}}$ implies
$\mdl{s(z')}{h_1}{\Gamma_{\eta\val}(z')}{w_{z'}}$ implies
$\mdl{s[z:=v_1](z')}{h_1}{\Gamma_{\eta\val}(z')}{w_{z'}}$ and
thus \soundval{s[z:=v_1](z')}{\Gamma_{\eta\val}(z')}{h_1}. Hence,
$\soundstore{\FV{e_2}}{\Gamma _{\eta\val}\cup \{\tv{z}{\typesv{}'}_{\eta\val}
\}}{s[z:=v_1]}{h_1}$. Therefore, \soundval{v}{\typesv{}_{\eta\val}}{h'}.

\item[OSMatch-Nil] In this case $e = \matchline{l}{\hdarg}{\tlarg}{e_1}{e_2}$
for some $l$, $\hdarg$, $\tlarg$, $e_1$, and $e_2$.
The typing context has the form
$\Gamma=\Gamma'\cup\{\tv{l}{\st{\typesv{}'}{p}}\}$ for some
$\Gamma'$, $\typesv{}'$, $p$. The operational-semantics derivation
gives $s(l) = \mathtt{NULL}$, hence validity for $s(l)$ gives
$\tv{l}{\st{\typesv{}'}{0}}$ and thus $\val(p)=0$. From the typing
derivation for $\judgement{D; \  \Gamma}{e}{\typesv{}}$ we then know
that $\judgement{p=0,\,D; \  \Gamma'}{e_1}{\typesv{}}$. Applying the induction
hypothesis, with $p=0\land D$ then yields $\soundstore{\FV{e_1}}{\Gamma'_{\eta\val}}{s}{h}
\implies \soundval{v}{\typesv{}_{\eta\val}}{h'}$. From $\FV{e_1} \subseteq
\FV{e}$, \soundstore{\FV{e}}{\Gamma_{\eta\val}}{s}{h}, $\val(p)=0 \land D_{\val}$ and lemma
\ref{LEMMA:SubsetFV} it follows that $\soundval{v}{\typesv{}_{\eta\val}}{h'}$.

\item[OSMatch-Cons] In this case
$e = \matchline{l}{\hdarg}{\tlarg}{e_1}{e_2}$ for some $l$, $\hdarg$, $\tlarg$, $e_1$, $e_2$.
The typing context has the form
$\Gamma=\Gamma'\cup\{\tv{l}{\st{\typesv{}'}{p}}\}$ for some
$\Gamma'$, $\typesv{}'$, $p$. From the operational semantics we know
that $h.s(l).\hd = v_\hd$ and $h.s(l).v_\tl$ for some $v_\hd$ and
$v_\tl$ -- that is $s(l)\ne \nul$ -- hence, due to validity of
$s(l)$, we have $\tv{l}{\st{\typesv{}'}{p}}$ for some $\typesv{}'$
and $\val(p) \geq 1$. From the typing derivation of $e$ we obtain
that $\judgement{D; \  \Gamma', \
\tv{l}{\st{\typesv{}'}{p}},\ \tv{\hdarg}{\typesv{}'}, \
\tv{\tlarg}{\st{\typesv{}'}{p - 1}}}{e_2}{\typesv{}}$
 Applying the induction hypothesis yields
$$
\begin{array}{l}
\soundstore{\FV{e_2}}{
\left\{
\begin{array}{l}
\Gamma'_{\eta\val} \cup \\
\cup\{\tv{l}{(\st{\typesv{}'}{p})_{\eta\val}}\}\cup\\
\cup\{\tv{\hdarg}{\typesv{}'_{\eta\val}}\}\cup\\
\cup\{\tv{\tlarg}{\st{\typesv{}'}{e}}\}_{\eta\val}\}
\end{array}
\right\}
}
{s
\left[
\begin{array}{l}
\hdarg:=v_\hd, \\
\tlarg:=v_\tl
\end{array}
\right]
}{h} \implies \\
 \implies \soundval{v}{\typesv{}_{\eta\val}}{h'}.
\end{array}
$$
\noindent
Show that the l.h.s. of the implication holds.
From $\soundstore{\FV{e}}{\Gamma_{\eta\val}}{s}{h}$,
$(\FV{e_2}\setminus \{\hdarg,\ \tlarg\}) \subseteq \FV{e}$, and lemma \ref{LEMMA:SubsetFV}
we obtain $$\soundstore{\FV{e_2}\setminus \{\hdarg,\ \tlarg\}}{\Gamma_{\eta\val}}{s}{h}$$
Due to $\hdarg, \tlarg \not\in \dom{s}$
we can apply lemma \ref{LEMMA:ChangeStore} and get
$$\soundstore{\FV{e_2}\setminus\{\hdarg,\ \tlarg\}}{\Gamma_\val}{s[\hdarg:=v_\hd, \tlarg:=v_\tl]}{h}$$

From the validity
$\mdl{s(l)}{h}{(\st{\typesv'}{p})_{\eta\val}}{w_\hd::w_\tl}$, and
obvious $\val(p-1)=\val(p)-1$ the validity
of $v_\hd$ and $v_\tl$ follows:
$\mdl{v_\hd}{h}{\typesv'_{\eta\val}}{w_\hd}$,
$\mdl{v_\tl}{h}{(\st{\typesv'}{p-1})_{\eta\val}}{w_\tl}$.

Now
$\soundstore{\FV{e_2}}{\Gamma_{\eta\val}  \cup \{ \tv{\hdarg}{\typesv'},
\tv{\tlarg}{\st{\typesv'}{p - 1}} \}_{\eta\val}}{s[\hdarg:=v_\hd,\ \tlarg:=v_\tl]}{h}$  and, hence,
$$\soundval{v}{\typesv{}_{\eta\val}}{h'}.$$

\item[OSFunApp]

We want to apply the induction assumption to
$$\opsem{[z'_1 := v_1, \ldots, z'_n := v_n]}{h;\ \closures}{e_f}{v}{h'}.$$

Let $\Sigma(f)=\typesvs_1 \times \ldots \times \typesvs_n \rightarrow \tau'$,
the types $\typesvs_i$ of the formal parameters be $\st{\ldots \st{\alpha_i}{n_{ik_i}}\ldots} {n_{i1}}$ respectively,
and the types $\Gamma(z_i)$ of the actual parameters $z_i$ be $\st{\ldots \st{\tau_{\alpha_i}}{p_{ik_i}}\ldots }{p_{i1}}$,
where $1 \leq i \leq n$.
According to the typing rule
$D \vdash \tau = \tau'[\ldots \alpha_i:=\tau_{\alpha_i}\ldots]\,[\ldots n_{ij}:=p_{ij}\ldots]$.

Since all called in $e$ functions are defined via $\mathsf{letfun}$,
there must be a node in the derivation tree
with $\judgements{\ttt,\ z'_1:\typesvs_1,\ldots,\,z'_n:\typesvs_n}{e_f}{\typesv'}{\Sigma}$.

We take $\eta'$ and $\val'$, such that
\begin{itemize}
\item $\eta'(\alpha_i)=\eta(\typesv_{\alpha_i})$,
\item  $\val'(n_{ij})=\val(p_{ij})$.
\end{itemize}
\noindent Thus, $\Gamma(z_i)_{\eta\val}=(\typesvs_i)_{\eta'\val'}$, since
{\small
$$(\typesvs_i)_{\eta'\val'}=\st{\ldots \st{\eta'(\alpha_i)}{\val'(n_{ik_i})}\ldots} {\val'(n_{i1})}=
\st{\ldots \st{\eta(\tau_{\alpha_i}}{\val(p_{ik_i})}\ldots} {\val(p_{i1})}=
(\Gamma(z_i))_{\eta\val}$$
}

$\ttt$ (``no conditions'') holds trivially on $\val'$. From the induction assumption we have

$$
\begin{array}{l}
\soundstore{(z'_1,\ldots\,z'_n)}
{(z'_1:\typesvs_{1\,\eta'\val'},\ldots,\,z'_n:\typesvs_{n,\ \eta'\val'})}
{[z'_1 := v_1,\ldots, z'_n := v_n]}{h} \\
\implies
\soundval{v}{\typesv'_{\eta'\val'}}{h'}
\end{array}
$$

Show that the l.h.s. holds.
From $\soundstore{\FV{e}}{\Gamma_{\eta\val}}{s}{h}$ we have validity of
the values of the actual parameters:
$\mdl{v_i}{h}{\Gamma_{\eta\val}(z_i)}{w_i}$ for some $w_i$, where $1\leq i
\leq k$. Since $\Gamma_{\eta\val}(z_i)=(\typesvs_i)_{\eta'\val'}$,
the left-hand side of the implication holds,
and one obtains $\soundval{v}{\typesv'_{\eta'\val'}}{h'}$.

Now,  $D_\val$  implies $
\typesv_{\eta\val}=
\typesv'[\ldots\alpha_i:=\typesv_{\alpha_i}\ldots][\ldots n_{ij}:=p_{ij}\ldots]_{\eta\val}$.
Then from the construction for $\eta'$ and $\val'$ it follows
$\typesv'[\ldots\alpha_i:=\typesv_{\alpha_i} \ldots][\ldots n_{ij}:=p_{ij}\ldots]_{\eta\val}=
\typesv'[\ldots\alpha_i:=\eta(\typesv_{\alpha_i})\ldots][\ldots n_{ij}:=\val(p_{ij})\ldots]=
\typesv'_{\eta'\val'}$

Thus, we have $\soundval{v}{\typesv_{\eta\val}}{h'}$.

\end{description}
\end{proof}

\subsection{Completeness of the type system}
\label{completeness}

Recall, that the system we consider is constituted from zero- and first-order types, typing rules,
and Peano arithmetic extended to rationals.

The system is not complete in the class of shapely function definitions: there are shapely functions for which
shapeliness may not be proved by means of the typing rules and the arithmetic. In other words,
their annotated type cannot be checked by the system.  For instance consider the following
expression $e$:

$$
\begin{array}{l}
 \mathsf{let}\,l =f(z_1,\ldots,\,z_k)\;\mathsf{in}\\
\quad \quad \lete{x}{\fapp{\mathsf{length}}{l}}{\ifte{x}{\cons{1}{\nil}}{\nil}}\\
\end{array}
$$
\vspace{0.1cm}
\noindent
where $\mathsf{length}(x)$ returns the length of list $x$.
Let $p_f(n_1,\ldots,\,n_k)$ denote the polynomial size dependency
for the shapely function definition $f$. If $f$ never outputs an empty list, then
the expression  $e$ defines a shapely function, with a polynomial size dependency
$p(n_1,\ldots,\,n_k)=1$. Otherwise $p(n_1,\ldots,\,n_k)=0$ when $f$ outputs $\nil$.
Suppose, there exists a procedure, that for any instantiation of
the expression with $f$, produces its shapely type, when it is shapely, or
rejects it otherwise. Then this procedure is capable to solve
\textit{10th Hilbert problem}: whether there exists a general procedure that given a
polynomial with integer coefficients decides if this polynomial has
natural roots or not.\footnote{The original formulation is about
integer roots. However, both versions are equivalent and logicians
consider natural roots.} Matiyasevich \cite{Mat91} has shown that
such a procedure does not exist. A similar problem is connected with $\mathsf{match}$-construct.

We study constructions like above in more detail
in  section \ref{typechecking:undec}, devoted to decidability of type-checking.
In particular, in lemma \ref{LEMMA:sizepolyn}
we show, that for any integer polynomial $q$ there is a shapely function definition $f$
such that its size polynomial $p_f(n_1,\ldots,n_k)$ is equal to $q^2(n_1,\ldots,n_k)$
and thus $p_f$ has roots if and only if $q$ has roots.

In fact, this example shows that not only our system, but any system using
integer arithmetic, is not complete in the class of shapely function definitions.

%%%%%%%%%%%%%%%%%%%%%%%%%%%%%%%%%%%%%%%%%%%%%%%%%%%%%%%%%%%%%%%%%%%%%%%%

\section{Type Checking}
\label{typechecking}

Because for every syntactic construction there is only one typing
rule that is applicable, type checking is straightforward.
The procedure parses a given function body and reduces to
proving equations for rational polynomials. Consider some examples.

\subsection{Examples}
\label{typechecking:examples}

\subsubsection{Cartesian product} In the
introduction, the Cartesian product was implemented using a
``sugared'' syntax. Here, we present the $\cprod$ function in
the language defined in section \ref{language}.
\vspace{0.1cm}

$\begin{array}{@{}l@{\ }l@{\ }l@{}}
\mathsf{letfun}\ \cprod(l_1, \,l_2) & = &
\match{l_1}{\hdarg}{\tlarg}{\nil}{\begin{array}[t]{@{}l@{\ }l@{\
}l@{}} \mathsf{let}\ l' & = & \fapp{\pairs}{\hdarg, \,l_2} \\
\mathsf{in\
let}\ l'' & = & \cprod(\tlarg, y) \\
\multicolumn{3}{@{}l@{}}{\mathsf{in}\ \fapp{\append}{l', \,l''}}\\
 \end{array}}\\
\mathsf{in} \, \ldots \\
\end{array}$
\vspace{0.1cm}

Functions $\pairs$ and $\append$ are assumed to be defined in the
core syntax of the language as well. Hence, $\Sigma$ contains the following types:
\vspace{0.1cm}

$
\begin{array}{lll}
\Sigma(\append) & = & \st{\alpha}{n} \times \st{\alpha}{m}
\rightarrow \st{\alpha}{n + m}\\
\Sigma(\pairs) & = & \alpha \times \st{\alpha}{m} \rightarrow
\st{\st{\alpha}{2}}{m} \\
\Sigma(\cprod) & = & \st{\alpha}{n} \times \st{\alpha}{m}
\rightarrow \st{\st{\alpha}{2}}{n * m}
\end{array}
$
\vspace{0.1cm}

\noindent To type-check $\cprod{} : \st{\alpha}{n} \times
\st{\alpha}{m} \rightarrow \st{\st{\alpha}{2}}{n * m}$ means to
check:
\vspace{0.1cm}

\begin{tabular}{l@{\quad}l} \textsc{Prove:} &
$\judgement{\tv{l_1}{\st{\alpha}{n}},
\tv{l_2}{\st{\alpha}{m}}}{e_{\cprod}}{\st{\st{\alpha}{2}}{n*m}},$
\end{tabular}
\vspace{0.1cm}

\noindent where $e_{\cprod}$ is the function body. This is demanded
by the first branch of the \textsc{LetFun}-rule. Applying
the \textsc{Match}-rule branches the proof:
\vspace{0.1cm}

\begin{tabular}{l@{\quad}l}
\textsc{Nil:} & $\judgement{n=0;\ \tv{l_2}{\st{\alpha}{m}}}{\nil}{\st{\st{\alpha}{2}}{n*m}}$\\
\end{tabular}

\begin{tabular}{l@{\quad}l}
\textsc{Cons:} &
$\begin{array}[t]{@{}l@{}}\judgement{\tv{\hdarg}{\alpha}, \
\tv{l_1}{\st{\alpha}{n}},\ \tv{\tlarg}{\st{\alpha}{n - 1}}, \
\tv{l_2}{\st{\alpha}{m}}}{\\
\left.
\begin{array}{@{\quad\quad\quad\quad\quad\quad\quad\quad\quad\quad\quad}l@{\ }l@{\ }l@{}}
\mathsf{let}\ l' & = & \fapp{\pairs}{\hdarg, l_2} \\
\mathsf{in\
let}\ l'' & = & \cprod(\tlarg, l_2) \\
\multicolumn{3}{@{\quad\quad\quad\quad\quad\quad\quad\quad\quad\quad\quad}l@{}}{\mathsf{in}\
\fapp{\append}{l', \,l''}}
 \end{array}
\right\}
}
{\st{\st{\alpha}{2}}{n * m}}\end{array}$
\end{tabular}
\vspace{0.1cm}

\noindent Applying the \textsc{Nil}-rule to the \textsc{Nil}-branch
gives $n=0 \vdash n * m = 0$, which is trivially true. The
\textsc{Cons}-branch is proved by applying the \textsc{Let}-rule
twice. This results in three proof
obligations:\\

\begin{tabular}{l@{\quad}l}
\textsc{Bind-l':} & $\judgement{\tv{\hdarg}{\alpha}, \
\tv{l_2}{\st{\alpha}{m}}}{\fapp{\pairs}{\hdarg, \,l_2}}{\tau_1}$\\
\textsc{Bind-l'':} & $\judgement{\tv{\tlarg}{\st{\alpha}{n - 1}}, \
\tv{l_2}{\st{\alpha}{m}}}{\fapp{\cprod}{\tlarg, \,l_2}}{\tau_2}$\\
\textsc{Body:} & $\judgement{\tv{l'}{\tau_1},
\tv{l''}{\tau_2}}{\fapp{\append}{l', l''}}{\st{\alpha}{n * m}}$
\end{tabular}\\

\noindent From the applications of the \textsc{FunApp}-rule to
\textsc{Bind-l'} and \textsc{Bind-l''} it follows that $\tau_1$
should be $\st{\st{\alpha}{2}}{m}$ and $\tau_2$ should be
$\st{\st{\alpha}{2}}{(n - 1)
* m}$. Lastly,
applying the \textsc{FunApp}-rule to  \textsc{Body} yields the proof
obligation $\vdash n*m = m+ (n - 1) * m$, which is true in the
axiomatics.

\subsubsection{Example with negative coefficients}

In contrast to the system presented by Vasconcelos and Hammond
\cite{Vas03}, where only subtraction of constants are allowed,
our system allows negative coefficients in size
expressions. Of course, this is only a valid size expression
(yielded by a total function) if the
polynomial maps naturals into naturals. Here, we
show an example where this is the case.  Given two lists, the
function ``subtracts'' elements from lists simultaneously, till one
of the lists is empty. Then, the Cartesian product of the remaining list
with itself is returned:
\vspace{0.1cm}

$\func{\sqdiff}
{(l_1, \ l_2)} { \match{l_1}{\hdarg}{\tlarg} {\cprod(l_2, \, l_2)} {
\match{l_2}{\hdarg'}{\tlarg'} {\cprod(l_1, \, l_1)} {
\sqdiff\ (\tlarg,\, \tlarg') } } }.$
\vspace{0.1cm}

\noindent
It can be checked that $\sqdiff$ has type
$\st{\alpha}{n}\times \st{\alpha}{m} \rightarrow
\st{\st{\alpha}{2}}{(n^2+m^2-2*n*m)}$.

\subsection{Type checking in general is undecidable (even for total function definitions)}
\label{typechecking:undec}

In the examples above, type checking ends up
with a set of entailments like $n=0\vdash n*m=0$ or $\vdash
n*m = m+m*(n-1)$ that have to hold. However, we show that there is no
procedure to check all possible entailments that may arise. To
make type checking decidable, we formulate a syntactical condition
on the structure of a program expression that ensures the
entailments have a trivial form. The condition  is as follows:
\textit{given a function body, allow pattern-matching only on the function parameters
or variables bound to them by other pattern-matchings}. Thus,
we prohibit expressions like

$
\lete{l}{f_0(x_1,\,\ldots, \,x_k)}
{\begin{array}[t]{l}
\match{l}{\hdarg}{\tlarg}{e_1}{e_2}
\end{array}
}
$

\noindent Pattern-matching like

$
\match{l}{\hdarg}{\tlarg}{e_1}
{\begin{array}[t]{l}
\match{\tlarg}{\hdarg'}{\tlarg'}{e'_1}{e_2}
\end{array}
}
$

\noindent is allowed. Below we explain the reason for this restriction.

We show that the existence of a procedure that checks all
possible entailments at the end of type checking
is reduced to Hilbert's tenth problem.  Type checking is reducible to
a procedure for checking if arbitrary size
polynomials of shapely functions have natural roots. It turns out
that the latter is the same as finding natural roots of integer
polynomials.

Consider the following expression $e_{H}$ with free variables $l_1,\,\ldots, \,l_k$:
\vspace{0.1cm}

$
\lete{l}{f_0(l_1,\,\ldots, \,l_k)}
{\begin{array}[t]{l}
\match{l}{\hdarg}{\tlarg}
{f_1(l_1,\,\ldots, \,l_k)}{f_2(l_1,\,\ldots, \,l_k)}
\end{array}
}
$
\vspace{0.1cm}

\noindent We check if it has the type $\st{\alpha_1}{n_1} \times
\ldots\times \st{\alpha_k}{n_k} \longrightarrow
\st{\alpha}{p(n_1,\ldots,\,n_k)}$, given that
$f_i:\st{\alpha_1}{n_1} \times \ldots\times \st{\alpha_k}{n_k}
\longrightarrow \st{\alpha}{p_i(n_1,\ldots,\,n_k)}$, with
$i=0,\,1,\,2$. Then at the end of the type checking procedure we
obtain the entailment:
\vspace{0.1cm}

$p_0(n_1,\ldots,\,n_k)=0\vdash p_1(n_1,\ldots,\,n_k)=p(n_1,\ldots,\,n_k).$
\vspace{0.1cm}

Even if $p$ and $p_1$ are not equal, say $p_1=0$ and $p=1$, it does
not mean that type checking fails; it might not be possible to enter
the ``bad'' nil-branch. To check if the nil-branch is entered means
to check if $p_0=0$ has a solution in natural numbers.
Thus,  a type-checker for any size polynomial $p_0$ must be able to
decide if it has natural roots or not.

Checking if any size polynomial has roots in natural numbers, is as difficult
as checking whether an arbitrary polynomial has roots or not.
First, we prove the following lemma.

\begin{lemma}
\label{LEMMA:sizepolyn}
For any polynomial $q$ there is a total shapely function definition $f$ such that
its size dependency $p_f(n_1,\ldots,\,n_k)$ is equal to $q^2(n_1,\ldots,\,n_k)$.
\end{lemma}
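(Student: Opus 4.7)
The plan is to reduce the construction to a small set of combinators whose types are already known. First I would write $q = P - N$, splitting the polynomial into its positive part $P$ (the monomials of $q$ with positive coefficient) and the absolute value $N$ of its negative part. Then $P$ and $N$ are polynomials in $n_1,\ldots,n_k$ with non-negative integer coefficients, and as polynomials $q^2 = (P-N)^2$.

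The auxiliary step is to prove that for \emph{any} polynomial $R$ with non-negative integer coefficients there is a total shapely function
\[
f_R : \st{\alpha_1}{n_1}\times \cdots \times \st{\alpha_k}{n_k}\;\rightarrow\;\st{\intt}{R(n_1,\ldots,n_k)}
\]
definable in the core language of Section~\ref{language}. I would prove this by induction on $R$, viewed as a sum of monomials, using three building blocks: (a) a size-preserving ``erase content'' function $g_i : \st{\alpha_i}{n_i}\rightarrow\st{\intt}{n_i}$, defined by a single recursive $\mathsf{match}$ that replaces each head by $0$; (b) a flat-product function $\mathsf{mult} : \st{\intt}{n}\times\st{\intt}{m}\rightarrow\st{\intt}{n*m}$, defined by recursion on the first argument which appends a copy of the second per step (the same shapely recursion shape used for $\cprod$ in Section~\ref{typechecking:examples}); and (c) constant lists $\cons{0}{\cdots\cons{0}{\nil}}$ of length $c\in\N$ for each non-negative integer coefficient. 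Sums of monomials are then assembled with $\append$. Every ingredient is total and obeys the shapely recursion pattern, so $f_R$ is total and shapely, and its type is straightforwardly checkable by the rules of Section~\ref{typing rules}.

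The desired function is then obtained in a single step:
\[
f(l_1, \ldots, l_k) \;=\; \sqdiff\bigl(f_P(l_1, \ldots, l_k),\; f_N(l_1, \ldots, l_k)\bigr).
\]
Since $\sqdiff$ was already type-checked with type $\st{\alpha}{n}\times\st{\alpha}{m}\rightarrow \st{\st{\alpha}{2}}{(n-m)^2}$, applying \textsc{FunApp} (and wrapping everything in \textsc{LetFun}s that introduce $g_i$, $\mathsf{copy}$, $\mathsf{mult}$, $f_P$, $f_N$, and $f$) yields $f$ the type
\[
\st{\alpha_1}{n_1}\times \cdots \times \st{\alpha_k}{n_k}\;\rightarrow\;\st{\st{\intt}{2}}{(P-N)^2},
\]
so the outer size dependency of $f$ is $(P-N)^2 = q^2$, as required. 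Totality of $f$ is inherited from totality of each building block.

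The conceptual obstacle is that $q^2$, when expanded as a sum of monomials, can itself carry negative coefficients (as witnessed by the $\sqdiff$ example with $(n-m)^2 = n^2 - 2nm + m^2$), so one cannot directly synthesize $f$ from $\append$ and $\mathsf{mult}$, since these combinators generate only polynomials with non-negative coefficients on their sizes. The whole point of the construction is therefore to push \emph{all} of the cancellation into a single application of $\sqdiff$, keeping the auxiliaries $f_P$ and $f_N$ entirely within the non-negative-coefficient fragment while leaving the subtraction to the already-shapely $\sqdiff$ combinator.
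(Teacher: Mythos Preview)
Your proof is correct and follows essentially the same route as the paper: decompose $q$ as a difference $P-N$ of polynomials with non-negative coefficients, realize $P$ and $N$ as size dependencies via compositions of $\append$ and a flat-product combinator (the paper uses $\mathsf{copyfirst}$, you call it $\mathsf{mult}$), and then feed the two results into $\sqdiff$. Your write-up is more detailed than the paper's one-paragraph sketch (in particular you spell out the erase-content coercion and the inductive realization of monomials), but the underlying construction is identical.
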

\proof
First, note that any polynomial $q$ may be presented as the
difference $q_1-q_2$ of two polynomials with non-negative
coefficients\footnote{If $q=\Sigma a_{i_1,\ldots,i_k}x_1^{i_1}\ldots
x^{i_k}_k$, then $q_1=\Sigma_{a_{i_1,\ldots,i_k}\geq 0}
a_{i_1,\ldots,i_k}x_1^{i_1}\ldots x^{i_k}_k$, and
$q_2=\Sigma_{a_{i_1,\ldots,i_k}< 0}
|a_{i_1,\ldots,i_k}|x_1^{i_1}\ldots x^{i_k}_k$.}. So,
$q^2=(q_1-q_2)^2$ is a size polynomial, obtained by superposition of
$\sqdiff$ with $q_1$ and $q_2$.  Here $q_1$ and $q_2$ are size
polynomials with positive coefficients for corresponding
compositions of $\append$ and
$\mathsf{copyfirst}\,:\,\st{\alpha}{n}\times\st{\alpha}{m} \rightarrow \st{\alpha}{n*m}$
(see subsection \ref{motivation})
functions.\qed

Summing up the constructions above we obtain the following statement:
\begin{lemma}
\label{LEMMA:undec}
If there exists a type-checker that for any function definition and its type annotation
is able to accept or reject the annotated type correctly, then there exists a procedure
that for any integer polynomial $q(n_1,\,\ldots,\,n_k)$ decides if it has natural roots
or not.
\end{lemma}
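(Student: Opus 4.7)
The plan is to give a many-one reduction from the question ``does $q(n_1,\ldots,n_k)=0$ have a solution in $\N^k$?'' to the type-checking problem, using the template expression $e_H$ introduced just above the lemma together with the squaring trick of Lemma \ref{LEMMA:sizepolyn}. A hypothetical correct type-checker then yields a decision procedure for natural roots of arbitrary integer polynomials, and negating its verdict completes the reduction.

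Concretely, given $q$ I first apply Lemma \ref{LEMMA:sizepolyn} to obtain a total shapely function $f_0$ of type $\st{\alpha_1}{n_1}\times\cdots\times\st{\alpha_k}{n_k}\rightarrow\st{\alpha}{q^2(n_1,\ldots,n_k)}$; since $q^2(\bar{n})=0$ iff $q(\bar{n})=0$ over $\N^k$, the natural roots of $p_{f_0}=q^2$ coincide with those of $q$. I then fix two trivial shapely auxiliaries with distinct constant output sizes, for instance $f_1:\cdots\rightarrow\st{\intt}{0}$ that returns $\nil$ and $f_2:\cdots\rightarrow\st{\intt}{1}$ that returns $\cons{0}{\nil}$; both are directly expressible in the core syntax of section \ref{language} and pass type-checking at the stated types. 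Plugging these into $e_H$ yields a complete function definition, which I annotate with the candidate first-order type $\st{\alpha_1}{n_1}\times\cdots\times\st{\alpha_k}{n_k}\rightarrow\st{\intt}{1}$.

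Next I trace the type-checker on this input. The syntax-directed application of \textsc{Let}, \textsc{Match} and \textsc{FunApp}, exactly in the style of Subsection \ref{typechecking:examples}, reduces type-checking to two arithmetical obligations. The cons-branch invokes \textsc{FunApp} on $f_2$ and produces the trivially valid $\vdash 1=1$. The nil-branch, in the hypothesis context enriched by the \textsc{Match}-rule with the size of the scrutinee $l$ equated to $0$, contributes $q^2(\bar{n})=0\vdash 1=0$. By the definition of the system's arithmetical entailment (universal quantification of size variables over $\N$), this is valid precisely when its premise is unsatisfiable in $\N^k$, i.e., when $q$ has no natural roots. Hence a correct type-checker accepts the annotation iff $q$ has no natural roots, and this is exactly the desired reduction.

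The only delicate point is to make sure that the nil-branch obligation really takes the form $q^2(\bar{n})=0\vdash 1=0$ rather than something weaker: this depends on the $\mathsf{match}$ scrutinising the $\mathsf{let}$-bound result of $f_0(\bar{l}\,)$ rather than a parameter, which is precisely the pattern of expression forbidden by the syntactic restriction of Subsection \ref{typechecking:dec}. Once this observation is in place, the remainder is bookkeeping around Lemma \ref{LEMMA:sizepolyn} and the mechanics of the typing rules, and the lemma follows.
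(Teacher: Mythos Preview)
Your proof is correct and follows essentially the same approach as the paper: instantiate $e_H$ with $f_0$ having size polynomial $q^2$ (via Lemma~\ref{LEMMA:sizepolyn}), take $f_1$ and $f_2$ to be the constant $\nil$ and a singleton list respectively, and observe that the annotation $p\equiv 1$ is accepted by a correct type-checker iff $q$ has no natural roots. Your write-up is in fact more explicit than the paper's about tracing the typing derivation down to the two arithmetical obligations and about the role of the $\mathsf{let}$-bound scrutinee, but the argument is the same.
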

\proof
Suppose that such type checker exists. Consider the expression $e_H$ above with
$f_0$, $f_1$, $f_2$ defined as follows.
Using lemma \ref{LEMMA:sizepolyn}, construct
a function definition $f_0$ that has a size dependency $q^2(n_1,\,\ldots,\,n_k)$.
Now let $f_1$ be  defined by the expression $\nil$ and
let $f_2$ be  defined by $\cons{1}{\nil}$.

The type checker accepts $e_H$ with the type annotation
$p\equiv 1$ if and only if the $\nil$-branch is not entered, that is if and only if
$q^2(n_1,\,\ldots,\,n_k)$ has no roots. Trivially, $q^2(n_1,\,\ldots,\,n_k)$ has roots
if and only if $q(n_1,\,\ldots,\,n_k)$ does.\qed

So, existence of a general type-checker reduces to solving Hilbert's
tenth problem. Hence, type checking is undecidable.

We can show this in a more constructive way using the stronger form
of the undecidability of Hilbert's tenth problem:
for any type-checking procedure $\mathcal{I}$ one can construct
a program expression, for which $\mathcal{I}$ fails to give the correct answer.
We will use the result of Matiyasevich who has proved the following: there is a one-parameter
Diophantine equation $W(a, \,n_1,\ldots,\,n_k)=0$ and an algorithm which for given
algorithm $\mathcal{A}$ produces a number $a_\mathcal{A}$ such that
$\mathcal{A}$ fails to give the correct answer for the question
whether equation $W(a_\mathcal{A}, \,n_1,\ldots,\,n_k)=0$ has a solution in $(n_1,\ldots,\,n_k)$.
So, if in the example above one takes the function $f_0$ such that
its size polynomial $p_0$ is the square of the $W(a_\mathcal{I}, \,n_1,\ldots,\,n_k)$
and $p=1$, $p_1=0$, then the type checker $\mathcal{I}$ fails to give
the correct answer for $e_{H}$.

An anonymous reviewer pointed out  that the construction from lemma
\ref{LEMMA:sizepolyn} demonstrates a problem with
real arithmetic, when it is used to check numerical entailments, generated by the type checker.
Suppose we want to omit the syntactic restriction and type check the expression
$e_H$ where the size dependency for $f_0$ is $p_0(n)=(n^2-2)^2$. A real-arithmetic-based version
of the checker rejects $e_H$, since there is a real root for $p_0$ and
in this abstract interpretation the $\nil$-branch with $1=0$ must be considered.
In fact, the expression is well-typed with annotation $p\equiv 1$, since
there is no natural roots for $p_0$ and the $\nil$-branch is never entered.

For
checking
a particular expression it is sufficient to solve the
corresponding sets of Diophantine equations. Type checking depends
on decidability of Diophantine equations from $D$ in any entailment
$D\vdash p=p'$, where $p$ is not equal to $p'$ in general (but might
be if the equations from $D$ hold). If we have a solution for $D$ we
can substitute this solution in $p$ and $p'$. If a solution over
variables $n_1,\ldots,n_m,\,n_{m+1},\ldots, n_k$ is a set of
equations $n_i=q_i(n_{m+1},\ldots,\, n_{k})$ where $1\le i\le m$, then the expressions for $n_i$ can be substituted into $p=p'$ and one
trivially checks the equality of the two polynomials over
$n_{m+1},\ldots,\ n_{k}$ in the axiomatics of the rational field.
Recall that two polynomials are equal if and only if the coefficient
at monomials with the same degrees of variables are equal.

\subsection{Syntactical condition for decidability}
\label{typechecking:dec}

The simplest way to ensure decidability is to require that all equations in $D$
have the form $n=c$, where $c$ is a constant. This would in particular exclude the example $e_{H}$ from above. As we will see below, this requirement can be fulfilled by imposing
the syntactical condition for program expressions,
\textit{prohibiting pattern matching on variables other than function parameters
and bounded to them by other pattern matchings.}

It is easy to see
that any function body that satisfies the syntactic condition
may be encoded in the language defined by the \textit{refined grammar}
where the $\mathsf{let}$-construct in $e$ is replaced by
$\lete{\xarg}{b}{e_\mathit{nomatch}}$:

$$
\begin{array}{llll}
\basics & b & ::= &  c \ |\  x\,\mathsf{binop}\,y\
| \ \nil \ | \ \cons{z}{l} \ |\ f(z_1, \ldots, z_n)\\
\exprs & e & ::= &  \ b\\
& & &  |\ \lete{z}{b}{e_{\mathit{nomatch}}}\\
& & &  |\ \ifte{\xarg}{e_1}{e_2}\\
& &  & | \ \matchshort{l}{\hdarg}{\tlarg}{e_1}{e_2} \\
& & & | \ \letfune{f}{z_1, \ldots, z_n}{e_1}{e_2}\\
& &  & | \ \letexterne{f}{z_1, \ldots, z_n}{e_1} \\
\end{array}
$$

\noindent with
$$
\begin{array}{llll}
& e_{\mathit{nomatch}} & := & b  \\
& &    &  | \ \lete{z}{b}{e'_\mathit{nomatch}}\\
& &    &  | \ \ifte{x}{e'_\mathit{nomatch}}{e''_\mathit{nomatch}}\\
& & & | \ \letfune{f}{z_1, \ldots, z_n}{e}{e'_\mathit{nomatch}} \\
& & & | \ \letexterne{f}{z_1, \ldots, z_n}{e'_\mathit{nomatch}}\\
\end{array}
$$

\noindent The grammar is more restrictive than the syntactic condition. However,
any function body that satisfies the condition may be encoded in this grammar.
For instance, an expression
\vspace{0.1cm}

$
\lete{l'}{f_0(z)}
{\begin{array}[t]{l}
\match{l}{\hdarg}{\tlarg}
{f_1(l, \, l')}{f_2(l,\ l')}
\end{array}
}
$
\vspace{0.1cm}

\noindent and the expression
\vspace{0.1cm}

$
\match{l}{\hdarg}{\tlarg}
{\lete{l'}{f_0(z)}{f_1(l, \, l')}}
{\lete{l'}{f_0(z)}{f_2(l,\, l')}}
$
\vspace{0.1cm}

\noindent define the same map of lists.

For this reason we call the refined grammar the ``no-let-before-match'' grammar,
and roughly refer to the syntactic conditions as to the ``no-let-before-match'' condition.
The demo version of the type checker, accessible from \texttt{www.aha.cs.ru.nl},
uses the ``no-let-before-match'' grammar.

\begin{thm}
Let a program expression $e$ satisfy the refined grammar, and
let us check the judgement $\judgements{\ttt;\
x_1:\tau^o_1,\ldots,\,x_k:\tau^o_k}{e}{\tau}{\Sigma}$.
Then, at the end of the type-checking procedure
 one has to
check entailments of the form
\vspace{0.1cm}

$D\,\vdash\,p'=p,$
\vspace{0.1cm}

\noindent where $D$ is a set of equations of the form
$n-c=0$ for some $n\in\FVS{\tau^o_1\times \ldots \times \tau^o_k}$ and constant $c$
and $p$, $p'$ are polynomials in $\FVS{\tau^o_1\times \ldots \times \tau^o_k}$.
\end{thm}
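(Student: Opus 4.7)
The plan is to proceed by structural induction on the typing derivation of
$\judgements{D';\ \Gamma'}{e'}{\tau'}{\Sigma}$ for every subderivation that arises
inside the check of $e$, maintaining three invariants. Let
$V=\FVS{\tau^o_1\times\ldots\times\tau^o_k}$. The invariants are:

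\begin{enumerate}[\textbf{(I1)}]
\item every equation in $D'$ has the form $n-c=0$ for some $n\in V$ and integer
constant $c$;
\item for every binding $z:\tau_z$ in $\Gamma'$ all size annotations occurring in
$\tau_z$ are polynomials over $V$;
\item for every variable $l$ on which a \textsc{Match} rule is applied inside $e$,
the annotation of the scrutinee has the form $n-c$ with $n\in V$ and $c$ a
non-negative integer constant.
\end{enumerate}

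At the root, $D=\ttt$, $\Gamma=\{x_1{:}\tau^o_1,\dots,x_k{:}\tau^o_k\}$, and every
annotation is literally a size variable from $V$, so (I1)--(I3) hold. I then
inspect each typing rule. The arithmetic rules (\textsc{IConst}, \textsc{IBinop},
\textsc{OSVar}), together with \textsc{If}, \textsc{LetFun} and \textsc{LetExtern},
do not touch $D'$ or the shape of the annotations, so the invariants propagate
unchanged to the premises; the entailments they generate, if any, are equalities
between annotations that (I2) already says are polynomials over $V$. For
\textsc{Nil} and \textsc{Cons} the side conditions $D'\vdash p=0$ and
$D'\vdash p=p'+1$ immediately have the required form. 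For \textsc{Var} the
side condition $D'\vdash \tau=\tau'$ unfolds, via the definition of type equality,
into entailments between size polynomials over $V$. For \textsc{FunApp} the
substitution $*$ maps the signature's pure size variables to the annotations of
the actual parameters, which are polynomials over $V$ by (I2); hence
$*(\tau_{n+1})$ and the equations in $C$ are all polynomials over $V$, and the
generated entailments have the stated shape.

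The two genuinely interesting cases are \textsc{Let} and \textsc{Match}. In the
\textsc{Let} case the new binding $z{:}\tau_z$ may have an arbitrary polynomial
annotation (coming, for instance, from a \textsc{FunApp} whose output annotations
were already shown above to be polynomials over $V$), so (I2) is preserved. The
crucial observation is that the refined grammar forces the body of the
\texttt{let} to be an $e_{\mathit{nomatch}}$-expression, so $z$ will \emph{never}
appear as the scrutinee of a \textsc{Match}; consequently (I3) is vacuously
preserved for $z$. In the \textsc{Match} case the scrutinee $l$ is, by the
``no-let-before-match'' restriction, either a function parameter, whose annotation
is some $n\in V$ (the $c=0$ case of (I3)), or the tail bound in a surrounding
\textsc{Match}-cons branch, whose annotation is $n-c$ by the induction hypothesis
on (I3). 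The nil-branch adds the equation $p=0$, i.e.\ $n-c=0$, to $D'$,
preserving (I1); the cons-branch introduces $\tlarg{:}\st{\tau'}{p-1}$, which
is of the form $n-(c+1)$, preserving (I3) for any subsequent match on the new
tail. The head variable $\hdarg$ receives the element type and, as in the
\textsc{Let} case, is not itself a list that can be matched with annotation
outside the required shape.

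The main obstacle is the \textsc{Match} case: one has to verify that the syntactic
condition really does force every scrutinee annotation to stay of the form $n-c$.
This amounts to a small auxiliary induction showing that the only way a variable
$l$ with a list type can reach a \textsc{Match} rule, without violating the
refined grammar, is through a chain of \textsc{Match}-cons bindings starting from
a parameter, so that its annotation is obtained by iterated subtraction of $1$
from a size variable in $V$. Once this is established, the entailments that
remain at the leaves of the derivation are exactly those of the stated shape
$D\vdash p'=p$, with $D$ a conjunction of equations $n-c=0$ ($n\in V$, $c$
constant) and $p,p'$ polynomials over $V$.
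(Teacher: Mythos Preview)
Your plan matches the paper's own argument: the paper gives only a sketch, saying one proceeds by induction on the length of a path in the type-checking tree and uses the key fact that any pattern-matched variable that is not a parameter must have been introduced by an earlier \textsc{Match}, never by a \textsc{Let}. Your explicit invariants (I1)--(I3) make this precise in the same spirit.

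Two points need tightening. First, your case analysis for the \textsc{Match} scrutinee is incomplete: you list ``function parameter'' and ``tail bound in a surrounding \textsc{Match}-cons branch'', but the scrutinee may equally well be a \emph{head} variable $\hdarg$ from an earlier match (when the element type is itself a list), or an iterated head/tail thereof. Your (I3) as phrased tracks only the outermost annotation of a scrutinee, so it does not immediately cover $\hdarg$. The fix is to strengthen (I3) to: for every variable in $\Gamma'$ that was introduced as a parameter or by \textsc{Match}-cons (rather than by \textsc{Let}), \emph{every} size annotation in its type has the form $n-c$ with $n\in V$ and $c\ge 0$. This strengthened invariant is stable under both $\hdarg$-extraction (which simply drops the outer annotation, leaving inner ones of the form $n'-c'$) and $\tlarg$-extraction (which decrements the outer one). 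Your closing remark that $\hdarg$ ``is not itself a list that can be matched with annotation outside the required shape'' is the right conclusion but is not supported by (I3) as you stated it.

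Second, the \textsc{LetFun} case does not propagate the invariants ``unchanged'' to both premises. The premise for the inner function body is $\judgements{\ttt;\ z_1{:}\typesvs_1,\dots,z_n{:}\typesvs_n}{e_1}{\tau_{n+1}}{\Sigma}$, i.e.\ a fresh top-level check with its own parameter context and hence its own set $V'$ of size variables. The theorem should be understood (and your induction set up) so that each \textsc{LetFun} body is handled by a fresh instance of the statement with $V$ replaced by $V'$; the invariants then hold at that new root for the same reason they held at the outer one.
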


\textit{Sketch of the proof}. Consider a path in the type checking
tree which ends up with some $D\vdash p'=p$ and let an equation
$q=0$ belongs to $D$. It means that in the path there is the
nil-branch of the pattern matching for some $l:\st{\tau}{q}$.

By induction on the length of the path, one can show that $q=n-c$
for some size variable $n\in\FVS{\tau_1\times \ldots \times
\tau_k}$ and some constant $c$. This uses the fact that follows from the
syntactic condition: the program variables
which are not free in a program expression and pattern-matched may be
introduced only by another pattern-matching, but not a let-binding.
The technical report  \cite{techrep} contains the full proof.

Of course, the syntactical condition of the theorem may be relaxed.
One may allow expressions with pattern-matching in a let-body,
assuming that functions that appear in let-bindings, like $f_0$,
give rise to solvable Diophantine equations. For instance, when
$p_0$ is a linear function, one of the variables is expressed via
the others and constants and substituted into $p_1=p$. Another
case when it is easy to check if there are natural
roots for $p_0=0$ or not (and find them if ``yes'') is
when $p_0$ is a 1-variable polynomial. We leave
relaxations of the condition for future work.

%%%%%%%%%%%%%%%%%%%%%%%%%%%%%%%%%%%%%%%%%%%%%%%%%%%%%%%%%%%%%%%%%%%%%%%%
\section{Type Inference}
\label{inference}

Here we discuss type inference under the syntactical condition
defined in the previous section.
Since we consider shapely functions,
there is a way to reduce type inference to type-checking using
the well-known fact that a finite polynomial is defined by a finite
number of points. The procedure presented in this section
was sketched by us in \cite{ShvKvE07b}
and given in details and evaluated with a series of measurements in
\cite{vKShvE07}.

For each size dependency
from the output type of  a given function definition one
assumes that it is a polynomial and one guesses its degree. Then, to
obtain the coefficients of the polynomial of this degree, the function definition is evaluated (preferably in a
sand-box) as many times as the number of coefficients the polynomial
has. This finite number of input-output size pairs defines a system
of linear equations, where the unknowns are the coefficients of the
polynomial. When the sizes of the input data
satisfy some criteria known from  polynomial
interpolation theory \cite{Chui87,Lor92} (see the subsections below for more detail),
the system has a unique
solution. Input sizes that satisfy these criteria, which are
nontrivial for multivariate polynomials, can be determined
algorithmically.

In this way we find using interpolation theory
the interpolating polynomial for the size dependency.
If the size dependency is a polynomial function and the hypothesis about
its degree is correct, then it coincides with its
interpolating polynomial. To
check if this is the case, the interpolating polynomial is given to the type
checking procedure. If it passes, it is correct. Otherwise, one
repeats the procedure for a higher degree of the size dependency.
Starting with degree zero\footnote{On can also start with a higher degree. If the degree of the solution happens to be lower than the initial degree, the solution will still be found since the found coefficients will be zero at the right places.}, the method
iteratively  constructs the interpolating polynomials
until the correct polynomial is found. It  does not terminate when
\begin{enumerate}[(1)]
\item the function under consideration does not terminate on test data,
\item the function is non-shapely,
\item the function is shapely but the type-checker rejects it due to
the type-system's incompleteness (see section \ref{completeness}).
\end{enumerate}

The method infers polynomial size dependencies for a nontrivial class
of shapely functions.
For instance, standard type inference for the underlying type system
yields that the function $\cprod$ has the underlying type
$\st{\alpha}{}\times\st{\alpha}{}\longrightarrow
\st{\st{\alpha}{}}{}$. Adding size annotations with unknown output
polynomials gives $\cprod :
\st{\alpha}{n}\times\st{\alpha}{m}\longrightarrow
\st{\st{\alpha}{p_2}}{p_1}$. We assume $p_1$ is quadratic so we have
to compute the coefficients in its presentation:
$$p_1(n, m)=a_{0,0} + a_{0,1}n + a_{1,0} m + a_{1,1}nm  + a_{0,2}n^2 + a_{2,0}m^2 $$
Running the function $\cprod$ on six pairs of lists of length $0$,
$1$, $2$ yields:
$$\begin{array}{l@{\quad}l@{\quad}l@{\quad}l@{\quad}l@{\quad}l@{\quad}l}
n & m & l_1 & l_2 & \mathsf{cprod}(l_1, l_2) & p_1(n, m) & p_2(n, m)
\\ \hline
0 & 0 & [] & [] & [] & 0 & ? \\
1 & 0 & [0] & [] & [] & 0 & ? \\
0 & 1 & [] & [0] & [] & 0 & ? \\
1 & 1 & [0] & [1] & [[0, 1]] & 1 & 2 \\
2 & 1 & [0,1] & [2] & [[0,2],[1,2]] & 2 & 2 \\
1 & 2 & [0] & [1,2] & [[0,1],[0,2]] & 2 & 2
\end{array}$$
\noindent The first three rows of the table are examples
of \textit{incomplete measurements},
where the size of the inner list is unknown, because the outer list is empty.
The last three rows are \textit{complete measurements}.

The test table defines the following linear system for the outer output
list:
$$
\begin{array}{rl}
a_{0,0} & =0\\
a_{0,0} + a_{0,1}  + a_{0,2} & =0\\
a_{0,0} + a_{1,0}  + a_{2,0} & =0\\
a_{0,0} + a_{0,1} + a_{1,0} + a_{0,2}  + a_{1,1} + a_{2,0} & =1\\
a_{0,0} + 2a_{0,1} + a_{1,0} + 4 a_{0,2}  + 2a_{1,1} + a_{2,0} & =2\\
a_{0,0} + a_{0,1} + 2 a_{1,0} + a_{0,2}  + 2 a_{1,1} + 4 a_{2,0} & =2\\
\end{array}
$$
\noindent The unique solution is $a_{1,1}=1$ and the rest of
coefficients are zero. To verify whether the interpolation is indeed
the size polynomial, one checks if $\cprod :
\st{\alpha}{n}\times\st{\alpha}{m}\longrightarrow
\st{\st{\alpha}{2}}{n*m}$. This is the case, as was shown in section
\ref{typechecking:examples}.

As an alternative way of finding the coefficients, one could try to
solve directly the (recurrence) equations defined by entailments $D\vdash p=p'$
that arise during construction of the type-inference tree for a function definition.
As we will see in subsection \ref{motivation}, it amounts to solving
systems that are nonlinear in general. By combining testing with type checking we
bypass nonlinear systems \cite{vKShvE07}.

However, test-based inference has a drawback: it is not fully static.
The procedure has dynamic aspects, since it is done not only in the underlying logic
of the type system (i.e. Peano arithmetic), but it involves
executing the interpreter of the programming language. A consequence of it may be that
inference for function definitions with external calls is based on
the semantics of another language. When the size dependency of the external function is known, this can be avoided by
\begin{enumerate}[$\bullet$]
\item modifying the interpreter of our language in such a way, that
in the case of an  external call it creates a ``fake'' object of the right size
(the size of the result of ``this'' external call), or
\item leaving the interpreter in intact, and creating
for any external function from its sized type  a ``fake'' function body in our language with the same size dependency as the external function.
\end{enumerate}
\noindent From an engineering point of view, the advantage of the second approach is that a standard interpreter can be used directly. We discuss the mechanism of
generating ``fake'' functions in \ref{inhabit}.

Ideally, one  would like to remove all dynamic aspects from type inference.
In our current research  towards fully static inference
we consider a modification of the method where instead of the interpreter of
the programming language one uses an abstract interpreter
in the form of a term-rewriting system of which the rewriting rules will correspond to equations in Peano
arithmetic. For instance, $\prgs$ is interpreted as
$p(n) \rightarrow n+p(n-1)$ together with $p(0) \rightarrow 0$.
We have presented preliminary results
in the technical report \cite{ShvET08}.

%%%%%%%%%%%%%%%%%%%%%%%%%%%%%%%%%%%%%%%%%%%%%%%%%%%%%%%%%%%%%%%%%%%%%%%%

\subsection{Motivation for test-based inference}
\label{motivation}

Consider, as an example of the complexity of systems generated by
conventional type inference, the  system for a function definition $\mathsf{nonlinear}$
with auxiliary functions:

\begin{center}
\begin{small}
\begin{tabular}{ll}
$\cop$:& $\st{\alpha}{n} \rightarrow \st{\alpha}{n}$\\
$\mathsf{copyfirst}$:& $\st{\alpha}{n_1} \times \st{\alpha}{n_2} \rightarrow
\st{\alpha}{n_1*n_2}$\\
$\mathsf{sqdiffaux}$:& $\st{\alpha}{n_1} \times \st{\alpha}{n_2} \rightarrow
\st{\alpha}{n_1^2+n_2^2-2*n_1*n_2}$\\
\end{tabular}
\end{small}
\end{center}

\noindent
where (in the sugared syntax\footnote{Recall, that in the sugared syntax we
use $f(g(z))$ for ``$\lete{z'}{g(z)}{f(z')}$'' and, moreover,
use
$[1\ldots c]$ for $c$-ary application of $\cons{-}{-}$ to $\nil$,
so that $[1\ldots 3]$ denotes $\cons{1}{\cons{2}{\cons{3}{\nil}}}$. We also use
the infix \texttt{++} for $\append$.})

$$
\begin{array}{l}
\letfun{\cop}{l}{\match{l}{\hdarg}{\tlarg}{\nil}{\cons{\hdarg}{\cop(\tlarg)}}}
{\letfun{\mathsf{copyfirst}}{l_1,\,l_2}{\match{l_2}{\hdarg}{\tlarg}{\nil}
{l_1\,\texttt{++}\,\mathsf{copyfirst}(l_1,\,\tlarg)}}
{
\letfun{\mathsf{sqdiffaux}}{l_1,\,l_2}
{
\begin{array}[t]{l}
\match{l_1}{\hdarg}{\tlarg}{\mathsf{copyfirst}(l_2,\,l_2)}{}\\
\quad \match{l_2}{\hdarg'}{\tlarg'}{\mathsf{copyfirst}(l_1,\,l_1)}{\mathsf{sqdiffaux}(\tlarg,\tlarg')}
\end{array}
}
{
\letfun{\mathsf{nonlinear}}{l_1,\,l_2}
{
\begin{array}[t]{l}
\match{l_1}{\hdarg}{\tlarg}{\mathsf{copyfirst}(\mathsf{copyfirst}(l_2,\,l_2),\;[1 \ldots 4])}{}\\
\begin{array}{l}
\match{l_2}{\hdarg'}{\tlarg'}{\mathsf{copyfirst}(\mathsf{copyfirst}(l_1,\,l_1),\;[1 \ldots 4])}{}\\
\mathsf{sqdiffaux}(\mathsf{nonlinear}(\tlarg,\,l_2) \, \texttt{++}\, l_1, \;
\mathsf{nonlinear}(l_1,\,\tlarg') \,\texttt{++} \,l_2)\,\\
\quad \texttt{++}\; \mathsf{copyfirst}(\mathsf{copyfirst}(l_1,\,l_2),\;[1 \ldots 17])
\end{array}
\end{array}
}
{\ldots}
}
}}\\
\end{array}
$$

The inference procedure ends up
with the following recurrence system:

$$
\begin{array}{lr}
\left\{
\begin{array}{lll}
p(0, n_2) & = &  4 n_2^2\\
p(n_1, 0) & = &  4 n_1^2 \\
p(n_1, n_2) & = & (p(n_1-1, n_2) +n_1- (p(n_1, n_2-1) + n_2))^2 + 17n_1n_2\\
\end{array}
\right. & (1)
\end{array}
$$

\noindent The  problem is \textit{to find} $p$, assuming, say, that it is quadratic.

A standard way  of solving this problem uses the method of unknown coefficients.
A polynomial to find, $p(n_1, n_2)$, is presented in the form
$a_{0,0}+a_{0,1}n_1+a_{1,0}n_2+a_{1,1}n_1n_2+a_{0,2}n_1^2+a_{2,0}n_2^2$ and substituted into  $(1)$.
Equating the corresponding coefficients of the polynomials from the left and right sides
of the equations from $(1)$ gives
$$
\left\{
\begin{array}{rll}
a_{0,0} & =& 0,\;\; a_{1,0} = 0,\;\; a_{2,0}  = 4,\;\; a_{0,1}  = 0,\;\; a_{0,2}  =4 \\
a_{0,2} & = & (a_{1,1}- 2a_{0,2}+1)^2 \\
a_{2,0} & = & (2a_{2,0}-a_{1, 1}-1)^2 \\
a_{1,1} & = & 2 (a_{1,1}- 2a_{0,2}+1)(2a_{2,0}-a_{1, 1}-1)+17 \\
a_{0,1} & = & 2((a_{1,0}-a_{0,1})+(a_{0,2}-a_{2,0}) )(a_{1,1}- 2a_{0,2}+1)\\
a_{1,0} & = & 2((a_{1,0}-a_{0,1})+(a_{0,2}-a_{2,0}) )(2a_{2,0}-a_{1, 1}-1)\\
a_{0, 0} & =& ((a_{1,0}-a_{0,1})+(a_{0,2}-a_{2,0}) )^2\\
\end{array}
\right.
$$

%Even after substitution of easily obtained from the initial conditions 5 coefficients into
%the rest of system, the system remains quadratic w.r.t. $a_{1,1}$. (Note that,
%the ninth and tenth equations, where $a_{1,1}$ occurs linearly, vanish
%due to reduction to $0=2*0*(a_{1,1}-3)$ and  $0=2*0*(-a_{1,1}+3)$.) The system to solve is

Substituting the coefficients $a_{0,0}=0,\; a_{1,0} = 0,\; a_{2,0}  = 4,\;a_{0,1}  = 0,\;a_{0,2}  =4$
in the remaining equations one obtains the non-linear system

$$
\left\{
\begin{array}{lll}
a_{1,1}^2 - 14  a_{1,1} + 45 & = & 0\\
2 a_{1,1}^2 - 27  a_{1,1} + 81& =  & 0 \\
\end{array}
\right.
$$
\noindent
The solution of this quadratic system can be found easily. It is  $a_{1, 1}=9$.

In general, non-linear systems may be hard to solve.
With the testing approach we
avoid solving \textit{nonlinear} systems w.r.t. polynomial coefficients $a_{ij}$.
Instead, we compute the coefficients solving
the \textit{linear} system that is generated
after testing.

\subsection{Interpolating a polynomial}

A hypothesis for a type is derived automatically by fitting a polynomial to the size
data, as it was shown in the example $\cprod$.
We are looking for the polynomial that best approaches the
data, i.e., the polynomial interpolation. The polynomial interpolation
exists and is unique under some conditions on the data, which are explored in
polynomial interpolation theory~\cite{Chui87,Lor92}.

For $1$-variable interpolation this condition is well-known.
A polynomial $p(z)$ of
degree $d$ with coefficients $a_1, \ldots, a_{d+1}$ can be written
as follows:
\begin{equation*}
a_1 \ +\ a_2 \, z\ + \ \ldots\  \ +\ a_{d+1}\,z^d  = \ p(z)
\end{equation*}
The values of the polynomial function in any pairwise different $d+1$ points determine
a system of linear equations w.r.t. the polynomial coefficients.
More specifically, given the set $\big(z_i, p(z_i)\big)$
 of pairs of numbers, where $1\leq i \leq d+1$, and coefficients
$a_1,\ \ldots \ , a_{d+1}$, the set of equations can be
represented in the following matrix form, where only the $a_i$ are
unknown:
$$
\left(
\begin{array}{@{}c@{\ }c@{\ }c@{\ }c@{\ }c@{}}
1 & z_1  &  \cdots & z_1^{d-1}  & z_1^d\\
1 & z_2  &  \cdots & z_2^{d-1}  & z_2^d\\
\vdots & \vdots  & \ddots & \vdots   & \vdots \\
1 & z_{d}  &  \cdots & z_{d}^{d-1}  & z_{d}^d\\
1 & z_{d+1}  &  \cdots & z_{d+1}^{d-1}  & z_{d+1}^d\\
\end{array}
\right) \left(
\begin{array}{@{}c@{}}
a_1 \\
a_2 \\
\vdots \\
a_d \\
a_{d+1}\\
\end{array}
\right) =  \left(
\begin{array}{@{}c@{}}
p(z_1) \\
p(z_2) \\
\vdots \\
p(z_d) \\
p(z_{d+1})\\
\end{array}
\right)
$$
\noindent The determinant of the left matrix, contains the
measurement points, is called a \textit{Vandermonde} determinant. For
pairwise different points $z_1,\ldots, \ z_{d+1}$ it is non-zero.
This means that, as long as the output size is measured for $d+1$
different input sizes, there exists a unique solution for the system
of equations and, thus, a unique interpolating polynomial.

The condition under which there exists a unique polynomial that
interpolates \emph{multivariate} data is not trivial. We formulate it in
the next subsection. Here we introduce the necessary definitions.

Recall  that a polynomial of degree $d$ and dimension $k$ (the number of variables)
has $N_d^k=\binom{d+k}{k}$ coefficients. Let a set of values $f_i$ of a real function $f$
be given. A set $W = \{\bar{w}_i: \,i=1,\ldots,\, N_d^k\}$ of points
in a real $k$-dimensional space forms the set of \textit{interpolation nodes} if
there is a unique polynomial
$p(\bar{z})=\Sigma_{0 \leq |j| \leq d}a_j \bar{z}^j$ with the total
degree $d$ with the property
 $p(\bar{w_i})=f_i$, where $1 \leq i \leq N_d^k$.
In this case one says that the polynomial $p$ interpolates the function
$f$ at the nodes $\bar{w}_i$.

The condition on $W$, which assures the existence and uniqueness of an interpolating polynomial,
is geometrical: it describes a configuration, called \textbf{NCA} \cite{Chui87},
in which the points from $W$ should be placed in $\R^k$. The multivariate
Vandermonde determinant computed from such points is
non-zero. Thus, the corresponding system of linear equations w.r.t.
the polynomial's coefficients has a unique solution.
In the following subsections we
show how to generate a collection of \textit{natural-valued} nodes  $\bar{w}_i$
in an \textbf{NCA} configuration.
A Vandermonde determinant is computed by the same formula in reals and naturals,
so the system of linear equations
based on natural nodes will have a unique (rational) solution.

%For our purpose of annotation inference the values $f_i$ are obtained
%from evaluation of the function definition.
%Each natural-valued $\bar{w}_i$ represents the vector of sizes of some input.
%After a vector-node is generated, one defines a concrete input, respecting this vector.
%For instance, for $(0,\,2)$ that corresponds to an input
%of underlying type $\st{\alpha}{}\times \st{\alpha}{}$
%one generates two lists: $[\,]$ and, say, $[\,1,\,2\,]$. A function definition under
%consideration is evaluated $ N_d^k$ times on concrete inputs generated for each $\bar{w}_i$.
%For any run $i$ we measure the sizes of outputs (the length of the output lists).
%For instance, if the output is of type $\st{\st{\alpha}{}}{}$, one measures for $\bar{w}_i$
%the sizes $f^1_i$ and $f^2_i$ of the external and internal lists respectively. After
%all $ N_d^k$ tests are done, one  computes unique interpolating polynomials for $f^1$ and
%$f^2$ that form a hypothetical size annotations to check. In fact, generation of test nodes
%is a bit more complicated for nested-list outputs, since one of the outer lists may have a zero length
%in some test. We will discuss and solve this problem as well.

\subsection{Measuring bivariate polynomials}
\label{SECTION:HYPOTHESIS:CONDITION}

For a two-dimensional polynomial of degree $d$, the condition on the
nodes that guarantees a unique polynomial interpolation is as
follows \cite{Chui87}:

%In the input space, there are $d + 1$ lines, each
%containing $d + 1, \dots, 1$ of the nodes, respectively, and the
%nodes do not lie on the intersections of the lines. Such a
%configuration is depicted for parallel lines in figure
%\ref{Figure}a. This corresponds to the \textbf{NCA} configuration
%studied, for instance, by Chui \cite{Chui87}.

\begin{defi} $N_d^2$ nodes forming a set $W \subset \R^2$ lie
in a \textit{2-dimensional NCA configuration} if there exist lines
$\gamma_1,\ldots,\gamma_{d+1}$ in the space $\R^2$, such that $d+1$ nodes of $W$ lie on
$\gamma_{d+1}$ and $d$ nodes of $W$ lie on  $\gamma_d \setminus
\gamma_{d+1}$, ..., and finally $1$ node of $W$ lies on  $\gamma_1 \setminus
(\gamma_2 \cup \ldots \cup \gamma_{d+1})$.
\end{defi}

An example of such a configuration for integers is given in figure \ref{Figure}a.

Nodes satisfying this condition can be found automatically:
if the output type of a given function definition
is $\st{\ldots\st{\alpha}{p_s}\ldots}{p_1}$, then
for the outermost-list size $p_1$ choose a triangle of nodes
on parallel lines, like in figure \ref{Figure}b.

An example of the two dimensional case is the $\cprod$ function above. As we have seen,
the procedure of reconstructing the size polynomial $p_1$ for the outer list is straightforward.
However, there is a problem for $p_2$. There are cases in which nodes
have no corresponding output size (the question-marks in the table that
refer to incomplete measurements). Measurements
for $p_2$ may be incomplete, because
the size of the inner lists can only be determined when there is at
least one such a list. Thus, the outer list may not be empty for complete measurements. As can
be seen in figure \ref{Figure}d, for $\cprod$ output's outer list is empty
when one of the two input lists is empty. In the next section,
we show that, despite this, it is always possible to find
enough measurements and give an upper bound on the number of natural nodes
that have to be searched.

\subsection{Handling incomplete measurements}
\label{SECTION:HYPOTHESIS:MEASUREMNTS}
\label{SECTION:HYPOTHESIS:INCOMPLETE}

%From the example in the previous section, it is clear that care
%should be taken when searching for hypotheses for output types with
%nested lists.

In general, for $\st{ \ldots \st{\alpha}{p_s}
\ldots}{p_1}$ we will not find a value for $p_j$ at a node if one of
the outer polynomials, $p_1$ to $p_{j - 1}$, is zero at that node.
Thus, the nodes where $p_1$ to $p_{j - 1}$ are zero should be
excluded from the testing process. Here, we show that, despite this,
it is always possible to find enough nodes using finite search.

\begin{figure}[t]
\begin{tabular}{cccc}
\includegraphics[width=29mm]{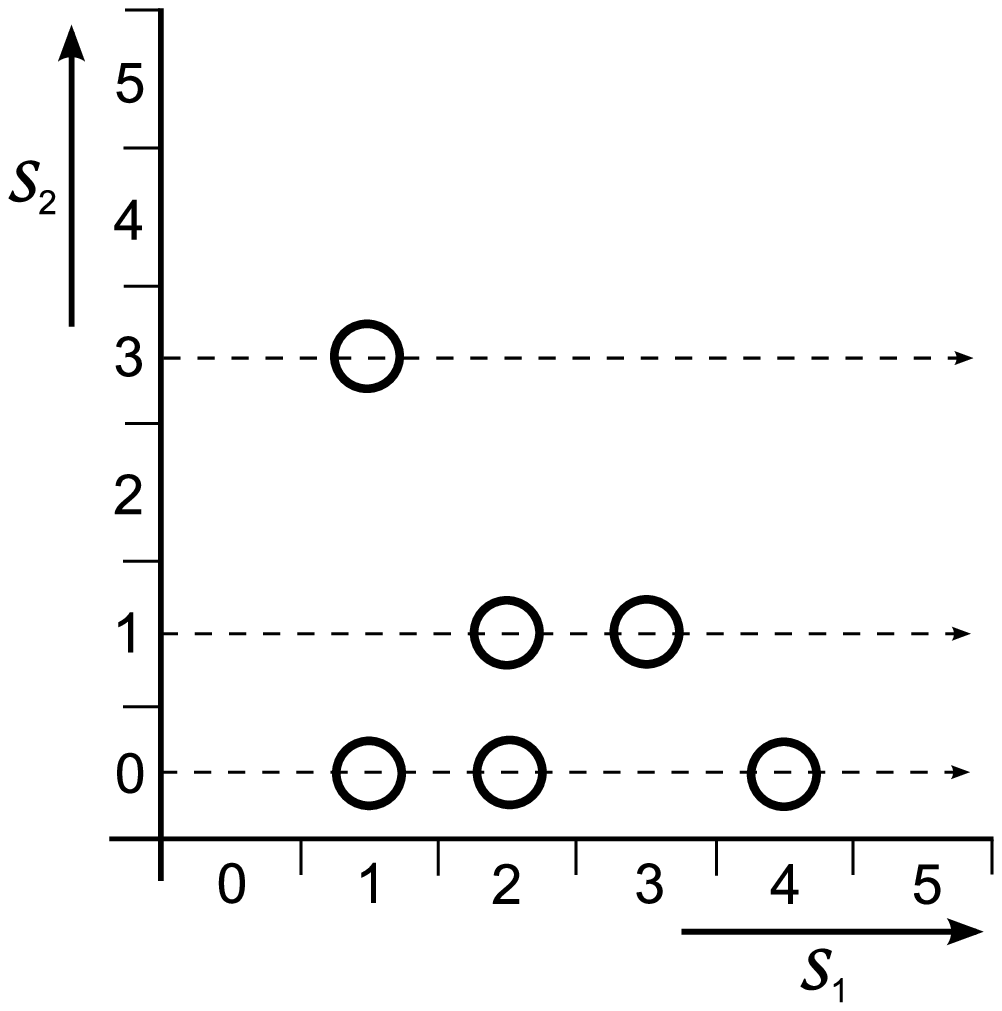}&
\includegraphics[width=29mm]{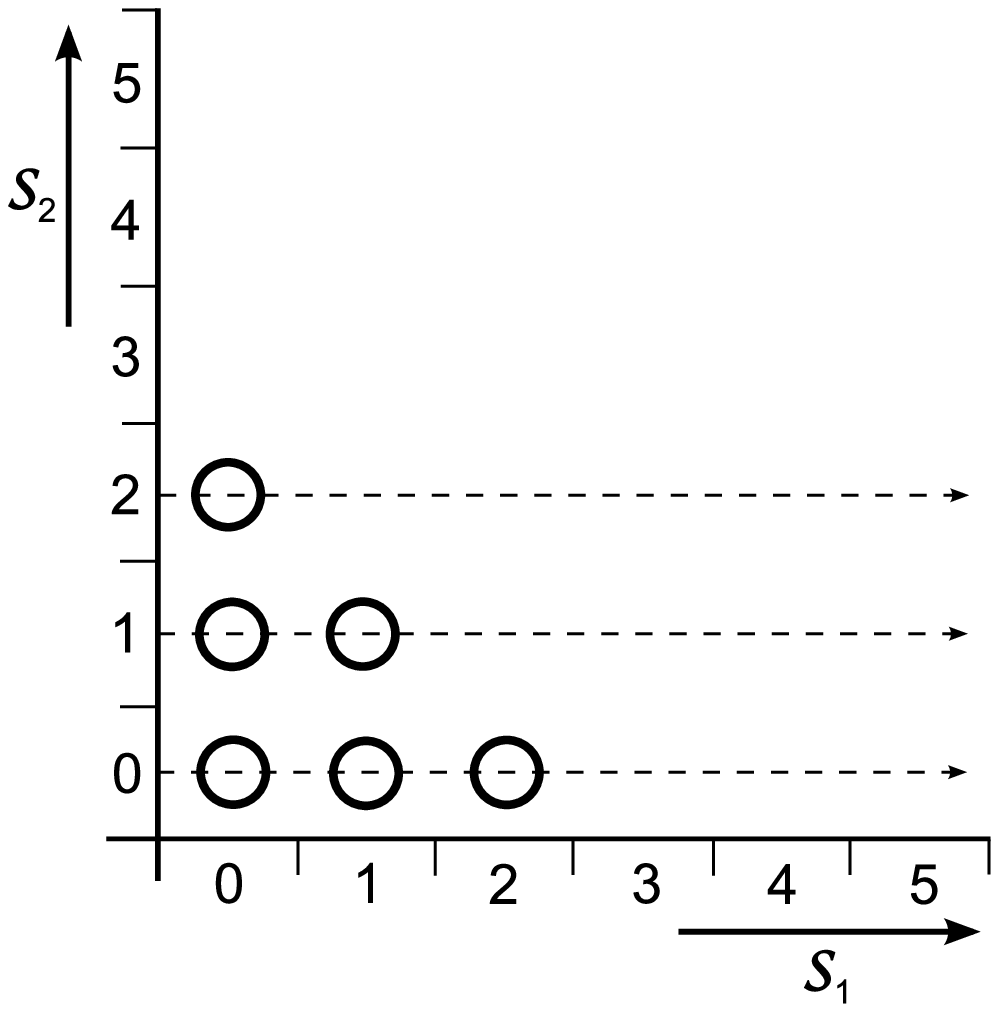}&
\includegraphics[width=29mm]{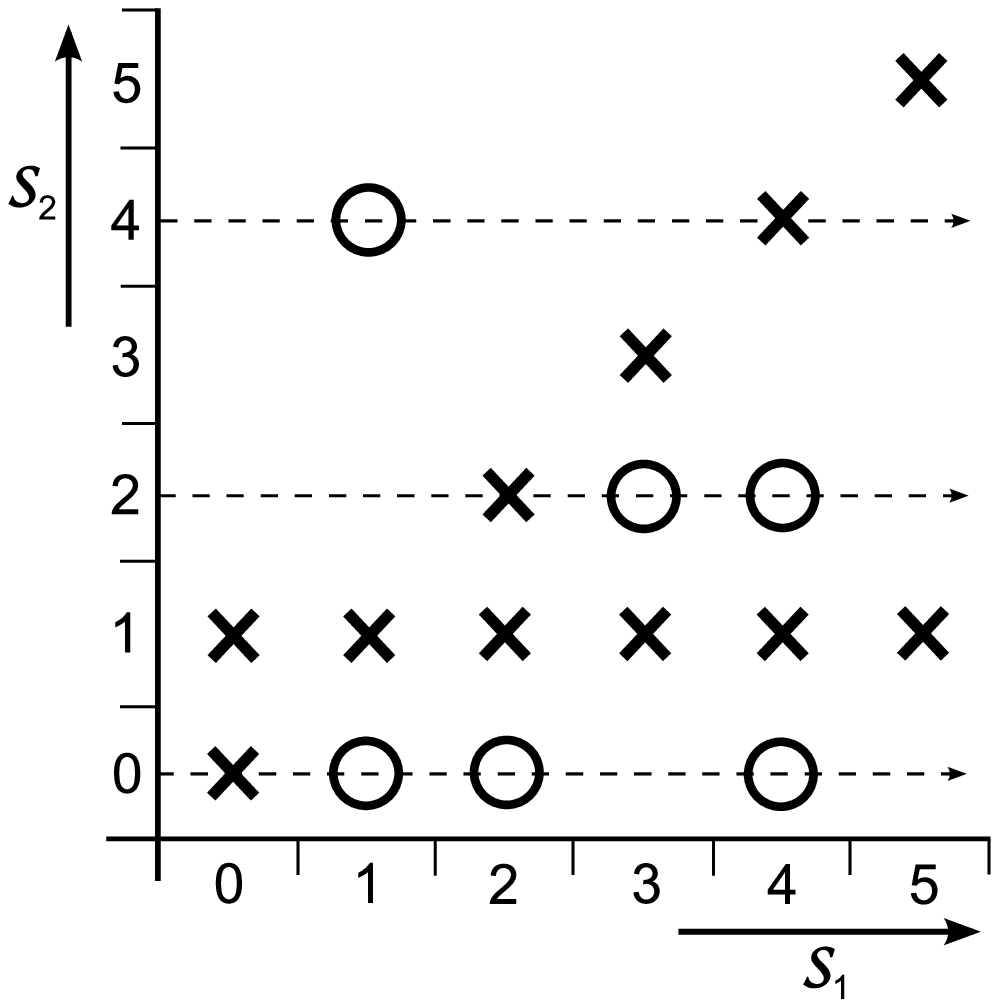}&
\includegraphics[width=29mm]{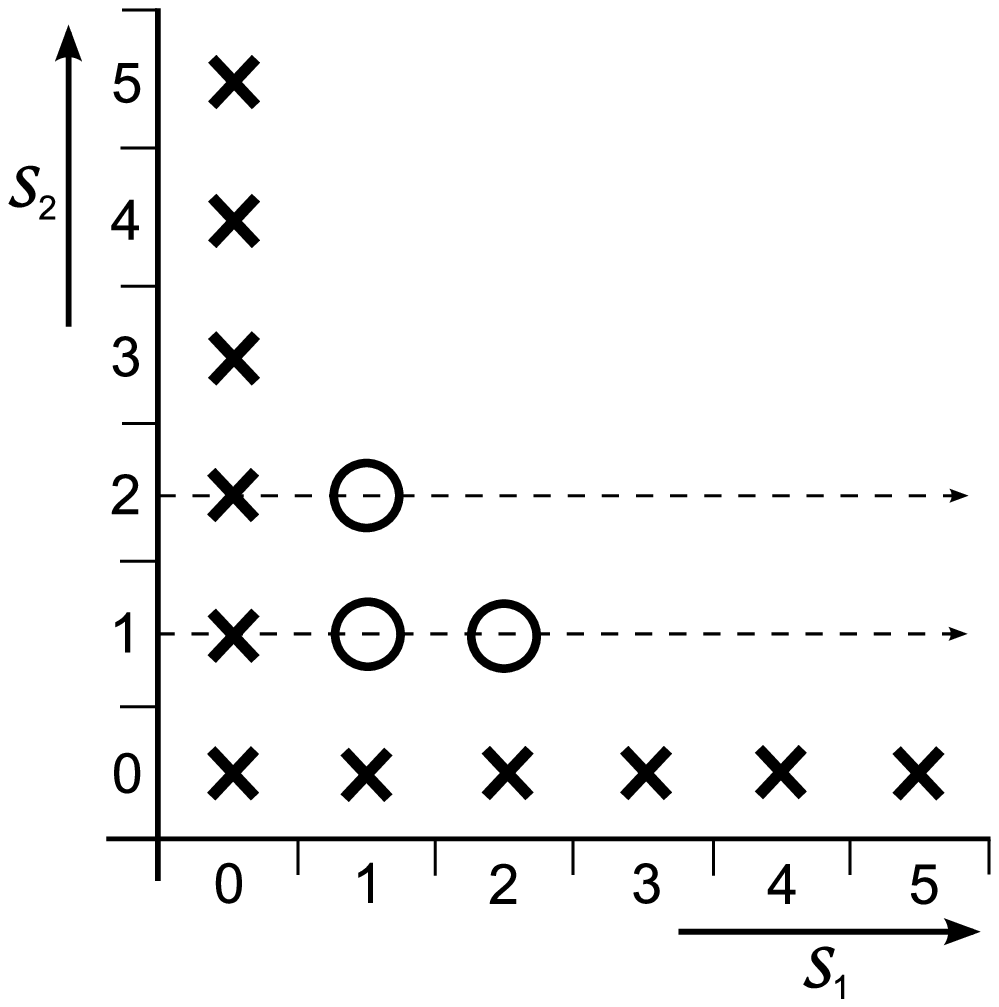}\\
(a) & (b) & (c) & (d)
\end{tabular}
\caption{(a) A node configuration that has a unique two-dimensional
polynomial interpolation (b) A more systematic node configuration
that has a unique two-dimensional polynomial interpolation (c)
Incomplete measurements complicate finding a node configuration (d)
Incomplete measurements for the pairs in the output of $\cprod$.}
\label{Figure}
\end{figure}

First,  nested output lists of which the size of the
outer list is the constant zero, e.g. $\st{\st{\alpha'}{p_2}}{0}$, need special
treatment.
If a type-checker rejects annotations for $p_1 \equiv 0$ and \textit{arbitrary} $p_2$
then the outer polynomial $p_1$ is not a constant zero. (Recall the definition
of $D\vdash \tau = \tau'$.)

Now, let  the outer polynomial $p_1(x, y)$ be not a constant zero.
Then there is a finite number of lines $y = i$, which
we will call \textit{root lines}, where $p_1(x,\,i) = 0$.

\begin{lemma}
\label{LEMMA:zerolines}
A polynomial $p_1(x,\,y)$ of degree $d$ that
is not constant $0$ has at most $d$ root lines $y=i$, such that
$p_1(x,\,i) = 0$ for all $x$.
\end{lemma}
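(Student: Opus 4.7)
The plan is to view $p_1(x,y)$ as a polynomial in $x$ with coefficients that are polynomials in $y$, reducing the problem about root lines to a statement about common roots of univariate polynomials.

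First I would write $p_1(x,y) = \sum_{j=0}^{d} c_j(y)\, x^j$, where each $c_j(y) \in \mathbb{Q}[y]$ is the coefficient of $x^j$. Since $p_1$ has total degree $d$, every monomial $x^j y^b$ appearing in $p_1$ satisfies $j+b \le d$, so $\deg c_j \le d-j$, and in particular $\deg c_j \le d$ for all $j$. The observation driving the whole proof is: the equation $p_1(x,i) = 0$ holds for all $x$ if and only if $c_j(i) = 0$ for every $j = 0, \ldots, d$, since a polynomial in $x$ is identically zero precisely when each of its coefficients is zero.

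Next I would use the hypothesis that $p_1$ is not the constant $0$: this means at least one of the $c_j$ is not the zero polynomial in $y$. Fix such a $c_{j_0}$. Every root line $y=i$ is, in particular, a root of $c_{j_0}$, so the set of root lines injects into the root set of $c_{j_0}$. A nonzero univariate polynomial of degree at most $d$ has at most $d$ roots, so we conclude that there are at most $d$ values $i$ with $p_1(x,i) \equiv 0$.

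There is no genuine obstacle: the only subtlety is making sure to handle the case where some $c_j$'s vanish identically (they impose no constraint on $i$), which is exactly why one picks a witness $c_{j_0}$ that is nonzero, guaranteed to exist by the hypothesis $p_1 \not\equiv 0$. No assumption on rationals versus naturals is needed, since we are counting roots of a univariate polynomial and any root in $\mathbb{N}$ is in particular a root in $\mathbb{Q}$.
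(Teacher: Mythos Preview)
Your proof is correct, but it takes a genuinely different route from the paper's. The paper argues by contradiction via the interpolation machinery already set up: assuming $d+1$ root lines exist, it picks nodes on those lines in the \textbf{NCA} configuration, observes that $p_1$ vanishes at every such node, and invokes uniqueness of the interpolating polynomial to conclude $p_1 \equiv 0$. Your argument is instead a direct, elementary algebraic one: expand $p_1$ as a polynomial in $x$ with coefficients $c_j(y)$, note that a root line is a common zero of all the $c_j$, and bound the number of such zeros by the degree of any nonzero $c_{j_0}$. Your approach is self-contained and does not depend on the \textbf{NCA} definition or the interpolation-uniqueness result; the paper's approach has the virtue of reusing the exact framework the surrounding section is built on, so the lemma slots in without introducing a separate technique. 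Either argument suffices here.
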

\proof
Suppose there are more than $d$ root lines. Then, it is easy to pick
$1, \ldots, d+1$ nodes on $d+1$  root lines. They trivially are in  \textbf{NCA} configuration.
With these nodes, at which $p_1(x, y) = 0$, the system of linear equations for the
coefficients of $p_1$ will have the zero-solution, that is, all the
coefficients of $p_1$ will be zeros. This contradicts the assumption
that $p_1$ is not constant 0.
\qed

Using the lemma, we can bound the number
of parallel lines $y=i$ and nodes on them that have to be searched.
Essentially, we are to find a triangle configuration of nodes, like on
figure \ref{Figure}b, skipping all crosses, see \ref{Figure}c.

\begin{lemma}
When looking for nodes for a polynomial $p_2(x,y)$ that
determine a unique polynomial interpolation at places where another
polynomial $p_1(x,y) \ne 0$, it is sufficient to search the lines $y =
0, \ldots, y = d_1 + d_2$ in the square $[0, \ldots, d_1 +
d_2]\times[0, \ldots, d_1 + d_2]$.
\end{lemma}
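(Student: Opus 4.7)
\textit{Proof plan.} The goal is to exhibit, inside the search space $[0, \ldots, d_1+d_2]\times [0, \ldots, d_1+d_2]$, an NCA configuration of $N_{d_2}^2$ nodes on which $p_1$ is non-zero, so that these nodes determine $p_2$ uniquely. The configuration I will aim for is the triangular one of Figure~\ref{Figure}(b): choose $d_2+1$ horizontal lines $y = j$, and on the $i$-th chosen line pick $i$ nodes (for $i = 1, \ldots, d_2+1$), all at integer abscissae in $[0,\ldots,d_1+d_2]$ where $p_1$ does not vanish. Since the chosen lines are parallel and distinct, and the nodes on the $i$-th line can be placed at $i$ distinct $x$-coordinates, this is a valid two-dimensional NCA configuration.

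The plan has two independent ``pigeonhole'' steps. First, I would select the $d_2+1$ horizontal lines. By Lemma~\ref{LEMMA:zerolines}, $p_1$, which has total degree at most $d_1$, has at most $d_1$ root lines of the form $y = j$. The candidate set $\{y = 0, y = 1, \ldots, y = d_1+d_2\}$ contains $d_1+d_2+1$ lines, so at least $(d_1+d_2+1) - d_1 = d_2+1$ of them are non-root lines, from which I pick the needed $d_2+1$.

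Second, on any such non-root line $y = j$, the univariate polynomial $x \mapsto p_1(x, j)$ is not identically zero and has degree at most $d_1$, hence at most $d_1$ roots. The candidate abscissae $\{0, 1, \ldots, d_1+d_2\}$ form a set of $d_1+d_2+1$ integers, so at least $(d_1+d_2+1) - d_1 = d_2+1$ of them yield $p_1(x, j) \neq 0$. In particular I can pick any $i \leq d_2+1$ of them, which is exactly what is needed for the $i$-th line of the triangular configuration.

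I expect no significant obstacle: both steps reduce to the elementary fact that a polynomial of degree $d$ in one variable has at most $d$ roots, combined with Lemma~\ref{LEMMA:zerolines}. The only subtlety is to verify that the resulting triangular arrangement really is in NCA configuration, but this is immediate from the definition since the $d_2+1$ chosen horizontal lines play the roles of $\gamma_1, \ldots, \gamma_{d_2+1}$, and on the $i$-th such line I have placed exactly $i$ nodes, all distinct and not lying on any of the other chosen lines (the lines being parallel and distinct). Hence the search over $[0,\ldots,d_1+d_2]^2$ is guaranteed to succeed, establishing the stated bound.
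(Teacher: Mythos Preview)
Your proposal is correct and follows essentially the same approach as the paper: both use Lemma~\ref{LEMMA:zerolines} to guarantee at least $d_2+1$ non-root horizontal lines among $y=0,\ldots,y=d_1+d_2$, and then the univariate root bound to find at least $d_2+1$ good abscissae on each such line. Your write-up is slightly more explicit than the paper's in verifying that the resulting triangular arrangement satisfies the NCA configuration, but the argument is otherwise identical.
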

\proof
For the configuration it is sufficient to have $d_2 + 1$ lines $y=i$ with
at least $d_2 + 1$ points where $p_1(x, y) \not= 0$. Due to lemma
\ref{LEMMA:zerolines} there are at most $d_1$ lines $y=i$ such that
$p_1(x,\,i)=0$, so at least $d_2+1$ are not root lines for $p_1$. The
polynomial $p_1(x,\,j)$, with $y=j$ not a root line, has at most
degree $d_1$, thus $y=j$ contains at most $d_1$ nodes $(x,\,j)$,
such that $p_1(x,\,j)=0$. Otherwise, it would have been constant zero,
and thus a root line. Hence, this leaves at least $d_2 + 1$ points
on these lines for which $p_1$ is not zero.
\qed

This straightforwardly generalizes to all nested types
$\st{ \ldots \st{\alpha}{p_s}\ldots}{p_1}$ with
polynomials in two variables.
If we want to derive the coefficients of $p_i$,
searching the square of input values $[0, \ldots, \Sigma_{j=1}^{i}
d_j]\times[0, \ldots, \Sigma_{j=1}^{i} d_j]$ suffices, where $d_j$
is the degree of $p_j$. Each $p_j$ has at most $d_j$ root lines, so
there are at most $\Sigma_{j=1}^{i - 1} d_j$ root lines for $p_1,\ldots, p_{i-1}$. Also, each
of the $p_j$ can have at most $d_j$ zeros on a non root line. Hence,
since the length
of the search interval for $p_i$ is $\Sigma_{j=1}^{i} d_j + 1$, there are always $d_i
+ 1$ values known.

Eventually, it is enough to search in $[0, \ldots, \Sigma_{j=1}^{s}
d_j]\times[0, \ldots, \Sigma_{j=1}^{s} d_j]$.

For $\cprod$ there are two size expressions to derive, $p_1$ for the
outer list and $p_2$ for the inner lists. Deriving that $p_1(n_1,
n_2) = n_1 * n_2$ is no problem. Because $p_1$ has roots for $n_1 =
0$ and for $n_2 = 0$, these nodes should be skipped when measuring
$p_2$ (see figure \ref{Figure}d).

\subsection{Generalizing to k-dimensional polynomials}
\label{SECTION:HYPOTHESIS:COMPLEXITY}

 The generalization of the
condition on nodes for a unique polynomial interpolation to
polynomials in $k$ variables, is a straightforward inductive
generalization of the two-dimensional case. In a hyperspace there
have to be hyperplanes, on each of which nodes lie that satisfy the
condition in the $k-1$ dimensional case. A hyperplane $K_j^k$ may be
viewed as a set in which test points for a polynomial of $k-1$
variables of the degree $j$ lie. There must be
$N^{k-1}_j=N_j^k-N_{j-1}^k$ such points. The condition on the nodes
is defined by:
\begin{defi}
\label{DEF:NCAk}
The \textit{NCA configuration for $k$ variables ($k$-dimensional
space)} is defined inductively on $k$ \cite{Chui87}. Let
$\{\bar{z}_1,\ldots,\,\bar{z}_{N_d^k}\}$  be a set of distinct points in
$\mathcal{R}^k$ such that there exist $d+1$ hyperplanes $K^k_j$, $0
\leq j \leq  d$ with

$
\begin{array}{l}
\bar{z}_{N_{d-1}^k+1},\ldots,\,\bar{z}_{N_d^k} \;\in \;K_d^k\\
\bar{z}_{N_{j-1}^k+1},\ldots,\,\bar{z}_{N_j^k} \;\in
\;K^k_j\setminus\{K^k_{j+1}\cup\ldots\cup K^k_{d}\},
\mbox{for}\;\;0\leq j \leq d-1
\end{array}
$

\noindent and each of set of points
$\bar{z}_{N_{j-1}^k+1},\ldots,\,\bar{z}_{N_j^k}$, $0\leq j \leq d$, considered
as points in $\mathcal{R}^{k-1}$ satisfies \textbf{NCA} in
$\mathcal{R}^{k-1}$.
\end{defi}

For instance, given $d=2$ and $k=3$ (i.e. interpolating by polynomials
of $3$ variables of degree $2$), the following collection of $N_2^3=\binom{2+3}{3}=10$ nodes,
placed on parallel planes in $\R^3$,
satisfies an  \textbf{NCA} configuration:
\begin{enumerate}[(1)]
\item on the plane $x=0$ take the ``triangle'' of $N_2^2=6$ points $(0,\,0,\,0)$,
$(0,\,0,\,1)$, $(0,\,0,\,2)$, $(0,\,1,\,0)$, $(0,\,1,\,1)$,  $(0,\,2,\,0)$,
\item on the plane $x=1$ take the ``triangle'' of $N_1^2=3$ points
$(1,\,1,\,0)$, $(1,\,0,\,1)$,  $(1,\,1,\,1)$,
\item on the plane $x=2$ take the point $(2,\,0,\,0)$.
\end{enumerate}
\noindent Here the nodes on each of the planes lie in the $2$-dimensional  \textbf{NCA} configurations
constructed for degrees $2$, $1$ and $0$ respectively.

Similarly to lines in a square in the two-dimensional case,
parallel hyperplanes in $\R^k$  have to be searched
while generating hypothesis for a nested type. Using a
reasoning similar to the two-dimensional case one can show that it
is always sufficient to search a hypercube with sides $[0, \ldots,
\Sigma_{j = 1}^{s} d_j]$.

\subsection{Automatically inferring size-aware types: the procedure}
\label{SECTION:COMBINING}

%Combining hypothesis generation and checking

The type checking procedure and the size
hypothesis generation can be combined to create an inference procedure.
The procedure starts with assuming a fixed degree. The assumptions is that this degree is the maximum degree of all polynomials in the type. If checking
rejects the  hypothesis generated for this degree, the degree is increased
and the test-check cycle is repeated.
The procedure is
semi-algorithmic: it terminates only when the function is well-typable.
%In this sense the procedure is complete w.r.t.
%the type-checker: if a function definition is well-typed, the type will be found.

Recently, we have developed a demonstrator for the inference procedure
described in \cite{vKShvE07}. It is accessible on \texttt{www.aha.cs.ru.nl}.

For any shapely program, the underlying type (the type
without size annotations) can be derived by a standard type
inference algorithm \cite{milner}. After straightforwardly
annotating input sizes with size variables and output sizes with
size expression variables, we have for example
$$
\cprod : \st{\alpha}{n_1} \times
\st{\alpha}{n_2} \rightarrow
\st{\st{\alpha}{p_2(n_1,n_2)}}{p_1(n_1,n_2)}
$$

To derive the size expressions on the right hand side we use the
following procedure. First, the maximum degree of the occurring size
expressions is assumed, starting with zero. Then, a hypothesis is
generated for each size expression, from $p_1$ to $p_s$.
After hypotheses have been obtained for all size expressions they are added to the type
and this hypothesis type is checked using the type checking
algorithm. If it is accepted, the type is returned. If not, the
procedure is repeated for a higher degree $d$.

The schema below shows the procedure in pseudo-code. The \textit{TryIncreasingDegrees}
function generates (by \textit{GetSizeAwareType}) and checks (by \textit{CheckSizeAwareType})
hypotheses. A size expression is derived by selecting a node configuration (\textit{GetNodeConf}),
running the tests for
these nodes (\textit{RunTests}), and deriving the size polynomial from the test
results (\textit{DerivePolynomial}).

%Note that if the assumed degree is lower than the true degree, then
%the derived polynomials may be wrong. It will be later rejected by a type checker, or
%the nodes where the size annotations are completely measured cannot be determined correctly. It may
%happen that the node configuration has ``too many'' points where the size
%expression is not measurable so the test results do not provide enough
%information to uniquely infer the inner polynomial(s). In that case one
%increases the degree and continue inference.
\vspace{0.3cm}

\framebox{
%\begin{figure}[t]
\begin{tabular}{l}
\textsf{Function:} \textsc{TryIncreasingDegrees}\\
\textsf{Input:} a degree $d$, a function definition \texttt{f}\\
\textsf{Output:} the size-aware type of that function\\
\\
\begin{tabular}{@{}l@{\ }l@{\ }l@{\ }l@{}}
\multicolumn{4}{@{}l@{}}{\textsc{TryIncreasingDegrees}(\textit{d}, \texttt{f}) =}\\
\quad\ & let & \textit{type} & \textsf{=}  \textsc{InferUnderlyingType}(\texttt{f})\\
&     & \textit{atype} & \textsf{=}  \textsc{AnnotateWithSizeVariables}(\textit{type})\\
&     & \textit{vs} & \textsf{=}  \textsc{GetOutputSizeVariables}(\textit{atype})\\
&     & \textit{stype} & \textsf{=}  \textsc{GetSizeAwareType}(\textit{d}, \texttt{f}, \textit{atype}, \textit{vs}, [\ ])\\
& in  & \multicolumn{2}{@{}l@{}}{if (\textsc{CheckSizeAwareType}(\textit{stype}, \texttt{f})) then \textit{stype}}  \\
& & \multicolumn{2}{@{}l@{}}{else \textsc{TryIncreasingDegrees}(\textit{d+1}, \texttt{f})}\\
\end{tabular}\\
\\
\\
\begin{tabular}{ll}
\textsf{Function:} &\textsc{GetSizeAwareType}\\
\textsf{Input:} & a degree $d$, \\
& a function definition \texttt{f},\\
& its annotated type,\\
& a list of unknown size annotations, \\
& and the polynomials already derived\\
\textsf{Output:} & the size-aware type \\
& of that function if the degree is high enough\\
\end{tabular}
\\
\begin{tabular}{@{}l@{}}
\textsc{GetSizeAwareType}(\textit{d}, \texttt{f}, \textit{atype}, [\ ], \textit{ps}) =\\
\quad\ \textsc{AnnotateWithSizeExpressions}(\textit{atype},
\textit{ps})    // The End\\
\textsc{GetSizeAwareType}(\textit{d}, \texttt{f}, \textit{atype}, \textit{v}:\textit{vs}, \textit{ps}) =\\
\quad
\begin{tabular}{@{}l@{\ }l@{\ }l@{}}
let & \textit{nodes} & = \textsc{GetNodeConf(\textit{d}, \textit{atype}, \textit{ps})} \\
    & \textit{results} & = \textsc{RunTests}(\texttt{f}, \textit{nodes})\\
    & \textit{p} & = \textsc{DerivePolynomial}(\textit{d}, \textit{v}, \textit{atype}, \textit{nodes}, \textit{results})\\
in  & \multicolumn{2}{@{}l@{}}{\textsc{GetSizeAwareType}(\textit{d}, \texttt{f}, \textit{atype}, \textit{vs}, \textit{p}:\textit{ps})}\\
\end{tabular}
\end{tabular}
\end{tabular}
%\caption{The type inference procedure in pseudo-code}
%\label{Algorithm}
%\end{figure}
}
\vspace{0.3cm}

\noindent If a type is rejected, this can mean two things. First, the assumed
degree was too low and one of the size expressions has a higher
degree. That is why the procedure continues for a higher degree.
Another possibility is that one of the size expressions is not a
polynomial (the function definition is not shapely) or that the type
cannot be checked due to incompleteness of the type system. In that case the procedure
will not terminate.
%Fortunately, in practice a suitable stopping criterion may be used.
If the function is well-typable, the
procedure will eventually find the correct size-aware type and terminate.

A collection of examples -- function definitions together with size measurements --
is presented in \cite{vKShvE07}.

\subsection{Complexity of hypotheses-generating phase}
\label{testcompl}

Given a function definition,
its underlying first-order type and a maximal degree of hypothetical polynomials,
the complexity of its hypothesis-generating phase depends on three parameters:
\begin{enumerate}[$\bullet$]
\item the nestedness $s \geq 0$ of the output type which may be either
$\st{\ldots \st{\intt}{p_s}\ldots}{p_1}$ or $\st{\ldots \st{\alpha}{p_s}\ldots}{p_1}$,
\item  the fixed maximal degree $d$ of the polynomials $p_1,\ldots,\,p_s$,
\item  the number of size variables $k$ defined by the input type of the
function.
\end{enumerate}

To generate hypothesis for $p_1(n_1,\,\ldots,\,n_k)$ one
\begin{enumerate}[(1)]
\item generates  $N^{k}_{d}=\binom{k+d}{k}$ natural-valued nodes inductively on $k$;
it is done by the definition \ref{DEF:NCAk} of \textbf{NCA} configuration for the $k$-variable case
(note that for $k=1$ it is just the $1$-dimensional nodes $0,\ldots,\,d$).
\item generates a collection of $N^{k}_{d}$ concrete inputs with the sizes, defined by the nodes,
\item evaluates the function body $N^{k}_{d}=\binom{k+d}{k}$ times on these inputs,
\item solves the system of $N^{k}_{d}$ linear equations to obtain  $N^{k}_{d}$ coefficients
for $p_1$.
\end{enumerate}

Generating hypotheses for a $p_j$, $j>1$, is similar. However, generating the collection
$N^{k}_{d}=\binom{k+d}{k}$ nodes is more complicated, since nodes sending some $p_{j'}$, $j'<j$, to zero
are excluded. In the worst case, to find correct nodes, one needs to
evaluate a $k$ dimensional cube with side $[0,\ldots,\,jd]$, that is
to evaluate (to check if it has a zero value)
$j-1$ polynomials in at most $(jd+1)^k$ nodes.

Thus, for each $1 \leq j \leq s$ the complexity is bounded by
$c_{\textit{eval}\;  p_1,\ldots, p_{j-1}} + c_{\textit{eval}\; p_j} + c_\textit{gauss}$,
where
\begin{enumerate}[$\bullet$]
\item $c_{\textit{eval} \; p_1,\ldots, p_{j-1}}=(j-1) \cdot (jd+1)^k$ evaluations of polynomials,
\item $c_{\textit{eval}\;  p_{j}}=N^{k}_{d}=\binom{k+d}{k}$ evaluations of the function definition,
\item $c_{\textit{gauss}}=O(N^{k\;2}_{d})$ is the complexity of Gaussian elimination.
\end{enumerate}

If the results of
evaluations of polynomials on the $j$-th step are memoised, then altogether for $j=1,\ldots\,s$
one needs at most $(s-1)\cdot (sd+1)^k$ evaluations of polynomials.
Thus, the complexity of the hypotheses-generating phase for  all $j=1,\ldots\,s$ together
is $(s-1)\cdot (sd+1)^k + s \cdot \binom{k+d}{k} + s \cdot O(\binom{k+d}{k}^2)$.

\subsection{Inhabitants for the types of external functions}
\label{inhabit}
Let $\f_{\mathsf{ext}}$ be an external function. Since the function is external, its code is not present in our language. However, its first-order type may be available. We have to trust this type since we cannot check it.

For inference of types of other functions that somewhere call $\f_{\mathsf{ext}}$, our testing procedure requires the possibility to evaluate within our language the code of the external function.
Such code can be made available in our language by constructing an inhabitant of the type of $\f_{\mathsf{ext}}$.

For our demonstrator, an alternative solution would be to create an actual external call for each occurrence of an external function. This may require more implementation effort within the demonstrator. The type inference procedure might take more time because the external function may require more time to execute than the generated inhabitants of the type.
Therefore, we prefer to work with inhabitants (which yields the same size dependencies as using external functions directly).
For reasons of modularity it might even be worthwhile to also create inhabitants of internal functions (e.g. in the case of using an interface to a huge, time intensive library).

Below, we show how to construct in our language a function $\f$ which is an inhabitant of a given type of an external function. It is not necessary to demand that $\f$ and the external function are equal as set-theoretic maps. They must have the same size dependency, i.e. the same type.

Let $\f_{\mathsf{ext}}$ have the type $\st{\alpha}{n} \rightarrow \st{\alpha}{p(n)}$. We define the body of $\f$ by the following program expression:

$$
\match{l}{\hdarg}{\tlarg}{\nil}
{
\mathsf{gen}\Big(\hdarg,\, \mathsf{p}(p) (\mathsf{length}(l))\Big)
}
$$

\noindent Now we explain the subexpressions in the nil- and cons-branches.
In the nil-branch the expression returns the empty list. This is the only
choice, due to the following ``folklore'' property (which to our knowledge was not published earlier).

\begin{lemma}
Any \textit{total} polymorphic function
$g\,:\,\st{\alpha}{}\rightarrow \st{\alpha}{}$ maps the empty list
to the empty list.
\end{lemma}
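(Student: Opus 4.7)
The plan is to prove the lemma by a parametricity / free--theorem argument: because $g$ is polymorphic in $\alpha$, it cannot fabricate elements of $\alpha$ out of thin air, so on an input containing no $\alpha$-values it must return a list containing no $\alpha$-values, i.e.\ $\mathsf{nil}$. I would carry out this idea semantically rather than by case analysis on the syntactic definition of $g$, since the statement is phrased at the level of the set-theoretic interpretation of values (as in the model relation $\models$ of Section~\ref{zero-order types}).

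First I would instantiate the type variable $\alpha$ with a ground type having a single inhabitant (or, equivalently, reason in the interpretation where $\alpha$ is sent to the empty set). At the empty instantiation, the only list value of type $\mathsf{L}(\alpha)$ available in the semantics is the empty list $[\,]$ (interpreted by $\mathsf{NULL}$ in a heap), because a non-empty list would require a $\mathsf{hd}$-value in $\alpha$, and there is none. Totality of $g$ then forces $g([\,]) = [\,]$ at this instantiation.

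Second, I would lift the conclusion from the empty instantiation to an arbitrary ground instantiation of $\alpha$ by the naturality square for $g$: for any map $f : \beta \rightarrow \gamma$ between ground types, parametricity gives
$$
\mathsf{map}\, f \;\circ\; g_{\beta} \;=\; g_{\gamma}\;\circ\; \mathsf{map}\, f .
$$
Taking $\beta$ to be the empty (or singleton) type, $\gamma$ arbitrary, and evaluating both sides at $\mathsf{nil}$, the right-hand side becomes $g_{\gamma}(\mathsf{nil})$ since $\mathsf{map}\, f\,\mathsf{nil} = \mathsf{nil}$, while the left-hand side becomes $\mathsf{map}\, f\,(g_{\beta}(\mathsf{nil})) = \mathsf{map}\, f\,(\mathsf{nil}) = \mathsf{nil}$ by the previous step. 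Hence $g_{\gamma}(\mathsf{nil}) = \mathsf{nil}$ for every ground $\gamma$, which is exactly the claim.

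The main obstacle I expect is justifying the parametricity step in the exact formal setting of the paper, since parametricity is not explicitly axiomatised in the type system of Sections~\ref{zero-order types}--\ref{typing rules}. The cleanest way to discharge it is to argue directly on the semantics: in the interpretation $\models^{h}_{\mathsf{L}(\alpha)}$, a value $v$ denoting a non-empty list requires $h.\ell.\hd$ to be interpretable in $\alpha$, so the closure $\closures(g)$, whose body is built only from the constructs of the grammar of Section~\ref{language} and mentions no constants of type $\alpha$, cannot produce such a $v$ on input $\mathsf{nil}$ without accessing an $\alpha$-element from the input, which is absent. This reduces the lemma to a straightforward structural observation about the operational semantics, and it is precisely this semantic shortcut that makes the proof a short one.
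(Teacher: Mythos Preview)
Your proposal and the paper's proof both rest on Wadler's free theorem
\(\mathsf{map}(a)\circ g_\alpha = g_{\alpha'}\circ\mathsf{map}(a)\),
so the key idea is the same. The execution differs: the paper does not go through a special instantiation first. It argues directly by contradiction, assuming \(g_\alpha(\mathsf{nil})=[\mathit{hd},\ldots]\) and \(g_{\alpha'}(\mathsf{nil})=[\mathit{hd}',\ldots]\); evaluating both sides of the naturality square at \(\mathsf{nil}\) then forces \(a(\mathit{hd})=\mathit{hd}'\) for \emph{every} \(a:\alpha\to\alpha'\), which is impossible. This one-step contradiction is slightly better suited to the paper's setting, because the ground types here (built from \(\intt\) and sized lists) are all inhabited, so your ``empty instantiation'' is not literally available. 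Note also that a singleton instantiation is \emph{not} equivalent to an empty one: at a one-element type nothing prevents \(g(\mathsf{nil})\) from returning a non-empty list of the unique inhabitant, so your first step would not go through there. Your fallback sketch---a direct induction on the operational semantics showing the body of \(g\) cannot manufacture an \(\alpha\)-value---is a genuine alternative the paper does not pursue; the paper simply cites the free theorem and takes the contradiction route.
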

\proof
We prove this property using the ``free'' theorem $
\mathsf{map}({\mathsf{a}}) \circ g _{\alpha} = g_{\alpha'} \circ \mathsf{map}({\mathsf{a}})$
from \cite{Wad89}, which holds for all $\mathsf{a}\,:\,\alpha\rightarrow \alpha'$. Here $\mathsf{map}\,:\,(\alpha \rightarrow \alpha') \rightarrow
\st{\alpha}{}  \rightarrow \st{\alpha'}{}$ lifts $\mathsf{a}$ to lists,
and  $g_{\alpha}$ denotes the instantiation of $g$ with
type $\alpha$. Suppose the opposite:
$g_\alpha$ sends $\nil$ to $[\hdarg \ldots \mathit{stop}]$, and
$g_{\alpha'}$ sends $\nil$ to $[\hdarg' \ldots \mathit{stop}']$. Then
$\mathsf{map}({\mathsf{a}}) \circ g_\alpha$ sends $\nil$ to
$[\mathsf{a}(\hdarg)\ldots \mathsf{a}(\mathit{stop})]$
and  $g_{\alpha'} \circ \mathsf{map}({\mathsf{a})}$ sends $\nil$ to $[\hdarg'\ldots \mathit{stop}']$. It is not the case that for all $\mathsf{a}$ one has $\mathsf{a}(\hdarg)=\hdarg'$.\qed

\noindent It is a routine exercise to extend this ``property for free'' to nested lists.

In the cons-branch we use a straightforwardly defined function $\mathsf{gen}(z,\, x) : \alpha \times \intt \rightarrow \st{\alpha}{}$ that
outputs a list of $z$-s of length $x$ if $x$ is non-negative and does not terminate otherwise.
We also  use a function generator $\mathsf{p}$, that given a polynomial $p$,
generate a function definition $\mathsf{p}(p):\intt \rightarrow
\intt$ such that  $\mathsf{p}(p)(n)=p(n)$. It is easy to see
that for any non-empty list $l$ of length $n$
the composition $\mathsf{gen}(\hd,\,\mathsf{p}(p)(\mathsf{length}(l)))$ terminates
if $\f_{\mathsf{ext}}$ terminates. It follows from the fact
that if $\f_{\mathsf{ext}}$ terminates on $l$ then $p(n) \geq 0$, since $p(n)$ is the length of the corresponding output.

%%%%%%%%%%%%%%%%%%%%%%%%%%%%%%%%%%%%%%%%%%%%%%%%%%%%%%%%%%%%%%%%%%%%%%%%%
\section{Conclusion and Further Work}
\label{conclusion}

We have presented a natural syntactic restriction such that type
checking of a size-aware type system for first-order shapely
functions is decidable for polynomial size expressions without any
limitations on the degree of the polynomials.

A non-standard, practical method to infer types is introduced. It
uses run-time results to generate a set of equations. These
equations are linear and hence automatically solvable.
The method terminates on a non-trivial class
of shapely functions.

\subsection{Further work}

The system is defined for polymorphic lists. Recently, it has been shown~\cite{TFP08} how to extend the system to ordinary inductive types (no nested inductive definitions).

An obvious limitation of our approach is that we consider only
shapely functions. In practice, one is often interested to obtain
upper bounds on space complexity for non-shapely functions. A simple
example, where for a non-shapely function an upper bound would be
useful, is the function to $\mathsf{insert}$ an element in a list,
provided the list does not contain the element. At
present we
have been studying checking and inference of size annotations
in the form of collections of piecewise polynomials that represent
at least all possible size dependencies. For instance,
$\mathsf{insert}$ is annotated with $\{p(n)=n+i\}_{0 \leq i \leq 1}$,
and $\mathsf{delete}$ is annotated with $\{p(n)=n -^{\!\!\!\cdot} i\}_{0 \leq i \leq 1}$.
Such collections may be potentially infinite, like in the case
of recursive application of insert with $\{p(n, \,m)=n+i\}_{0 \leq i \leq m}$.
Here, involvement of real arithmetic is inevitable in type checking.
As for inference, when one  is interested in
strict (``principal type'') and polynomial lower and upper bounds,
$p_{\min}$ and $p_{\max}$ respectively,
it is possible to extend our testing procedure to
obtain them. Then, one checks the hypothesis in the form
$\{p_{\min}+i\}_{0 \leq i \leq (p_{\max} - p_{\min})}$.

We plan to allow both unsized integers and adding non-trivial sizes to integers.
The size of a non-negative sized integer
is taken to be its value. This allows to
type such functions as $\textsf{init}:\intt^n \rightarrow \st{\intt}{n}$,
which on the integer $n$ outputs the list of $1$ of length $n$.
With sized integers one can type such function definitions
without introducing dependent types.
Hence, the decision how to add sizes to integers is connected to the problem of using sized and non-sized types
within the same system. We leave it for future work based e.g. on
\cite{Vas03} and \cite{Jay97}.

Addition of other data structures and extension to non-shapely
functions will open the possibility to use the system for an actual
programming language.

Application of the methodology to estimate stack and time complexity
is considered as a topic for future projects.

\section*{Acknowledgments}

The authors would like to thank Alejandro Tamalet and the anonymous reviewers for their observations and valuable suggestions for improvement. We thank the students of Radboud University Nijmegen, --
Willem Peters, Bob Klaase, Elroy Jumpertz, Jeroen Claassens, Martin van de Goor and Ruben Muijrers --
without whom implementation of the on-line demonstrator would have been impossible.

%%%%%%%%%%%%%%%%%%%%%%%%%%%%%%%%%%%%%%%%%%%%%%%%%%%%%%%%%%%%%%%%%%%%%%%

\section*{Appendix: auxiliary lemmata for soundness proof}

\begin{lemma}[A program value's footprint is in the heap]
\label{LEMMA:FootprInHeap}$\Reg{h}{v} \subseteq \dom{h}$.
\end{lemma}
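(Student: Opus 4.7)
The statement is an unconditional fact about the \emph{footprint} function $\R$. The key observation is that $\R$ is defined recursively, but the recursion is not on the structure of the value $v$ (which can be a location pointing anywhere, with no fixed ``size''). Instead, in the only non-trivial clause, $\R$ is re-invoked on the strictly smaller heap $h|_{\dom{h}\setminus\{\ell\}}$. This is what makes the definition well-founded, and it is also what suggests the right induction measure.

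My plan is to perform induction on the cardinality $|\dom{h}|$ of the heap. The base case $|\dom{h}| = 0$ is immediate: the only interesting sub-case is $v = \ell$ for some location, but then $\ell \notin \dom{h}$, so $\Reg{h}{\ell} = \emptyset \subseteq \dom{h}$; the constant and $\nul$ cases are trivial by definition.

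For the inductive step, I would split on the form of $v$. The cases $v = c$ and $v = \nul$ give $\Reg{h}{v} = \emptyset$ directly. When $v = \ell$, if $\ell \notin \dom{h}$ the result is again immediate. Otherwise let $h' = h|_{\dom{h}\setminus\{\ell\}}$, so that
\[
\Reg{h}{\ell} \;=\; \{\ell\} \,\cup\, \Reg{h'}{h.\ell.\hd} \,\cup\, \Reg{h'}{h.\ell.\tl}.
\]
Since $|\dom{h'}| = |\dom{h}| - 1$, the induction hypothesis applies to $h'$ and to any value, in particular to $h.\ell.\hd$ and $h.\ell.\tl$, giving $\Reg{h'}{h.\ell.\hd} \subseteq \dom{h'} \subseteq \dom{h}$ and likewise for the tail. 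Together with $\{\ell\} \subseteq \dom{h}$, this yields the desired inclusion.

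The only subtlety to watch for is formulating the induction hypothesis uniformly over all values: the claim at rank $n$ should be ``for every heap $h$ with $|\dom{h}| \leq n$ and every $v \in \Val$, $\Reg{h}{v} \subseteq \dom{h}$.'' With this quantification order the recursive applications at $h'$ are covered no matter which concrete values $h.\ell.\hd$ and $h.\ell.\tl$ happen to be. I do not expect any real obstacle beyond choosing this induction measure correctly; in particular no reasoning about typing, model relations, or the operational semantics is needed.
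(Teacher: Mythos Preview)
Your proposal is correct and follows essentially the same approach as the paper: induction on the size of $\dom{h}$, with a case split on the shape of $v$ and the induction hypothesis applied to the restricted heap $h|_{\dom{h}\setminus\{\ell\}}$. If anything, your version is slightly more careful, since you explicitly treat the sub-case $\ell \notin \dom{h}$ in the inductive step and spell out the need to quantify the hypothesis over all values.
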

\begin{proof}
The lemma is proved by induction on the size of the (domain of the) heap $h$.
\begin{description}
\item[$\dom{h}=\emptyset$] Then no $\ell\in\dom{h}$ exists and
$\Reg{h}{v} = \emptyset$.
\item[$\dom{h}\ne\emptyset$]
\begin{description}
\item[$v=c$ or $v=\nul$] Then $\Reg{h}{v} = \emptyset$, which is trivially a subset of $\dom{h}$.
\item[$v=\ell$ and $\dom{h}=(\dom{h}\setminus\{\ell\})\cup\{\ell\}$]
From the definition of $\Reg{}{}$ we get  $\Reg{h}{\ell} = \{\ell\} \cup \Reg{\rstrloc{h}{\ell}}{h.l.\hd} \cup \Reg{\rstrloc{h}{\ell}}{h.l.\tl}$.
Applying the induction hypotheses we derive that
$\Reg{\rstrloc{h}{\ell}}{h.\ell.\hd} \subseteq \dom{\rstrloc{h}{\ell}}$ and
$\Reg{\rstrloc{h}{\ell}}{h.\ell.\tl} \subseteq \dom{\rstrloc{h}{\ell}}$. Hence, $\Reg{h}{l} \subseteq \dom{h}$.
\end{description}
\end{description}
\end{proof}

\begin{lemma}[Extending a heap does not change the footprints of program values]
\label{LEMMA:FootprExtHeap}If $\ell \notin \dom{h}$ and $h'=h[\ell.\hd:=v_\hd,\
\ell.\tl:=v_\tl]$ for some $v_\hd,\ v_\tl$ then for any $v \not=
\ell$ one has $\Reg{h}{v}=\Reg{h'}{v}$.
\end{lemma}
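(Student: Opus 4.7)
The plan is to prove the equality $\Reg{h}{v} = \Reg{h'}{v}$ by induction on $|\dom{h}|$, with a case analysis on the form of $v$. The cases $v = c \in \intt$ and $v = \nul$ are immediate from the defining equations of $\R$, both sides evaluating to $\emptyset$ regardless of the heap. The remaining case is $v = \ell'$ with $\ell' \neq \ell$, for which I split on whether $\ell' \in \dom{h}$.

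If $\ell' \notin \dom{h}$, then since $\ell' \neq \ell$ and $\dom{h'} = \dom{h} \cup \{\ell\}$, one also has $\ell' \notin \dom{h'}$, so $\Reg{h}{\ell'} = \emptyset = \Reg{h'}{\ell'}$ by the first defining clause of the location case. If $\ell' \in \dom{h}$, then $\ell' \in \dom{h'}$ too, and unfolding the second clause of $\R$ on both sides yields expressions of the same shape $\{\ell'\} \cup \Reg{-}{-.\hd} \cup \Reg{-}{-.\tl}$. Because $\ell \neq \ell'$, the stored values at $\ell'$ agree across the two heaps, $h.\ell'.\hd = h'.\ell'.\hd$ and $h.\ell'.\tl = h'.\ell'.\tl$; moreover, $\rstrloc{h'}{\ell'} = (\rstrloc{h}{\ell'})[\ell.\hd := v_\hd,\ \ell.\tl := v_\tl]$, since removing $\ell'$ commutes with extending by the fresh $\ell$. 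Thus each recursive computation on the right is obtained from the corresponding one on the left by an extension of the same form, applied to a strictly smaller heap $\rstrloc{h}{\ell'}$ in whose domain $\ell$ still does not occur. The induction hypothesis supplies the required pointwise equalities, and taking unions closes the case.

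The main obstacle is the applicability of the induction hypothesis to the recursive calls on the values $h.\ell'.\hd$ and $h.\ell'.\tl$: the hypothesis only delivers equality for values distinct from $\ell$. To guarantee this, I rely on the (tacit) well-formedness invariant that any value stored at a location of $h$ is either an integer, $\nul$, or a location already in $\dom{h}$; since $\ell \notin \dom{h}$ by assumption, that invariant forces $h.\ell'.\hd, h.\ell'.\tl \neq \ell$. The invariant is preserved by all rules of the operational semantics because fresh locations are always chosen outside $\dom{h}$ (cf.\ \textsc{OSCons}), so it holds for every heap arising in the soundness argument where this lemma is invoked. Once this side condition is established, the inductive step goes through by a routine union-and-substitution calculation, completing the proof.
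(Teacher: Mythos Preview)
Your approach is exactly the paper's: induction on $|\dom{h}|$ with case analysis on $v$, the only nontrivial case being a location $\ell' \neq \ell$ with $\ell' \in \dom{h}$, handled by unfolding $\R$ and invoking the induction hypothesis on $\rstrloc{h}{\ell'}$ and $\rstrloc{h'}{\ell'} = (\rstrloc{h}{\ell'})[\ell.\hd:=v_\hd,\ell.\tl:=v_\tl]$.

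Where you go beyond the paper is in flagging the side condition needed to apply the induction hypothesis: the IH only covers values $v \neq \ell$, and nothing in the lemma as stated rules out $h.\ell'.\hd = \ell$ or $h.\ell'.\tl = \ell$. You are right that this is a genuine gap; the paper simply writes ``and the induction assumption one has\ldots'' without checking it. Indeed, the lemma is false for heaps with a dangling pointer to $\ell$: take $\dom{h}=\{\ell'\}$ with $h.\ell'.\hd=\ell$, $h.\ell'.\tl=\nul$; then $\Reg{h}{\ell'}=\{\ell'\}$ but $\Reg{h'}{\ell'}=\{\ell',\ell\}$.

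Your proposed repair via a well-formedness invariant is the right idea, but be careful how you thread it through the induction. The invariant ``every stored location lies in $\dom{h}$'' is \emph{not} preserved by passing to $\rstrloc{h}{\ell'}$ (removing $\ell'$ may create dangling pointers to $\ell'$), so you cannot simply add it as a hypothesis of the lemma and re-run the same induction. What does survive restriction is the weaker, $\ell$-specific condition ``no value stored in $h$ equals $\ell$'': it follows from well-formedness together with $\ell\notin\dom{h}$, it is inherited by every restricted heap, and it is exactly what you need to discharge $h.\ell'.\hd\neq\ell$ and $h.\ell'.\tl\neq\ell$ at each step. Strengthening the lemma with that hypothesis makes the induction go through cleanly, and the hypothesis is satisfied by every heap produced by the operational semantics, as you note.
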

\begin{proof} The lemma is proved by induction on the size of the (domain of the) heap $h$.
\begin{description}
 \item[$\dom{h}=\emptyset$] Since $h' = [ \ell.\hd \:= v_\hd, \ell.\tl := v_\tl ]$ and $v \not= \ell$ we have 
$v \not\in \{\ell\}=\dom{h'}$. Therefore, $\Reg{h}{v} = \emptyset = \Reg{h'}{v}$.
 \item[$\dom{h}\ne\emptyset$] We proceed by case distinction on $v$.
\begin{description}
\item[$v=c$ or $v=\nul$] Then, $\Reg{h}{v} = \emptyset = \Reg{h'}{v}$.
\item[$v=\ell'$]
If $\ell'\notin\dom{h}$, then due to $\ell'\ne\ell$ we have $\ell'\notin\dom{h}$ as well
and $\Reg{h}{v} = \emptyset = \Reg{h'}{v}$.

Let  $\ell'\in\dom{h}$.
From the definition of $\Reg{}{}$ we get
$$\Reg{h}{\ell'} =
\{\ell'\}\ \cup\ \Reg{\rstrloc{h}{\ell'}}{h.\ell'.\hd}\ \cup\
\Reg{\rstrloc{h}{\ell'}}{h.\ell'.\tl}.$$

Due to
$h'(\ell')=h(\ell')$ and
$$\rstrloc{h'}{\ell'}=\rstrloc{h}{\ell'}[\ell.\hd:=v_\hd, \, \ell.\tl:=v_\tl],$$
\noindent
and the induction assumption
 one has
$$
\begin{array}{l}
\Reg{\rstrloc{h}{\ell'}}{h.\ell'.\hd}=\Reg{\rstrloc{h'}{\ell'}}{h'.\ell'.\hd}\\
\Reg{\rstrloc{h}{\ell'}}{h.\ell'.\tl}=\Reg{\rstrloc{h'}{\ell'}}{h'.\ell'.\tl}\\
\end{array}
$$
\noindent
So,
$$
\begin{array}{l}
\Reg{h'}{\ell'} =\\
=\{\ell'\}\ \cup\ \Reg{\rstrloc{h'}{\ell'}}{h'.\ell'.\hd}\ \cup\
\Reg{\rstrloc{h'}{\ell'}}{h'.\ell'.\tl}=\\
=\{\ell'\}\,\cup\,\Reg{\rstrloc{h}{\ell'}}{h.\ell'.\hd}\ \cup\
\Reg{\rstrloc{h}{\ell'}}{h.\ell'.\tl}=\\
=\Reg{h}{\ell'}.\\
\end{array}
$$

\end{description}
\end{description}

\end{proof}

\begin{lemma}[Extending heaps preserves model relations]
\label{LEMMA:ModelExtHeap}\ \\
For all heaps $h$ and $h'$, if $h'|_{\dom{h}}=h$ then $\mdl{v}{h}{\typesvg}{w}$ implies $\mdl{v}{h'}{\typesvg{}}{w}$.
\end{lemma}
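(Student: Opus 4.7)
The plan is to induct on the derivation of $\mdl{v}{h}{\typesvg}{w}$, which for the list case amounts to induction on the natural constant $n^\bullet$. The two base cases are immediate. If $v = c$ and $\typesvg = \intt$, then $\mdl{c}{h'}{\intt}{c}$ holds by definition, since this clause does not mention the heap. Similarly, if $v = \nul$ and $\typesvg = \st{\typesvg'}{0}$, the clause $\mdl{\nul}{h'}{\st{\typesvg'}{0}}{\texttt{[]}}$ holds regardless of the heap.

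The only substantive case is $v = \ell$, $\typesvg = \st{\typesvg'}{n^\bullet}$ with $n^\bullet \geq 1$, and $w = w_\hd :: w_\tl$. From $\ell \in \dom{h}$ together with $h'|_{\dom{h}} = h$, I first extract $\ell \in \dom{h'}$, $h'.\ell.\hd = h.\ell.\hd$ and $h'.\ell.\tl = h.\ell.\tl$. The key intermediate observation is that heap restriction commutes with the extension relation: for any $\ell \in \dom{h}$ one has $\rstrloc{h'}{\ell}|_{\dom{\rstrloc{h}{\ell}}} = \rstrloc{h}{\ell}$. Indeed, for any $\ell' \in \dom{h} \setminus \{\ell\}$ the chain $\rstrloc{h'}{\ell}(\ell') = h'(\ell') = h(\ell') = \rstrloc{h}{\ell}(\ell')$ is immediate from the hypothesis. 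This makes the induction hypothesis applicable to the pairs $\bigl(\rstrloc{h}{\ell},\,\rstrloc{h'}{\ell}\bigr)$, yielding $\mdl{h'.\ell.\hd}{\rstrloc{h'}{\ell}}{\typesvg'}{w_\hd}$ and $\mdl{h'.\ell.\tl}{\rstrloc{h'}{\ell}}{\st{\typesvg'}{n^\bullet - 1}}{w_\tl}$ from the given premises. Reassembling these premises through the list clause of the model relation gives $\mdl{\ell}{h'}{\st{\typesvg'}{n^\bullet}}{w_\hd :: w_\tl}$, as required.

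There is no real obstacle beyond careful bookkeeping; the single place where anything non-trivial happens is the commutation between $\rstrloc{\cdot}{\ell}$ and the extension relation $(-)|_{\dom{h}} = h$, which should probably be stated as a small auxiliary lemma (or simply inlined as above). The induction is well-founded because $n^\bullet$ strictly decreases in the tail-premise, and the head-premise uses $\typesvg'$, whose list-nesting depth is strictly smaller than that of $\typesvg$.
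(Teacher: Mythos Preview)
Your proof is correct and follows essentially the same approach as the paper's. The paper frames it as a nested induction (outer induction on the structure of $\typesvg$, inner induction on $n^\bullet$), whereas you frame it as induction on the derivation of $\mdl{v}{h}{\typesvg}{w}$; these are equivalent here, and the key step---showing $\rstrloc{h'}{\ell}|_{\dom{\rstrloc{h}{\ell}}} = \rstrloc{h}{\ell}$ so that the induction hypothesis applies to the restricted heaps---is exactly what the paper does.
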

\begin{proof}\ \\
The lemma is proved by induction on the structure of $\typesvg$.
\begin{description}
\item[$\typesvg = \intt$] In this case, $v$ is a constant $c$ and $w = c$, hence $\mdl{v}{h'}{\typesvg}{w}$ by the definition.
\item[$\typesvg = \st{\typesvg{}'}{n^\bullet}$] We proceed by induction on $n^\bullet$.
\begin{description}
\item[$n^\bullet = 0$] In this case, $v = \nul$ and $w = \mbox{\texttt{[]}}$, hence $\mdl{v}{h'}{\typesvg}{w}$ by the definition.
\item[$n^\bullet = m^\bullet + 1$]  By the definition $v$ is a location $\ell$ and $\mdl{\ell}{h}{\st{\typesvg{}'}{m^\bullet + 1}}{w_\hd :: w_\tl}$ for some $w_\hd$ and $w_\tl$ such that
$$
\begin{array}{l}
\ell \in\dom{h},\\
\mdl{h.\ell.\hd}{\rstrloc{h}{\ell}}{\typesvg{}'}{w_\hd},\\
\mdl{h.\ell.\tl}{\rstrloc{h}{\ell}}{\st{\typesvg{}'}{m^\bullet}}{w_\tl}
\end{array}
$$
\noindent
We want to apply the induction assumption, with heaps $h|_{\dom{h}\setminus\{\ell\}}$,
$h'|_{\dom{h'}\setminus\{\ell\}}$ (as ``$h$'' and ``$h'$'' respectively).
The condition of the lemma is satisfied because
$$
\begin{array}{l}
\restrict{\rstrloc{h'}{\ell}}{\dom{\rstrloc{h}{\ell}}}\\
=\restrict{\rstrloc{h'}{\ell}}{\dom{h}\setminus\{\ell\}}\\
=\restrict{h'}{\dom{h} \setminus\{\ell\}}=\rstrloc{h}{\ell}\\
\end{array}
$$
\noindent Thus, we apply the induction assumption and with $h.\ell = h'.\ell$ obtain
$$
\begin{array}{l}
\ell \in\dom{h'},\\
\mdl{h'.\ell.\hd}{\rstrloc{h'}{\ell}}{\typesvg{}'}{w_\hd},\\
\mdl{h'.\ell.\tl}{\rstrloc{h'}{\ell}}{\st{\typesvg{}'}{m^\bullet}}{w_\tl}
\end{array}
$$
\noindent Then,
$\mdl{\ell}{h'}{\st{\typesvg{}'}{m^\bullet+1}}{w_\hd::w_\tl}$ by the
definition.
\end{description}
\end{description}
\end{proof}

\begin{lemma}[The model relation for $v$  depends only on values in the footprint of $v$]\ \\
\label{LEMMA:ModelFootpr}For $v$, $h$, $w$, and $\typesvg$, the relation
$\mdl{v}{h}{\typesvg}{w}$ implies
$\mdl{v}{h|_{\Reg{h}{v}}}{\typesvg}{w}$.
\end{lemma}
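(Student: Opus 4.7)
The plan is to prove the statement by induction on the structure of the ground type $\typesvg$, and within the list case by a further induction on the natural-number size annotation $n^\bullet$. The two base cases are trivial: if $\typesvg = \intt$ then $v = c$ and the footprint is empty; if $\typesvg = \st{\typesvg{}'}{0}$ then $v = \nul$ and again the footprint is empty, so $h|_{\R(h,v)}$ is the empty heap and $\mdl{\nul}{\emptyset}{\st{\typesvg'}{0}}{\mbox{\texttt{[]}}}$ holds by definition.

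The interesting case is $\typesvg = \st{\typesvg{}'}{n^\bullet+1}$, in which $v = \ell \in \dom{h}$ and the definition of $\models$ gives $\mdl{h.\ell.\hd}{\rstrloc{h}{\ell}}{\typesvg'}{w_\hd}$ and $\mdl{h.\ell.\tl}{\rstrloc{h}{\ell}}{\st{\typesvg'}{n^\bullet}}{w_\tl}$ for suitable $w_\hd, w_\tl$. I would apply the induction hypothesis (structural for the head, inner for the tail) to these two judgements, obtaining
$$\mdl{h.\ell.\hd}{\rstrloc{h}{\ell}|_{\R(\rstrloc{h}{\ell},\,h.\ell.\hd)}}{\typesvg'}{w_\hd}\quad\text{and}\quad\mdl{h.\ell.\tl}{\rstrloc{h}{\ell}|_{\R(\rstrloc{h}{\ell},\,h.\ell.\tl)}}{\st{\typesvg'}{n^\bullet}}{w_\tl}.$$

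The main obstacle is the bookkeeping that links the two restricted heaps of the induction hypotheses to the single heap $h|_{\R(h,\ell)}$ required in the conclusion. The key identity is that by the definition of $\R$,
$$\rstrloc{\bigl(h|_{\R(h,\ell)}\bigr)}{\ell} \;=\; \rstrloc{h}{\ell}\Big|_{\R(\rstrloc{h}{\ell},\,h.\ell.\hd)\,\cup\,\R(\rstrloc{h}{\ell},\,h.\ell.\tl)},$$
so this heap restricts down to either of the two heaps appearing in the induction hypotheses. Invoking Lemma~\ref{LEMMA:ModelExtHeap} (extending heaps preserves the model relation) on each of the two judgements lifts them to this common heap $\rstrloc{(h|_{\R(h,\ell)})}{\ell}$. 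Finally, since $\ell \in \dom{h|_{\R(h,\ell)}}$ and the $\hd$ and $\tl$ fields at $\ell$ are unchanged by the restriction (because $\ell \in \R(h,\ell)$), the definition of the $\models$ relation for a cons-cell directly yields $\mdl{\ell}{h|_{\R(h,\ell)}}{\st{\typesvg'}{n^\bullet+1}}{w_\hd :: w_\tl}$, which is what we wanted.
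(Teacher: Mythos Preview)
Your proposal is correct and follows essentially the same approach as the paper: induction on the structure of $\typesvg$ with an inner induction on the size annotation, applying the induction hypotheses to the $\hd$ and $\tl$ components over the heap $\rstrloc{h}{\ell}$, and then using Lemma~\ref{LEMMA:ModelExtHeap} to enlarge the two footprint-restricted heaps to the common heap $\rstrloc{(h|_{\R(h,\ell)})}{\ell}$. Your key identity is exactly the bookkeeping the paper carries out (implicitly relying on Lemma~\ref{LEMMA:FootprInHeap} to ensure $\ell$ does not lie in the sub-footprints), just packaged more explicitly.
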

\begin{proof} The lemma is proved by induction on $\typesvg$.
\begin{description}
\item[$\typesvg = \intt$] By the definition, $v$ is a constant $c$ and thus $w = c$.
Then $\mdl{v}{h|_{\Reg{h}{v}}}{\typesvg}{w}$.
\item[$\typesvg = \st{\typesvg{}}{n^\bullet}$]  We proceed by induction on $n^\bullet$.
\begin{description}
\item[$\typesvg = \st{\typesvg{}'}{0}$] By the definition $v = \nul$ and $w = \mbox{\texttt{[]}}$.
Then $\mdl{v}{h|_{\Reg{h}{v}}}{\typesvg}{w}$.
\item[$\typesvg = \st{\typesvg{}'}{m^\bullet + 1}$] By the definition $v=\ell$.
Then $\mdl{\ell}{h}{\st{\typesvg{}'}{m^\bullet+1}}{w}$ means
that $w=w_\hd::w_\tl$ for some $w_\hd$ and $w_\tl$, and
$$
\begin{array}{l}
\ell \in\dom{h},\\
\mdl{h.\ell.\hd}{\rstrloc{h}{\ell}}{\typesvg{}'}{w_\hd},\\
\mdl{h.\ell.\tl}{\rstrloc{h}{\ell}}{\st{\typesvg{}'}{m^\bullet}}{w_\tl}
\end{array}
$$
\noindent
We apply the induction assumption, with the heap $\rstrloc{h}{\ell}$:

$$
\begin{array}{l}
\ell \in\dom{h},\\
\mdl{h.\ell.\hd}{\rstrloc{h}{\ell}|_{\Reg{\rstrloc{h}{\ell}}{h.\ell.\hd}}}{\typesvg{}'}{w_\hd},\\
\mdl{h.\ell.\tl}{\rstrloc{h}{\ell}|_{\Reg{\rstrloc{h}{\ell}}{h.\ell.\tl}}}{\st{\typesvg{}'}{m^\bullet}}{w_\tl}
\end{array}
$$
\noindent

Due to $\Reg{\rstrloc{h}{\ell}}{h.\ell.\hd}\subseteq \dom{h}\setminus \{\ell\}$ (lemma \ref{LEMMA:FootprInHeap})
we have

$$
\begin{array}{l}
\restrict{\rstrloc{h}{\ell}}{{\Reg{\rstrloc{h}{\ell}}{h.\ell.\hd}}}=\\
=\restrict{h}{\Reg{\rstrloc{h}{\ell}}{h.\ell.\hd}}=\\
=\restrict{h}{\Reg{\rstrloc{h}{\ell}}{h.\ell.\hd}\setminus\{\ell\}}.\\
\end{array}$$

\noindent
Similarly $\restrict{\rstrloc{h}{\ell}}{{\Reg{\rstrloc{h}{\ell}}{h.\ell.\tl}}}=
\restrict{h}{\Reg{\rstrloc{h}{\ell}}{h.\ell.\tl}\setminus\{\ell\}}$.

Due to $\ell \in \Reg{h}{\ell}$, and lemma
\ref{LEMMA:ModelExtHeap} -- with $\Reg{\rstrloc{h}{\ell}}{h.\ell.\hd} \setminus\{\ell\}\subseteq
\Reg{h}{h.\ell.\hd}\setminus\{\ell\}$, we have

$$
\begin{array}{l}
\ell \in\dom{h_{\Reg{h}{\ell}}},\\
\mdl{\restrict{h}{\Reg{h}{\ell}}.\ell.\hd}{\restrict{h}{\Reg{h}{h.\ell.\hd}\setminus\{\ell\}}}{\typesvg{}'}{w_\hd},\\
\mdl{\restrict{h}{\Reg{h}{\ell}}.\ell.\tl}{\restrict{h}{\Reg{h}{h.\ell.\hd}\setminus\{\ell\}}}{\st{\typesvg{}'}{n^\bullet}}{w_\tl}
\end{array}
$$
\noindent Thus,
$\mdl{\ell}{h|_{\Reg{h}{\ell}}}{\st{\typesvg{}'}{m^\bullet+1}}{w_\hd::w_\tl}$.
\end{description}
\end{description}
\end{proof}

\begin{lemma}[Equality of footprints implies equivalence of model relations]\ \\
\label{LEMMA:EqFootprModels}
If $h|_{\Reg{h}{v}}=h'|_{\Reg{h}{v}}$ then
$\mdl{v}{h}{\typesvg}{w}$ implies $\mdl{v}{h'}{\typesvg}{w}$.
\end{lemma}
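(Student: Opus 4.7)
The plan is to proceed by induction on the structure of the ground type $\tau^{\bullet}$, with a nested induction on the natural annotation $n^{\bullet}$ in the list case. When $\tau^{\bullet}=\intt$, the premise forces $v=c$ with $w=c$, and the defining clause $c\models^{h'}_{\intt}c$ holds trivially since the relation does not inspect the heap. Similarly for $\tau^{\bullet}=\st{\tau^{\bullet\prime}}{0}$ we have $v=\nul$ and $w=\mbox{\texttt{[]}}$, so the conclusion is immediate.

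The interesting case is $v=\ell$ with $\tau^{\bullet}=\st{\tau^{\bullet\prime}}{n^{\bullet}+1}$. Unfolding the defining clause of $\mdl{\ell}{h}{\st{\tau^{\bullet\prime}}{n^{\bullet}+1}}{w_{\hd}::w_{\tl}}$ gives $\ell\in\dom{h}$, $\mdl{h.\ell.\hd}{\rstrloc{h}{\ell}}{\tau^{\bullet\prime}}{w_{\hd}}$, and $\mdl{h.\ell.\tl}{\rstrloc{h}{\ell}}{\st{\tau^{\bullet\prime}}{n^{\bullet}}}{w_{\tl}}$. Since $\ell\in\Reg{h}{\ell}$ by definition of $\R$ and $h$ and $h'$ agree on $\Reg{h}{\ell}$, we immediately obtain $\ell\in\dom{h'}$, $h'.\ell.\hd=h.\ell.\hd$, and $h'.\ell.\tl=h.\ell.\tl$.

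To recover the two recursive clauses for $h'$, I would apply the induction hypothesis to the subvalues $h.\ell.\hd$ (outer induction on $\tau^{\bullet\prime}$) and $h.\ell.\tl$ (inner induction on $n^{\bullet}$) in the restricted heaps. For this I need the smaller footprint equalities
\[
\rstrloc{h}{\ell}|_{\Reg{\rstrloc{h}{\ell}}{h.\ell.\hd}}=\rstrloc{h'}{\ell}|_{\Reg{\rstrloc{h}{\ell}}{h.\ell.\hd}}
\]
and the analogous statement for $\tl$. By definition of $\R$,
\[
\Reg{h}{\ell}=\{\ell\}\cup\Reg{\rstrloc{h}{\ell}}{h.\ell.\hd}\cup\Reg{\rstrloc{h}{\ell}}{h.\ell.\tl},
\]
and Lemma \ref{LEMMA:FootprInHeap} guarantees each of the two footprints on the right is contained in $\dom{\rstrloc{h}{\ell}}=\dom{h}\setminus\{\ell\}$, hence in $\Reg{h}{\ell}\setminus\{\ell\}$. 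On that set the restrictions $\rstrloc{h}{\ell}$ and $\rstrloc{h'}{\ell}$ coincide, because the original heaps do and the removed point $\ell$ is excluded. The induction hypothesis then yields $\mdl{h'.\ell.\hd}{\rstrloc{h'}{\ell}}{\tau^{\bullet\prime}}{w_{\hd}}$ and $\mdl{h'.\ell.\tl}{\rstrloc{h'}{\ell}}{\st{\tau^{\bullet\prime}}{n^{\bullet}}}{w_{\tl}}$, from which the defining clause assembles $\mdl{\ell}{h'}{\st{\tau^{\bullet\prime}}{n^{\bullet}+1}}{w_{\hd}::w_{\tl}}$.

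The main obstacle is the footprint bookkeeping in the inductive step: verifying that agreement of $h$ and $h'$ on the ``large'' footprint $\Reg{h}{\ell}$ descends to agreement of $\rstrloc{h}{\ell}$ and $\rstrloc{h'}{\ell}$ on the ``small'' footprints computed for $h.\ell.\hd$ and $h.\ell.\tl$ \emph{inside the restricted heap}. Once this inclusion is unfolded using the defining equation of $\R$ together with Lemma \ref{LEMMA:FootprInHeap}, the rest of the argument is a mechanical application of the induction hypothesis and the defining clauses of $\models$.
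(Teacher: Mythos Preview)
Your proof is correct, but it takes a different route from the paper's. The paper does not re-run the induction on $\tau^{\bullet}$; instead it chains two already-established lemmas: first Lemma~\ref{LEMMA:ModelFootpr} (the model relation for $v$ depends only on values in $\Reg{h}{v}$) to pass from $\mdl{v}{h}{\tau^{\bullet}}{w}$ to $\mdl{v}{h|_{\Reg{h}{v}}}{\tau^{\bullet}}{w}$, then the hypothesis $h|_{\Reg{h}{v}}=h'|_{\Reg{h}{v}}$ to rewrite the heap, and finally Lemma~\ref{LEMMA:ModelExtHeap} (extending heaps preserves model relations) to go from $h'|_{\Reg{h}{v}}$ up to $h'$. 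The whole argument is four lines and never unfolds the inductive definition of $\models$ again.

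Your direct induction is more self-contained---you use only Lemma~\ref{LEMMA:FootprInHeap} for the footprint bookkeeping---whereas the paper's proof leverages the two heavier auxiliary lemmas it has already paid for. The trade-off is that your argument essentially re-derives, inline, the content of Lemmas~\ref{LEMMA:ModelExtHeap} and~\ref{LEMMA:ModelFootpr} specialised to this situation, while the paper factors that work out once and then composes. Both are sound; the paper's version is shorter and more modular, yours is more elementary and shows explicitly how the footprint restriction descends through the recursive clauses of $\models$.
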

\begin{proof}
Assume  $\mdl{v}{h}{\typesvg}{w}$. Lemma \ref{LEMMA:ModelFootpr} states that
this implies $\mdl{v}{\restrict{h}{\Reg{h}{v}}}{\typesvg}{w}$.
Assuming $h|_{\Reg{h}{v}}=h'|_{\Reg{h}{v}}$ we get
$\mdl{v}{\restrict{h'}{\Reg{h}{v}}}{\typesvg}{w}$. Since
$\dom{\restrict{h'}{\Reg{h}{v}}}=\dom{\restrict{h}{\Reg{h}{v}}}=\Reg{h}{v}$
we have
$\restrict{h'}{\dom{\restrict{h'}{\Reg{h}{v}}}} =
\restrict{h'}{\Reg{h}{v}}$ and
we may apply lemma \ref{LEMMA:ModelExtHeap}, which
gives $\mdl{v}{h'}{\typesvg}{w}$.
\end{proof}

\begin{lemma}[Extending a store preserves the validity of the store]\ \\
\label{LEMMA:ChangeStore} Given a ground context $\Gamma^\bullet$, store $s$, heap $h$, value $v$, a set of variables
$\mathit{vars}$ and a variable $x \not\in \mathit{vars}$, s.t.
$x \not\in \dom{s}$,
one has
 $$\soundstore{\mathit{vars}}{\Gamma^\bullet}{s[x:=v]}{h} \Longleftrightarrow \soundstore{\mathit{vars}}{\Gamma^\bullet}{s}{h}$$
\end{lemma}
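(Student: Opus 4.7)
The plan is to unfold the definition of $\soundstore$ and observe that the statement reduces immediately to the fact that $s$ and $s[x:=v]$ agree on every variable in $\mathit{vars}$. Recall that
$$\soundstore{\mathit{vars}}{\Gamma^\bullet}{s}{h} \;=\; \forall_{z \in \mathit{vars}} \bigl[\,\soundval{s(z)}{\Gamma^\bullet(z)}{h}\,\bigr],$$
so the quantification ranges only over $z \in \mathit{vars}$.

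First I would note that the hypothesis $x \notin \mathit{vars}$ implies that for every $z \in \mathit{vars}$ we have $z \neq x$, and hence by the definition of store update $s[x:=v](z) = s(z)$. Substituting this equality pointwise, the universally quantified formulas
$$\forall_{z \in \mathit{vars}} \bigl[\,\soundval{s[x:=v](z)}{\Gamma^\bullet(z)}{h}\,\bigr] \quad\text{and}\quad \forall_{z \in \mathit{vars}} \bigl[\,\soundval{s(z)}{\Gamma^\bullet(z)}{h}\,\bigr]$$
are literally the same predicate, which yields both directions of the ``$\Longleftrightarrow$'' at once.

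There is no real obstacle here: the lemma is a pure definitional unfolding, and the extra hypothesis $x \notin \dom{s}$ is not needed for the proof (it merely reflects how the lemma will be invoked, namely when binding a fresh variable produced by pattern matching or a \textsf{let}). Note also that the heap $h$ plays no active role; it is simply carried along, because $\soundval{\cdot}{\cdot}{\cdot}$ is evaluated on the same heap on both sides of the equivalence.
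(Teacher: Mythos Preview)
Your proof is correct and matches the paper's approach exactly: the paper simply states that the lemma follows from the definition of $\soundstore{}{}{}{}$, and your unfolding is precisely that argument spelled out. Your observation that the hypothesis $x \notin \dom{s}$ is not actually needed is also accurate.
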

\begin{proof}
The lemma follows from the definition of $\soundstore{}{}{}{}$.
\end{proof}

\begin{lemma}[Weakening for valid stores]\ \\
\label{LEMMA:SubsetFV}Given a set of variables $\mathit{vars}_1$,
ground context $\Gamma^\bullet$, stack $s$, and heap $h$, for any set of variables $\mathit{vars}_2$ such that such that $\mathit{vars}_2 \subseteq \mathit{vars}_1$ one has
$$\soundstore{\mathit{vars}_1}{\Gamma^\bullet}{s}{h} \implies \soundstore{\mathit{vars}_2}{\Gamma^\bullet}{s}{h}$$
\end{lemma}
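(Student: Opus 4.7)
The plan is to observe that this lemma is an immediate consequence of the definition of $\soundstore{}{}{}{}$ as a universal quantification over the set of variables. Recall from earlier in the paper that
$$\soundstore{\mathit{vars}}{\Gamma^{\,\bullet}}{s}{h} \;=\; \forall_{z \in \mathit{vars}}\,[\;\soundval{s(z)}{\Gamma^{\,\bullet}(z)}{h}\;].$$
Since the statement quantifies over a set and the predicate $\soundval{s(z)}{\Gamma^{\,\bullet}(z)}{h}$ does not depend on $\mathit{vars}_1$ itself, restricting the quantifier to a subset is sound.

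Concretely, I would simply unfold both instances of $\soundstore{}{}{}{}$ and argue by universal instantiation. Assume $\soundstore{\mathit{vars}_1}{\Gamma^{\,\bullet}}{s}{h}$, i.e.\ for every $z \in \mathit{vars}_1$ we have $\soundval{s(z)}{\Gamma^{\,\bullet}(z)}{h}$. Fix an arbitrary $z \in \mathit{vars}_2$. From $\mathit{vars}_2 \subseteq \mathit{vars}_1$ we get $z \in \mathit{vars}_1$, hence $\soundval{s(z)}{\Gamma^{\,\bullet}(z)}{h}$ holds. Since $z$ was arbitrary, $\soundstore{\mathit{vars}_2}{\Gamma^{\,\bullet}}{s}{h}$ follows.

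There is no real obstacle here: no induction is required, and the lemma does not even touch the structure of $\Gamma^{\bullet}$, $s$, or $h$. It is a purely set-theoretic weakening step, included as a named lemma only because it is invoked several times in the soundness proof (e.g.\ in the \textsc{OSIfTrue}, \textsc{OSLet}, and \textsc{OSMatch-Nil} cases) to pass from the free variables of a compound expression to those of a subexpression.
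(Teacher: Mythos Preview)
Your proof is correct and matches the paper's approach exactly: the paper simply states that the lemma follows from the definition of $\soundstore{}{}{}{}$, and you have merely spelled out this one-line observation in detail.
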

\begin{proof}
The lemma follows from the definition of $\soundstore{}{}{}{}$.
\end{proof}

\begin{lemma}[Validity for the disjoint union of sets of variables]
\label{LEMMA:UnionValid}
For any  store $s$ and a ground context $\Gamma^\bullet$  one has
$$\soundstore{\mathit{vars}_1 \cup \mathit{vars}_2}{\Gamma^\bullet}{s}{h}
\Longleftrightarrow \soundstore{\mathit{vars}_1}{\Gamma^\bullet}{s}{h} \;\land\;
\soundstore{\mathit{vars}_2}{\Gamma^\bullet}{s}{h}$$
\end{lemma}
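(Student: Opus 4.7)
The plan is to unfold the definition of \soundstore{}{}{}{} on both sides and reduce the equivalence to an elementary logical fact about universal quantification over a union. Recall that \soundstore{\mathit{vars}}{\Gamma^\bullet}{s}{h} is by definition $\forall z \in \mathit{vars}.\,\soundval{s(z)}{\Gamma^\bullet(z)}{h}$. Writing $P(z)$ for the predicate $\soundval{s(z)}{\Gamma^\bullet(z)}{h}$, the statement to prove is simply
$$(\forall z \in \mathit{vars}_1 \cup \mathit{vars}_2.\,P(z)) \;\Longleftrightarrow\; (\forall z \in \mathit{vars}_1.\,P(z)) \;\land\; (\forall z \in \mathit{vars}_2.\,P(z)).$$

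For the forward direction, I would observe that $\mathit{vars}_1 \subseteq \mathit{vars}_1 \cup \mathit{vars}_2$ and $\mathit{vars}_2 \subseteq \mathit{vars}_1 \cup \mathit{vars}_2$; hence both conjuncts on the right follow immediately by instantiating the hypothesis, which is just two applications of lemma \ref{LEMMA:SubsetFV}. For the backward direction, pick an arbitrary $z \in \mathit{vars}_1 \cup \mathit{vars}_2$; by the definition of set union, $z \in \mathit{vars}_1$ or $z \in \mathit{vars}_2$, and in either case the corresponding conjunct of the hypothesis yields $P(z)$. This establishes the universally quantified conclusion.

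There is no real obstacle here: the lemma is a syntactic rearrangement of a universal quantifier, and the proof goes through by unfolding the definition of \soundstore{}{}{}{} and reasoning by set membership. As a minor remark, I would note that although the lemma is titled ``disjoint union'', the argument never uses disjointness of $\mathit{vars}_1$ and $\mathit{vars}_2$; the equivalence holds for an arbitrary union, since $P(z)$ depends only on $z$ (and on the fixed parameters $s$, $h$, $\Gamma^\bullet$) and not on which set $z$ is drawn from.
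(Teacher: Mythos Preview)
Your proof is correct and takes essentially the same approach as the paper, which simply states that the lemma follows immediately from the definition of a valid store; you have merely made that immediate step explicit by unfolding the universal quantifier over the union. Your observation that disjointness is not actually used is accurate and worth noting.
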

\noindent
\begin{proof}
The lemma follows immediately from the definition of a valid store.
\end{proof} 
\vskip-30 pt

\end{document}